\title{Rank-Based Inference Over Web Databases}
\author{Md Farhadur Rahman$^\ddag$, Weimo Liu$^\dag$, Saravanan Thirumuruganathan$^\ddag$, Nan Zhang$^\dag$, Gautam Das$^\ddag$\\
\affaddr{The George Washington University$^\dag$, University of Texas at Arlington$^\ddag$}\\
}
\def\@copyrightspace{\relax} 
\def\@maketitle{\newpage
 \null
 \setbox\@acmtitlebox\vbox{%
\baselineskip 20pt
\vskip 2em                   
   \begin{center}
    {\ttlfnt \@title\par}       
    \vskip 1.5em                
{\subttlfnt \the\subtitletext\par}\vskip 1.25em
    {\baselineskip 16pt\aufnt   
     \lineskip .5em             
     \begin{tabular}[t]{c}\@author
     \end{tabular}\par}
    \vskip 1.5em               
   \end{center}}
 \dimen0=\ht\@acmtitlebox
 \unvbox\@acmtitlebox
 \ifdim\dimen0<0.0pt\relax\vskip-\dimen0\fi}
\begin{document}
\maketitle

\begin{abstract}
In recent years, there has been much research in the adoption of Ranked Retrieval model (in addition to the Boolean retrieval model) in structured databases, especially those in a client-server environment (e.g., web databases). With this model, a search query returns top-$k$ tuples according to not just exact matches of selection conditions, but a suitable ranking function. While much research has gone into the design of ranking functions and the efficient processing of top-$k$ queries, this paper studies a novel problem on the {\em privacy implications} of database ranking.

The motivation is a novel yet serious privacy leakage we found on real-world web databases which is caused by the ranking function design. Many such databases feature private attributes - e.g., a social network allows users to specify certain attributes as only visible to him/herself, but not to others. While these websites generally respect the privacy settings by not directly displaying private attribute values in search query answers, many of them nevertheless take into account such private attributes in the ranking function design. The conventional belief might be that tuple ranks alone are not enough to reveal the private attribute values. Our investigation, however, shows that this is not the case in reality.

To address the problem, we introduce a taxonomy of the problem space with two dimensions, (1) the type of query interface and (2) the capability of adversaries. For each subspace, we develop a novel technique which either guarantees the successful inference of private attributes, or does so for a significant portion of real-world tuples. We demonstrate the effectiveness and efficiency of our techniques through theoretical analysis, extensive experiments over real-world datasets, as well as successful online attacks over websites with tens to hundreds of millions of users - e.g., Amazon Goodreads and Renren.com.
\end{abstract}

\section{Introduction} \label{sec:intro}

\subsection{Motivation}
While traditional structured databases generally support the Boolean Retrieval model (i.e., return all tuples that exactly match the search query selection condition), in recent years there has been much research into exploring the applicability of an alternate Ranked Retrieval model (e.g., a $k$NN interface that returns top-$k$  tuples according to a suitable ranking function).
The ranked retrieval model has become an important component of many databases, especially in a client-server environment (e.g., web databases, where a client specifies and sends queries via a web interface to a backend database).
Prior research has primarily focused on the effective design of ranking functions and the efficient processing of top-$k$ queries for a given ranking function 
(e.g., \cite{chaudhuri2003automated,ilyas2008survey,chaudhuri2004probabilistic}).

However, in this paper we investigate a novel problem on the {\em privacy implications} of database ranking, which has not been studied before. We show how privacy leakage (through the top-$k$ interface) can be caused by a seemingly innocent design of the ranking function in such ranked retrieval models.

To understand how the privacy leakage occurs, note that many databases in a client-server environment feature both public and {\em private} attributes. For example, social networking websites often allow users to specify privacy settings that hide certain attributes from the public's view, e.g., profile demographics such as race, gender, income; location; past posts, etc.  These websites honor the privacy settings by omitting the private attributes from being displayed in the returned query answers. Thus, the results include a ranked list of $k$ tuples, but with only the public attributes displayed, and the private attributes hidden.

The problem here, however, is that many websites indeed include these private attributes as {\em input} to the ranking function. The purpose of doing so is, understandably, to make ranking more effective - 
e.g., the friend-search feature in a social network would preferably return users that have similar demographics or behavior patterns (e.g., posting with similar frequencies) as the user who executes the search,  as common-sense indicates
that they are more likely to be interested in each other.
From the privacy perspective, this design might look harmless as well - after all, while a ranking function might take as input a large number of attributes, its output is merely the (relative) rank of a tuple among returned results - not even the actual ranking score! Naturally, the traditional belief here is that it is impossible to infer private attribute values from just the ranking of a returned tuple.

In our investigation of real-world client-server databases (including popular web databases), we found this traditional belief to be {\em wrong}. Specifically, in this paper, we develop a novel technique which, by asking a carefully constructed sequence of top-$k$ queries and observing the corresponding change of tuple ranks in the query answers, may successfully infer the value of private attributes.

Before introducing our technical results, we would like to first illustrate the real-world impact of this privacy leakage by briefly demonstrating a very simple attack one can deploy using this technique on Renren.com, the equivalent of Facebook in China which has hundreds of millions of users. We chose this website as an example not only because of its large user base, but because it supports extensive privacy settings - allowing a user to specify as private any subset of profile attributes such as hometown, work affiliation, university attended, etc. It also respects these privacy settings in the display of search results - e.g., if a user specifies hometown as private and ``only visible to friends'', then the user's hometown information will be hidden from all search and/or recommendation results unless the query is issued by a friend of the user.

Nevertheless, we also found that when ranking users in search or recommendation results, the ranking function used by Renren.com takes into account {\em all} attributes of a user's profile, regardless of whether a user has specified it to be private and/or who is issuing the query. For example, Figure~\ref{fig:renrenBeforeAndAfter}a shows the screenshot\footnote{Note that, since Renren.com does not have an English version, this screenshot is taken in Chrome with the automated webpage translation feature of Google Translations enabled.} of the ranked list of tuples (i.e., users) returned for a friend-search query issued by a user \textsc{Lionel} with hometown = Beijing, China and no other profile attribute specified. The query is formed using the only public attribute of our victim user \textsc{Target} (with a red target icon in the screenshot), name = Jia Ming. 
Since \textsc{Target} sets his hometown to be a private attribute ``only visible to friends'' and \textsc{Lionel} is not a friend of \textsc{Target}, the hometown of \textsc{Target} is hidden from display in the query answer.
Figure~\ref{fig:renrenBeforeAndAfter}b shows the answer to the exact same query after \textsc{Lionel} changes his hometown to Shanghai, China. The rank of \textsc{Target} now moves up from No.~3 to No.~1 in the new answer - and indeed ranks even higher than a few other users with the same name from Shanghai and studying in Fudan university (in Shanghai). The change of rank indicates a strong likelihood of \textsc{Target} having hometown = Shanghai (even though it does not form a proof). In this paper, we shall show how one can indeed prove that \textsc{Target} must come from Shanghai using just a few other query answers.


\begin{figure}[ht]
\centering
\includegraphics[scale=0.4]{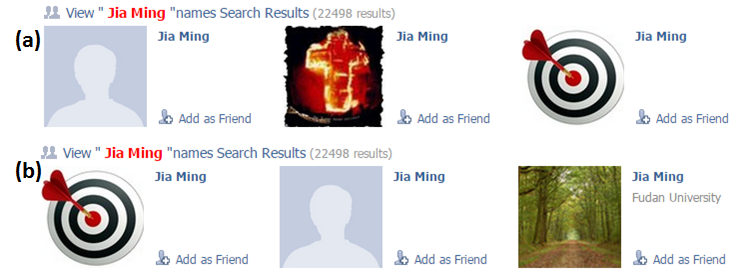}
\caption{Demonstration of an attack over Renren}
\label{fig:renrenBeforeAndAfter}
\end{figure}


\subsection{Novel Problem: Rank-Based Inference}
The above motivating example led us to identify an important and novel problem of {\em ranked-based inference of private attributes}.  From a conceptual standpoint, this problem is interesting as, to the best of our knowledge, privacy compromise from {\em tuple ranks} has not been studied before. From a practical standpoint, this problem is important as many client-server databases, especially web databases that attract large amounts of user contributions, commonly offer top-$k$ query interfaces yet contain sensitive data (e.g., profiles, demographics) that users would like to keep private. 

We formalize the problem as follows.
Consider a database $D$ with $n$ tuples and $m + m^\prime$ attributes, $m$ of which $A_1, \ldots, A_m$ are public while the other $m^\prime$, $B_1, \ldots, B_m$, are private.
The database allows top-$k$ queries where $k$ is a small number ($k \ll n$). To specify a query $q$, one assigns a predicate on each of the $m + m^\prime$ attributes. The predicate can be point\footnote{For the purpose of this paper, we consider all attributes to be discrete, which can be categorical or ordinal, and assume the proper discretization of numeric attributes.} (i.e., $A_i = v$) or range (e.g., $B_i \in \{v_1, v_2\}$ or $*$, i.e., the entire domain).

Given a top-$k$ query $q$, the database computes a predetermined ranking function $s(t|q)$ for each tuple $t$ in the database, and returns the $k$ tuples with the smallest $s(t|q)$. Of course, only the $m$ public attributes are displayed on the return interface - not the private attributes or the ranking score. In most websites, the ranking function is a closely guarded secret - so we assume the adversary has no knowledge of the ranking function other than two very basic properties, {\em monotonicity} and {\em additivity}, which we shall define in \S\ref{sec:pre} and demonstrate that they hold for almost all reasonable ranking functions used in the real-world.

The objective of an adversary is to compromise the privacy of a pre-determined victim tuple $v$. Of course, the adversary can readily acquire the public attributes of $v$. Nonetheless, it does not know the ranking function being used (and of course no knowledge whatsoever of the tuples' ranking scores). Thus, the technical challenge for the adversary is to unveil the private attribute values, e.g., $v[B_1]$, by issuing a small number of queries through the web interface and observing only the public attribute values of the returned tuples and the order in which they are returned. Note that an important goal for the adversary is to keep the number of queries as small as possible, because almost all websites enforce a limit on the number of queries one can issue through the web interface for a given time period (e.g., from one IP address or one user account each day), in order to prevent overburdening its backend database or to thwart third-party crawling of its contents.


To the best of our knowledge, the above problem of inferencing sensitive data from the {\em ranking} of a tuple is very novel. While top-$k$ querying has been extensively studied by the database community~\cite{ilyas2008survey,fagin2003optimal,chaudhuri2004probabilistic}, much of the efforts were focused on (1) developing techniques to answer such queries efficiently~\cite{ilyas2008survey,hristidis2001prefer,li2009unified,kiessling2002foundations}, and (2) designing proper distance/ranking functions for various applications~\cite{bruno2002top,motro1988vague,rui1997content}.  There have been prior work on data privacy in the general area of {\em query inferencing}\cite{farkas2002inference,domingo2008survey,chin1986security}, but most focus was on learning individual values from aggregates such as SUM, MIN, MAX, etc. 
We discuss related work in more detail in \S\ref{sec:relWork}.

\subsection{Overview of Technical Results}
As one of our important contributions, we introduce a comprehensive taxonomy of the problem space according to two dimensions: (1) the type of query interfaces widely used in practice and (2) the capability of adversaries. Then, for each subspace of the problem, we develop a novel technique which either guarantees the successful inference of private attributes, or (when such an inference is provably infeasible in the worst-case scenario) accomplishes the attack for a significant portion of real-world tuples.

Consider the first dimension. We distinguish between interfaces which only support ``point queries'' (i.e., a single value must be specified for each attribute in the query), and those that also support ``IN queries'' (i.e., where a subset/range of values can be specified for an attribute). For the second dimension, we distinguish between two types of adversaries: (1) those who are ``query-only'' (Q-only adversaries) - i.e., they are {\em passive} adversaries who only issue queries and observe query answers, but never tamper with (e.g., insert fake tuples into) the database; and (2) adversaries who ``query-and-insert'' (Q\&I-adversaries), i.e., they only issue queries but also insert fake tuples into the database (e.g., by registering for fictitious user accounts on a social media website). As we shall further elaborate in \S\ref{sec:spa}, while many web databases have no restriction on the registration of new accounts, others makes it difficult for users to create fictitious accounts - e.g., Catch22Dating~\cite{catch22Rules}, a vulnerable website we shall study in the experiments, manually authenticates the real-world identity of each new account; while the aforementioned Renren.com also manually checks and verifies all user name changes. In these cases, most adversaries would be Q-only, while those who have adequate resources to acquire multiple real-world identities can become Q\&I.

We have carefully investigated the four problem subspaces arising out of this taxonomy, and developed four novel attacks: Q\&I-Point, Q-Point, Q\&I-IN,  and Q-IN. The fundamental ideas behind these attacks include two critical reductions: One reduces the problem of compromising a private attribute to finding so-called {\em differential queries} (defined in \S\ref{sec:eqp}) which exclude all but one values in the domain. The second further reduces the problem to just finding a query which returns the victim tuple - nevertheless, this reduction holds only for Q\&I-adversaries.

\begin{table}[h]
\vspace{-3mm}
\caption{Feasibility, Worst- and Practical Query Cost}
\begin{tabular}{lcccc}
  \hline
   & Q\&I-Point & Q-Point & Q\&I-IN & Q-IN\\
  \hline
  Feasibility & Yes & Maybe & Yes & Maybe\\
  Worst-case & $\prod^{m^\prime}_{i=1}|V^\mathrm{B}_i|$ & N/A & $\prod^{m^\prime}_{i=1}|V^\mathrm{B}_i|$ & N/A\\
  In Practice & High & Highest & Lowest & Low\\
  \hline
\end{tabular}
\small{Note: $|V^\mathrm{B}_i|$ is the domain size for private attribute $B_i$.}
\end{table}

The differences on the applicability of these reductions lead to fundamentally different feasibilities of the attack, as illustrated in Table~1. Specifically, we find that while Q\&I adversaries are always able to accomplish the attack, there are cases where Q-only ones will fail. In terms of query cost, while the worst-case cost for even Q\&I adversaries can be exponential, the query cost in practice is very reasonable - and can be significantly reduced when IN queries are available, even though IN has no impact on the (theoretical) worst-case query cost.

In summary, we make the following contributions in this paper:

\begin{itemize}[noitemsep,topsep=1pt,parsep=1pt,partopsep=0pt]
\item We have identified a novel and important problem of rank-based inferencing over web databases.

\item We introduce a comprehensive taxonomy of the problem space, and identify four important subspaces based on varying database  interface limitations and adversarial capabilities.

\item For each problem subspace, we developed nontrivial adversaries, and carried out a rigorous theoretical analysis of their performance. Our results show that in almost all cases, the adversaries can launch efficient and
successful attacks.

\item We performed extensive experiments over real-world datasets, with results corroborating well with our theoretical findings. We also conducted successful online experiments over real-world websites including the aforementioned social network Renren.com as well as other types of web databases such as Amazon Goodreads and Catch22Dating.
\end{itemize}

\section{Preliminaries} \label{sec:pre}


\subsection{Model of Web Databases}

As discussed in the introduction, many web databases store both public and private attributes of a user.  Consider an $n$-tuple (i.e., $n$-user) database $D$ with a total of $m + m^\prime$ attributes, including $m$ public ones $A_1, \ldots, A_m$ and $m^\prime$ private ones $B_1, \ldots, B_{m^\prime}$.  Let $V^\mathrm{A}_i$ and $V^\mathrm{B}_j$ be the attribute domain (i.e., set of all attribute values) for $A_i$ and and $B_j$, respectively. For the purpose of this paper, we consider $V^\mathrm{A}_i$ and $V^\mathrm{B}_j$ to be discrete and publicly known, and leave studies of numeric/infinite/unknown domains to future research.

We use $t[A_i]$ (resp.~$t[B_j]$) to denote the value of a tuple $t \in D$ on attributes $A_i$ (resp.~$B_j$). For the purpose of this paper, we assume there is no duplicate tuple in the database (before an adversary makes any modification to the database) - i.e., every {\em bona fide} tuple has a unique value combination for the $m + m^\prime$ attributes. While we assume that $D$ does not change during the course of an attack, we include discussions in \S\ref{sec:rfo} to address the scenario where this assumption is violated.

Recall from the introduction that the database allows top-$k$ queries where $k$ is a small number such as 10 or 50. Given a {\em supported query} $q$ defined below, the database computes the {\em ranking function} $s(t | q)$ for each tuple $t \in D$, and selects/returns $k$ tuples in the {\em ascending} order of $s(t | q)$ (i.e., only the $k$ tuples with minimum $s(t|q)$ will be returned). Of course, only the public attribute values, i.e., $t[A_1], \ldots, t[A_m]$, will be returned for each of the $k$ tuples. Of course, since we allow duplicates on public attribute values - i.e., multiple tuples might share the same vaule combination on $A_1, \ldots, A_m$ - there must be a way to distinguish different returned tuples with the same public-attribute value-combination. For this purpose, we assume each tuple to be returned alongside a unique identifier (e.g., user ID) - and the adversary knows the unique identifier of the victim tuple as prior knowledge. It is important to note that the ranking score $s(t|q)$ is {\em not} returned - in addition, the design of $s(\cdot|\cdot)$ itself is a secret kept by the database owner.

\vspace{1mm}
\noindent{\bf Supported Queries:} For the purpose of this paper, we consider ranking functions/queries that take into account both public and private attribute information. In other words, the web database supports queries which specify values/conditions on some or all of the $m + m^\prime$ (public and private) attributes. Consider friend recommendation on a social media website as an example - when the website uses private information of a user (say \texttt{education}) while generating the recommendations, it is indeed answering a query that contains a predicate on private attribute \texttt{education} - with the ranking function likely taking into account whether a tuple's value on \texttt{education} is equal to that specified in the query.

In this paper, we consider two types of predicates that can be specified on an attribute: {\em point} and {\em IN}. Let the predicate specified in a query $q$ for attribute $A_i$ (resp.~$B_i$) be $q[A_i]$ (resp.~$q[B_i]$). A point predicate assigns a single value in the domain, i.e., $q[A_i] \in V^\mathrm{A}_i$, while an IN predicate assigns a subset of values, i.e., $q[A_i] \subseteq V^\mathrm{A}_i$. Consider a dating website as an example. While gender is often specified as a point predicate (i.e., male or female), interests and age can be considered IN ones (i.e., find users who most closely match the interest set \{reading, travel, cycling, cooking\} or age range $[25, 30]$). A special example of IN predicate is $q[A_i] = V^\mathrm{A}_i$ - i.e., $q[A_i] = *$ - indicating ``do-not-care'' on an attribute.

\vspace{1mm}
\noindent{\bf Practical Constraints:} Most, if not all, web databases enforce practical constraints on how one might interact with the web interface. The two most important constraints here are {\em query-rate limitation} and {\em tuple insertion constraint}.

Most web databases enforce certain query-rate limits, i.e., limits on the number of queries one can issue (e.g., from an IP address or a user account) per time period (e.g., each day), in order to prevent overburdening of the backend database and/or third-party crawling of its contents. Hence an adversary must aim to minimize the query cost of a rank-based inference attack, as otherwise it would have to acquire more resources (e.g., more IP addresses, registering more accounts) in order to issue all queries required by the attack.

Tuple insertion constraint, on the other hand, refers to ones ability to {\em insert} tuples into the database. Some web databases, including many online social networks, do not enforce this constraint - i.e., one can freely insert new tuples (i.e., user accounts) to the database by registering for new accounts (e.g., using a new email address). Nonetheless, there are also others that require users' real identities and use offline authentication to check them. For example, catch22dating, a popular online dating website used in our real-world experiments, requires each user to have an authenticated identity as student of selected universities. For these databases, inserting new/fake tuples becomes extremely difficult, if not impossible. We say that the web database enforces a tuple insertion constraint which prevents an adversary from inserting arbitrary tuples.


\subsection{Properties of Ranking Function}

There has been significant research in database ranking (e.g., \cite{ilyas2008survey,geng2008query,li2009unified}) which studies the design of ranking function $s(t|q)$, including in cases where the query has IN predicates (e.g., \cite{ho1997range,ilyas2008survey}). While this paper aims to study {\em generic} rank-based inferences that work for a broad class of ranking functions, it is important to note that {\em no} attack will work without assuming certain properties of the ranking function. To understand why, consider a simple example where $s(t|q)$ is generated uniformly at random from $[1, n]$. Since the rank of a tuple has nothing to do with the tuple's (private) attribute values, no adversary can compromise any private information from the returned ranks. Thus, it is the objective of this subsection to define a minimum set of conditions that are satisfied by most if not all ranking functions used in practice. Specifically, we consider {\em monotonicity} and {\em additivity}, respectively as follows.

\vspace{1mm}
\noindent{\bf Monotonicity Condition:} Intuitively, the monotonicity condition simply states that, for a given query, the relative rank between two tuples which differ only on one attribute should be determined by that attribute alone. Formally, for a point-query interface, if two tuples $t$ and $t^\prime$ differ only on $A_i$ and $t[A_i] = q[A_i]$, then $t$ should have a smaller distance to $q$ than $t^\prime$. More generally, we have the following definition. Note that in this definition, we consider $q[A_i]$ (resp.~$q[B_j]$) to be a set (in the case of point-query, containing just a single value) without introducing ambiguity.

\vspace{1mm}\hrule\vspace{.5mm}
\noindent{\em Monotonicity:} $\forall q$, $t \in D$, and $i \in [1, m]$ (resp.~$j \in [1, m^\prime]$), if $t$ and $t^\prime$ share the same value on all attributes except $A_i$ (resp.~$B_j$) and $t[A_i] \in q[A_i]$ while $t^\prime[A_i] \not \in q[A_i]$ (resp.~$t[B_j] \in q[B_j]$, $t^\prime[B_j] \not \in q[B_j]$), there must be $s(t|q) < s(t^\prime|q)$.
\vspace{1mm}\hrule\vspace{.5mm}

\vspace{2mm}
\noindent{\bf Additivity Condition:} Intuitively, the additivity condition states that, for two tuples $t$ and $t^\prime$, if $t$ is already ranked higher than $t^\prime$ in query $q$, then further changing the predicate of $q$ on $A_i$ (resp.~$B_j$) to exactly match $t$ - i.e., making $q[A_i] = t[A_i]$ (resp.~$q[B_j] = t[B_j]$) - should not change the relative rank between the two tuples. More formally, we have the following definition:

\vspace{1mm}\hrule\vspace{.5mm}
\noindent{\em Additivity:} $\forall q$ and $t, t^\prime \in D$, if $s(t|q) < s(t^\prime|q)$, then there must be $s(t|q^\prime) < s(t^\prime|q^\prime)$, where $q^\prime$ is the same as $q$ on all but one attribute $A_i$ (resp.~$B_j$), on which $q[A_i] = t[A_i]$ (resp.~$q[B_j] = t[B_j]$).
\vspace{1mm}\hrule\vspace{.5mm}

\vspace{1mm}
One can see that both monotonicity and additivity are common-sense conditions that should be reasonably expected of a ranking function. Our studies of real-world web databases (in \S\ref{sec:exp}) verified this observation, as all websites considered satisfy both conditions.

\section{Problem Space} \label{sec:spa}

In this section, we define the rank-based inference problem studied in the paper. Specifically, we start with defining the objectives of an adversary. Then, we partition the entire problem space into four quadrants along two dimensions: the type of queries supported, and the type of operations an adversary can perform. 

\subsection{Adversary Model} \label{sec:adm}

The objective of an adversary is two-fold: {\em compromising privacy} and {\em minimizing query cost}. Privacy-wise, an adversary aims to compromise private attributes of a victim tuple $v$. Without loss of generality, we assume that the adversary aims to compromise the value of $v[B_1]$ based on prior knowledge of all public attributes of $v$, i.e., $v[A_1], \ldots, v[A_m]$. In \S\ref{sec:pq}, we shall address cases where an adversary aims to compromise all private attributes of $v$.

To ensure the versatility of our algorithms, we make a conservative assumption that an adversary has {\em no} prior knowledge of the ranking function other than the fact that it satisfies the monotonicity and additivity conditions defined above. Clearly, all algorithms in the paper still work if an adversary does know the ranking function. While it is possible that prior knowledge of certain ranking functions can enable more efficient attacks than those in the paper, we leave such ranking-function-specific studies to future work.

Given the query-rate limitation discussed in \S\ref{sec:pre}, an important goal of the adversary is to minimize the query cost for the attack, as otherwise the website-enforced limit on the number of queries from each user (e.g., IP-address) may stop the attack from being completed. To this end, it is important to note that our key efficiency measure here is the number of requests issued to the web database (hereafter referring to as the {\em query cost}, including both search queries and requests to insert tuples, if the database does not enforce the aforementioned tuple insertion constraint) - while other measures such as local (CPU or I/O) processing overhead are secondary.

\subsection{Two Dimensions}

The first dimension we use for partitioning the problem space is the type of queries supported. There are two different cases: (1) Point-Query Interface which requires a point predicate defined in \S\ref{sec:pre} to be specified for {\em every} attribute. An example here is the friend recommendation offered by many social media websites - each user has to complete his/her own profile to enable the feature, essentially requiring the user to specify a point predicate on every public and private attribute. (2) IN-Query Interface which supports {\em IN queries} over all attributes. Clearly, here a user can choose ``do not care'' for an attribute by assigning its entire value domain to the IN condition. Since point queries are special cases of IN, all queries supported by the point-query interface are also supported here.

The second dimension for partitioning the problem space is the adversary power. Specifically, we consider the following two cases:
\begin{itemize}[noitemsep,topsep=1pt,parsep=1pt,partopsep=0pt]
\item {\em Query-and-Insert (Q\&I) Adversary} can not only issue queries but also {\em insert} tuples to the database. It can also update or delete any tuple it inserted. These adversaries exist for websites which do not enforce the tuple insertion constraint.
\item {\em Query-only (Q-only) Adversary} can query the web database but cannot change it. This is the case when the website enforces the tuple insertion constraint (see \S\ref{sec:pre}).
\end{itemize}
One can see from the definitions that Q\&I adversaries are stronger - i.e., any attack launched by a Q-only adversary can also be launched by a Q\&I-one, while the opposite is not true. We shall show later in the paper that the ability to {\em insert} leads to significant differences on the outcome of a rank-based inference attack. Specifically, while a Q\&I adversary can {\em always} accomplish the attack even in the worst-case scenario, the same is not true for Q-only adversaries. 

\subsection{Problem Definition}

Given the two dimensions, we partition the problem space into four quadrants: (1) point query interface with Q\&I adversaries, (2) point query with Q-only, (3) IN with Q\&I, and (4) IN with Q-only.

\vspace{1mm}
\noindent{\em Problem Definition (Rank-Based Inference): Given a database $D$ and a victim tuple $v \in D$, find the shortest sequence of queries $q_1, \ldots, q_c$ supported by the interface and a corresponding sequence of tuple sets $T_1, \ldots, T_c$, such that
\begin{align}
\delta(q_1(D \cup T_1), q_2(D \cup T_2), \ldots, q_c(D \cup T_c)) = v[B_1].
\end{align}
where $q_i(D \cup T_i)$ is the answer to $q_i$ over the $D \cup T_i$ and $\delta(\cdot)$ is a (deterministic) function for rank-based inferencing. For a Q-only adversary, there must be $T_1 = \cdots = T_c = \emptyset$.}

Naturally, the problem could be extended to infer multiple, if not all, private attributes of victim tuple $v$. In fact, as we shall describe in \S\ref{sec:pq}, our algorithm iteratively learns private attribute values one at a time till $v[B_1]$ is inferred. Extending it to infer all attributes is trivial. To better illustrate our ideas and to significantly simplify the notations, in the theoretical discussions in this paper, we focus on the case where $k = 1$ (note that $k=1$ is actually a conservative worst-case assumption for the attack design), and discuss the straightforward extension to larger $k$ in the experiments section.

\vspace{1mm}
\noindent{\bf Running example of ranking function:} All algorithms developed in this paper work for any ranking function satisfying monotonicity and additivity - so does all complexity and lower bound analysis. Nonetheless, when studying the practical performance of attacks and illustrating how different ranking-function designs affect attack effectiveness, it is necessary to consider certain concrete ranking function designs - for this purpose only, we consider the following linear ranking function as a running example:

\vspace{-2mm}
\begin{small}
\begin{align}
s(t|q) = \sum^m_{i=1} w^\mathrm{A}_i \cdot \rho(q[A_i], t[A_i]) + \sum^{m^\prime}_{i=1} w^\mathrm{B}_i \cdot \rho(q[B_i], t[B_i]), \label{equ:scr}
\end{align}
\end{small}

\noindent where $w^\mathrm{A}_i, w^\mathrm{B}_i \in (0, 1]$ are the {\em ranking weight} for attribute $A_i$ and $B_i$, respectively. The distance measure for each attribute, i.e., $\rho(q[A_i], t[A_i])$, is a variation of the discrete metric: (1) $\rho(q[A_i]$, $t[A_i]) = 0$ if $t[A_i] \in q[A_i]$ (note that for point queries, this means $t[A_i]$ being equal to the single value in $q[A_i]$), and (2) $\rho(q[A_i]$, $t[A_i]) = 1$ if $t[A_i] \not \in q[A_i]$.

Once again, we would like to note that the adversary has {\em no} knowledge of the ranking function design whatsoever (other than its monotonicity and additivity). This linear ranking function based running example merely provides a concrete basis for the analysis of attack performance in practice.

\section{Point Query Interface} \label{sec:pq}

We start by considering a point query interface. Specifically, we shall start with reducing rank-based inference to the problem of finding pairs of {\em differential queries} based on the victim tuple $v$. Then, we discuss the design of Q\&I-Point and Q-Point, our rank-based inference algorithms for Q\&I and Q-only adversaries over a point query interface, respectively.

\subsection{Goal: Finding Differential Queries} \label{sec:eqp}

We start by showing that, for the worst-case scenario of $k = 1$, the problem of compromising the private attribute $B_1$ of victim tuple $v$ can be reduced to finding for each possible value of $B_1$ except $v[B_1]$, i.e., $\forall \theta \in (V^\mathrm{B}_1 \backslash v[B_1])$, a pair of {\em differential queries} $q_\theta$ and $q^\prime_\theta$ which satisfy three properties: (1) they share the same predicate on all attributes but $B_1$, (2) $q^\prime_\theta[B_1] = \theta$ while $q_\theta[B_1] \neq \theta$, and (3) $q_\theta$ returns the victim tuple $v$ while $q^\prime_\theta$ does not - i.e.,
\begin{multline}
\label{eq:q1q2Constraints}
\qquad \qquad \forall t \in D \text{ where } t \neq v, s(t|q_\theta) > s(v|q_\theta), \\
\exists t \in D \text{ with  } t \neq v \text{ such that } s(t|q^\prime_\theta) < s(v|q^\prime_\theta).
\end{multline}

The proof of this reduction is straightforward: Due to the additivity condition, we can infer from (\ref{eq:q1q2Constraints}) that the value of victim tuple $v$ on $B_1$ must {\em not} be the same as $q^\prime_\theta[B_1]$, i.e., $v[B_1] \neq \theta$. Since we found differential queries $q_\theta, q^\prime_\theta$ for all $\theta \in V^\mathrm{B}_1\backslash v[B_1]$, the only remaining possibility is the correct value of $v[B_1]$.

While this reduction is the basis of our following discussions, it is important to note that reduction in the opposite direction does not hold - i.e., an adversary does {\em not} have to find all $|V^\mathrm{B}_1| - 1$ pairs of differential queries in order to compromise $v[B_1]$. To understand why, consider an example where a Q\&I-adversary inserts into the database a dummy tuple $t$ with value 0 on all public and private attributes. Then, upon issuing a query with $q[A_i] = 0$ and $q[B_i] = 0$ on all attributes, the adversary receives $v$ rather than $t$ as the No.~1 result. One can see that the adversary can safely infer $v[B_1] = 0$ without issuing any additional query or identifying the differential queries for any value of $B_1$.

\subsection{Q\&I adversary}

We develop Algorithm Q\&I-Point in this subsection. Specifically, we start with a somewhat surprising finding - for a Q\&I adversary, as long as it has the ability to find a query that returns the victim tuple $v$ for a given database, then it can {\em always} successfully compromise $v[B_1]$ (by finding differential queries for all other values in $V^\mathrm{B}_1$). Then, we present Algorithm Q\&I-Point and analyze its worst- and average-case query costs.

\subsubsection{Reduction to finding a query that returns $v$} \label{sec:rfo}

\vspace{1mm}\noindent{\bf Algorithm Q\&I-Point:} To construct the reduction to finding one query which returns the victim tuple, we start by assuming an oracle FIND-Q which, upon given input of a database D and the victim tuple $v$, returns a query $q$ which returns $v$. We first develop Algorithm Q\&I-Point which calls upon this oracle FIND-Q to compromise $v[B_1]$, and then introduce the design of FIND-Q afterwards. The pseudocode of Q\&I-Point is shown in Algorithm~\ref{alg:qAndIPoint}.


For the ease of understanding, we introduce some simple notations: We represent the domain of every attribute as 0, 1, \ldots, $|V| - 1$, where $|V|$ is the domain size of the attribute. Let there be $v[A_1] = \cdots = v[A_m] = 0$. Without loss of generality, we assume the output of FIND-Q to always have $q[A_1] = \cdots = q[A_m] = 0$. The reason here is simple - if $q$ differs from $v$ on any public attribute, we can always change the attribute to 0 - the new query will still return $v$ due to the additivity condition of the ranking function.

We start by inserting into the database a tuple $t$ that has all attributes equal to 0. Then, we call FIND-Q over the new database to discover $q$ which returns $v$. Note that if FIND-Q fails to do so - i.e., no query over the database returns $v$ - then we already succeed because, due to the no-duplicate assumption, the only scenario for this to happen is when $v = t$. Given the result $q$ of FIND-Q, we note that $q$ must differ from $t$ on at least one private attribute - again, if $q = t$ and yet returns $v$, there must be $v = t$. Without loss of generality, suppose that $q$ differs from $t$ by having value 1 on private attributes $B^\prime_1, \ldots, B^\prime_h$ - i.e., $\forall i \in [1, h]$, $q[B^\prime_i] = 1$.

We now construct $h + 1$ queries $q_0, \ldots, q_h$ as follows: all these $h + 1$ queries share the exact same value (i.e., 0) as $t$ on all attributes but $B^\prime_1, \ldots, B^\prime_h$. For those $h$ attributes, we assign to query $q_i$ ($i \in [0, h]$) $q_i[B^\prime_j] = q[B^\prime_j] = 1$ if $j \leq h - i$ and $q_i[B^\prime_j] = t[B^\prime_j] = 0$ otherwise (i.e., if $j > h - i$). The following table shows an example. Note at the two extremes $q_0 = q$ and $q_h = t$.

\begin{table}[h]
\begin{center}
\begin{tabular}{cccccc}
& $A_1, \ldots, A_m$ & $B^\prime_1$ & $B^\prime_2$ & $B^\prime_3$ & $B_\mathrm{others}$\\
\hline
$q$ & 0 & 1 & 1 & 1 & 0\\
\hdashline
$q_0$ & 0 & 1 & 1 & 1 & 0\\
$q_1$ & 0 & 1 & 1 & 0 & 0\\
$q_2$ & 0 & 1 & 0 & 0 & 0\\
$q_3$ & 0 & 0 & 0 & 0 & 0\\
\hdashline
$t$ & 0 & 0 & 0 & 0 & 0\\
\end{tabular}
\end{center}
\end{table}

There are two important observations from the above query sequence: First, unless $v = t$, queries at the two ends must return different results - specifically, $q_0$ returns $v$ while $q_h$ returns $t$. The only exception here is when $q_h$ also returns $v$ - but this must mean $v = t$ because $q_h$ exactly matches $t$ - leading to an immediate compromise of $v[B_1]$. Second, every pair of adjacent queries in the sequence differ by exactly one attribute - i.e., query $q_i$ and $q_{i+1}$ differ on $B^\prime_{h - i}$. Combining two observations, we know two things: (1) there must exist a pair of adjacent queries $q_i$ and $q_{i+1}$ such that $q_i$ returns $v$ while $q_{i+1}$ does not - because otherwise all $h+1$ queries would return $v$, contradicting Observation 1. (2) this pair of adjacent queries differ on exactly one attribute $B^\prime_{h - i}$. In other words, they serve as a pair of {\em differential queries} for value 0 in the domain of $B^\prime_{h - i}$, and prove $v[B^\prime_{h - i}] \neq 0$. Note that the process of finding this pair of differential queries takes at most $h \leq m^\prime$ queries.

Of course, this may not yet achieve the adversarial goal of compromising $v[B_1]$.  Nonetheless, note that once we know $v[B^\prime_{h - i}] \neq 0$, we can insert into the database a new $t$ which replaces its value on $B^\prime_{h - i}$ with another value (other than 0) in its domain. We can then repeat the exact same process and get one of only two possible outcomes: either (1) we find another pair of differential queries and exclude from consideration a(other) value for one of the private attributes; or (2) an anomaly occurs - either FIND-Q cannot find $q$ or $q^h$ returns $v$ instead of $t$ - meaning $t = v$ and we have compromised $v[B_1]$ already.

One can see that, the worst-case scenario here is for us to repeat the process for $\sum^{m^\prime}_{i=1} (|V^\mathrm{B}_i| - 1)$ times - more repetitions is impossible because we would have already excluded all wrong values for $B_1, \ldots, B_{m^\prime}$. Throughout all repetitions, the number of queries issued by Algorithm Q\&I-Point (excluding those required by FIND-Q) is $O(m^{\prime} \cdot \sum^{m^\prime}_{i=1} |V^\mathrm{B}_i|)$.

\vspace{1mm}\noindent{\bf Practical Implications:} We now discuss the practical implications of Algorithm Q\&I-Point. First, while we shall address the design and theoretical bounds of FIND-Q in detail next, we would like to first point out here that, in practice, FIND-Q is usually a straightforward and efficient process, especially when there are many public attributes. The reason is simple: those public attributes alone are often sufficient to uniquely identify the victim tuple. Since FIND-Q knows $v[A_1], \ldots, v[A_m]$, it largely just needs to avoid hitting the few ``fake'' tuples Algorithm Q\&I-Point inserts (by avoiding their private attribute values) in order to find a query that returns $v$.

The cost of FIND-Q aside, there are three interesting observations we can make regarding Algorithm Q\&I-Point. First, its query cost depends on the SUM (not product) of domain size of private attributes. This works to the attacker's advantage in practice as real-world websites often feature only a few private attributes with small domains\footnote{e.g., in the case of dating website discussed in Section~\ref{subsec:expOnline}, all private attributes are Boolean - e.g.,  ``whether a user is willing to consider matches of a different race.''}. Nonetheless, this also means that large-domain attributes such as ZIP code can be very costly to attack. Intuitively, this is caused by nature of the point query interface - as each query here ``covers'' only one of the many domain values.

The second observation we would like to make is the {\em anytime} nature of the algorithm. While our problem definition focuses on compromising $v[B_1]$, one can see from the design of Q\&I-Point that it indeed learns all private attributes of $v$. Specifically, every iteration (costing at most $m^\prime$ queries) excludes one value from consideration for one of the private attributes. Thus, even if we interrupt the algorithm at anytime (say running out of query allowance by the database), we would still have learned substantial information about many private attributes. This anytime feature makes the algorithm particularly difficult to thwart in practice.

Third, note from the design of Q\&I-Point that all queries it issues (including those by FIND-Q) must have public attribute values equal to those of $v$. This makes the algorithm fairly resilient against changes to the database during the course of an attack - because the only changes that would affect the execution of Q\&I-Point are those that feature tuples with the exact same public-attribute value-combination as $v$ - an extremely unlikely event in practical databases.

Last, while a comprehensive discussion of defense strategies is out of the scope of this paper, one can see from the design of Q\&I-Point a basic idea of thwarting rank-based inference: Note that during the attack, an adversary makes $t$ closer and closer to the victim tuple $v$, in order to ``isolate'' the rank difference to a private attribute and infer its value. Thus, intuitively, the defense should try to detect and/or prevent the insertion of tuples that are too ``close'' to an existing tuple, as such insertions likely signal the Q\&I-Point attack rather than a bona fide new tuple (which is highly unlikely to be almost identical to an existing tuple).

\begin{algorithm}[!htb]
\caption{{\bf Q\&I-Point}}
\begin{algorithmic}[1]
\label{alg:qAndIPoint}
\STATE {\bf Input:} $q, v$ \qquad {\bf Output:} $v[B_1]$
\STATE $H_v = \varnothing$; $t[A_i]=v[A_i] \, \forall i \in [1,m]$; $t[B_j] = 0 \, \forall j \in [1,m^\prime]$
\STATE {\bf while} $v[B_1]$ is not yet inferred {\bf do}
    \STATE \hspace{\algorithmicindent} Insert $t$ into $D$
    \STATE \hspace{\algorithmicindent} {\bf if} $q$ does not return $v$ {\bf then} $q \leftarrow$ FIND-Q$(v, H_v)$
    \STATE \hspace{\algorithmicindent} {\bf if} no such $q$, {\bf then} {\bf return} $t[B_1]$ \qquad // case: $t=v$
    \STATE \hspace{\algorithmicindent} Let $B^{\prime}_{1} \ldots B^{\prime}_{h}$ be the attributes differing between $t$ and $q$
    \STATE \hspace{\algorithmicindent} $i=0$; \qquad $q_i = q$
    \STATE \hspace{\algorithmicindent} {\bf for} {$i=1$ to $h$} {\bf do}
        \STATE \hspace{\algorithmicindent} \hspace{\algorithmicindent} $q_i = q_{i-1}$; \quad $q_i[B^{\prime}_{h-i+1}] = t[B^{\prime}_{h-i+1}]$
        \STATE \hspace{\algorithmicindent} \hspace{\algorithmicindent} {\bf if} { $q_i$ and $q_{i-1}$ return different tuples} {\bf then}
            \STATE \hspace{\algorithmicindent} \hspace{\algorithmicindent} \hspace{\algorithmicindent} 
                     $q = q_{i-1}$; \quad Set $t[B^{\prime}_i]$ to an unexplored value
            \STATE \hspace{\algorithmicindent} \hspace{\algorithmicindent} \hspace{\algorithmicindent} 
                    {\bf break} for loop
\end{algorithmic}
\end{algorithm}

\subsubsection{Query Cost Analysis} \label{sec:qip}

\vspace{1mm}\noindent{\bf Algorithm FIND-Q:} We now describe the algorithm for finding a query $q$ that returns $v$ for a given database $D$. Algorithm~\ref{alg:findQ} shows the pseudocode for FIND-Q. The design is mostly straightforward - we randomly generate and issue a query $q$ with $q[A_i] = v[A_i]$ for all $i \in [1, m]$ and each $q[B_j]$ ($j \in [1, m^\prime]$) drawn i.i.d.~uniformly at random from $V^\mathrm{B}_j$ - and repeat this process until finding $q$ that returns $v$. The only note of caution here is that the random generation is done {\em without replacement}, and {\em with memory} across different executions of FIND-Q. To understand why, note from the design of Algorithm Q\&I-Point that we only insert tuples into the database, and do not tamper with or delete the existing tuple values. Thus, any query which does not return $v$ before cannot return $v$ in the future - justifying the design.

One can see from the design of FIND-Q that it always succeeds. As such, our focus here is to consider its query cost. First, all calls of FIND-Q, altogether, consume a worst-case query cost of $O(\prod^{m^\prime}_{i=1}|V^\mathrm{B}_i|)$. While this seems like an outrageously high cost, we make two interesting notes here: First, the worst-case scenario indeed requires these many queries - as proved by the following lower bound result which shows that the cost cannot be improved beyond a constant factor. Second, the real-world query cost for FIND-Q is likely much smaller, as demonstrated by an average-case example study at the end of this subsection.

\begin{algorithm}[!htb]
\caption{{\bf FIND-Q}}
\begin{algorithmic}[1]
\label{alg:findQ}
\STATE {\bf Input:} Victim $v$, Query history $H_v$ {\bf Output:} $q$ that returns $v$
\STATE Let $\mathcal{Q}$ be set of all possible point queries $q$ with $q[A_i] = v[A_i] \quad \forall i \in [1,m]$
\STATE Find a query $q \in \mathcal{Q} \setminus H_v$ that returns $v$; \quad Update $H_v$ 
\STATE {\bf return} $q$ if it exists else return failure 
\end{algorithmic}
\end{algorithm}

\vspace{1mm}\noindent{\bf Lower Bound on Worst-Case Query Cost:} The following theorem shows that, in the worst-case scenario, no algorithm can accomplish the attack without issuing $\Omega(\prod^{m^\prime}_{i=1}|V^\mathrm{B}_i|)$ queries.
\newtheorem{theorem}{Theorem}
\begin{theorem} \label{thm:wc1}
Given any ranking function and victim tuple $v$, there exists a database $D$ such that no Q\&I-adversary can compromise $v[B_1]$ without issuing $\Omega(\prod^{m^\prime}_{i=1}|V^\mathrm{B}_i|)$ queries.
\end{theorem}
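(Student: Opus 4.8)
The plan is to prove Theorem~\ref{thm:wc1} by an indistinguishability (adversary) argument. Fix the given ranking function $s(\cdot|\cdot)$ and the victim $v$, and relabel domains so that $v[A_i]=0$ for all $i$. Consider the family of databases $\{D_\gamma\}$ indexed by all $N:=\prod^{m^\prime}_{i=1}|V^\mathrm{B}_i|$ value-combinations $\gamma$ over the private attributes $B_1,\ldots,B_{m^\prime}$: the database $D_\gamma$ contains exactly one tuple for \emph{each} private-attribute combination, all with public attributes equal to $0$; the tuple whose private combination is $\gamma$ carries $v$'s (publicly known) identifier, while every other combination is a distinct decoy whose identifier is fixed across the whole family. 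Each $D_\gamma$ is then a legal, duplicate-free database with $v[B_1]=\gamma_1$, so compromising $v[B_1]$ is exactly the task of distinguishing the $D_\gamma$'s.

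The second step is a structural lemma proved purely from monotonicity: for any point query $q$, the bona fide tuple whose private combination equals the point combination that $q$ specifies on $B_1,\ldots,B_{m^\prime}$ strictly out-ranks every other bona fide tuple, hence is the one returned. I would prove this by walking from any other combination to $q$'s combination one private coordinate at a time; each step replaces a non-matching value by a matching one while holding everything else fixed, so monotonicity forces a strict decrease in score, and the public predicates of $q$ never enter the argument (which is why it is irrelevant what the adversary puts in $q$'s public slots). Consequently, in $D_\gamma$ a query with private combination $\sigma$ returns $v$ iff $\gamma=\sigma$ and returns the fixed decoy $e_\sigma$ otherwise. I then argue that insertions are useless: an inserted tuple can out-rank $e_\sigma$ only by having combination $\sigma$ itself, yielding a tie whose resolution is either the (uninformative) inserted tuple or $e_\sigma$; assuming ties break toward the bona fide tuple only strengthens the attacker, so a lower bound in that model suffices. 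The payoff is the key indistinguishability fact: for $\gamma\neq\gamma^\prime$, a deterministic adversary's runs on $D_\gamma$ and on $D_{\gamma^\prime}$ issue the identical sequence of queries up to and including the first query whose private combination lies in $\{\gamma,\gamma^\prime\}$.

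The third step converts this into a counting bound. Let $P_\gamma$ be the set of private combinations queried, in the run on $D_\gamma$, strictly before the first query whose combination equals $\gamma$. If $\gamma^\prime\notin P_\gamma$ and $\gamma\notin P_{\gamma^\prime}$, then the first query with combination in $\{\gamma,\gamma^\prime\}$ is, in the $D_\gamma$-run, the one with combination $\gamma$, and in the $D_{\gamma^\prime}$-run the one with combination $\gamma^\prime$; but the two runs agree through that query, so it is literally the same query with two different combinations, which is impossible. Hence for every pair $\gamma\neq\gamma^\prime$ we have $\gamma^\prime\in P_\gamma$ or $\gamma\in P_{\gamma^\prime}$. (The degenerate case that some $D_\gamma$-run never queries combination $\gamma$ is handled separately: correctness then forces that run to output the same value on $D_\gamma$ and on some $D_{\gamma^\prime}$ with $\gamma^\prime_1\neq\gamma_1$, which is impossible unless $|P_\gamma|\ge N-N/|V^\mathrm{B}_1|=\Omega(N)$, in which case we are already done; here we use $|V^\mathrm{B}_1|\ge 2$, as otherwise there is nothing to compromise.) Summing the pairwise property gives $\sum_\gamma|P_\gamma|\ge\binom{N}{2}$, so some $\gamma^\star$ has $|P_{\gamma^\star}|\ge(N-1)/2$, and the run on $D_{\gamma^\star}$ issues at least $(N+1)/2=\Omega(\prod^{m^\prime}_{i=1}|V^\mathrm{B}_i|)$ queries; since in fact $\sum_\gamma(\text{queries on }D_\gamma)\ge\binom{N}{2}$, a standard Yao-type averaging argument extends the bound to randomized adversaries.

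The main obstacle I anticipate is not the counting but pinning down the interaction model tightly enough that the indistinguishability claim is airtight. I must argue carefully that tuple insertions, deletions of the adversary's own tuples, queries with arbitrary (even mismatched) public predicates, and whatever deterministic tie-breaking rule the interface uses all fail to give the adversary any leverage on this particular construction. The structural lemma does most of the work, since it shows the returned tuple depends only on the query's private combination, but the ``insertion and ties are useless'' part deserves to be isolated as a short sub-lemma rather than dispatched informally.
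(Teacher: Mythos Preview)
Your argument is sound and takes a genuinely different route from the paper's. The paper builds a \emph{single} small database around $v$: for every private attribute $B_j$ and every value $u\neq v[B_j]$ it inserts one ``neighbour'' that agrees with $v$ on all attributes except $B_j=u$, so the database has only $\sum_{j}(|V^\mathrm{B}_j|-1)$ tuples besides $v$. Monotonicity then guarantees that any point query whose private part differs from $v$'s on some $B_j$ is beaten by the corresponding neighbour, so $v$ is returned only by the unique query matching $v$'s hidden combination; combined with a first step (essentially the same as your degenerate-case reasoning) that any successful adversary must eventually issue a query returning $v$, this reduces the problem to locating one element of a size-$N$ set. Your construction instead populates all $N$ private combinations and runs an explicit indistinguishability/decision-tree count over the family $\{D_\gamma\}$. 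What you gain is rigor: the paper's final step (``the optimal adversarial strategy is uniform random guessing, hence expected cost $\Omega(N)$'') is really your Yao-style averaging in disguise, and the paper does not explicitly argue that the Q\&I adversary's insertions cannot shortcut the search through its sparse neighbour set, whereas you isolate that as a sub-lemma. What the paper gains is a polynomial-size witness database and a proof that mirrors the matching upper bound, since both pivot on ``find one query that returns $v$''.

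One correction to your sketch: the sentence ``an inserted tuple can out-rank $e_\sigma$ only by having combination $\sigma$ itself'' is false when the adversary sets $q$'s public predicates away from $0$, since an inserted tuple matching $q$ on a public coordinate may then beat $e_\sigma$ under some admissible ranking functions. The statement you actually need, and implicitly use, is weaker and purely structural: the ranking score depends only on attribute values, and the multiset of attribute values in $D_\gamma\cup T$ is $\gamma$-independent, so the only $\gamma$-dependent bit in any answer is the \emph{identifier} of the returned tuple, which can be $v$'s only when the top-ranked tuple is the bona fide one with combination $\sigma$. That is exactly what your indistinguishability step requires, so the argument goes through once the sub-lemma is phrased this way.
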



\begin{proof}
First, we prove that, in order for an adversary to compromise $v[B_1]$, it must be able to find at least one query which returns $v$. Let the queries the adversary issues before inferring the value of $v[B_1]$ be $\epsilon_1, \ldots, \epsilon_x$. Note that in order for the inference to hold, the following condition must be true: if we change the value of $v[B_1]$ to $\theta \in V^\mathrm{B}_1\backslash v[B_1]$, then at least one of the queries $\epsilon_1, \ldots, \epsilon_x$ must have a different answer. To understand why, note that if all query answers the adversary received remain the same after the change, then there is no way for the adversary to always infer $v[B_1]$ correctly from the query answers, because any deterministic algorithm that takes the answers to $\epsilon_1, \ldots, \epsilon_x$ as input will output the exact same value when $v[B_1] = \theta$. Also, note that since the only change to the database is the value of $v$, the difference on query answer must be whether the query returns $v$ or not. Thus, the adversary can always find at least one query which returns $v$.


We now construct a database which requires an expected number of $\Omega(\prod^{m^\prime}_{i=1}|V^\mathrm{B}_i|)$ queries for finding one query which returns $v$. Let there be $\sum^{m^\prime}_{i=1}(|V^\mathrm{B}_i| - 1)$ tuples in the database:
\begin{align}
&v^1_1, \ldots, v^{|V^\mathrm{B}_1|-1}_1\\
&v^1_2, \ldots, v^{|V^\mathrm{B}_2|-1}_2\\
&\ldots \nonumber\\
&v^1_{m^\prime}, \ldots, v^{|V^\mathrm{B}_{m^\prime}|-1}_{m^\prime}
\end{align}
Specifically, for any $j \in [1, m^\prime]$, every $v^i_j$ ($i \in [1, |V^\mathrm{B}_j| - 1]$) shares the same value as $v$ on all attributes but $B_j$. In addition, each $v^i_j$ takes a unique domain value in $V^\mathrm{B}_j$ that is different from $v[B_j]$. Due to the existence of these tuples, any query $q$ which differs from $v$ on at least one attribute will not return $v$ - because, according to the monotonicity condition, there must exist at least one tuple in $v^i_j$ with a smaller distance from $q$. Since the adversary was given prior knowledge of $v[A_1], \ldots, v[A_m]$ but no information about the private attribute values, the optimal adversarial strategy is to issue queries $q$ with $q[A_i] = v[A_i]$ for all $i \in [1, m]$ and $q[B_i]$ chosen uniformly at random from $V^\mathrm{B}_i$. One can see that, in this worst-case scenario, the expected query cost required for finding a query that returns $v$ is $\Omega(\prod^{m^\prime}_{i=1}|V^\mathrm{B}_i|)$.
\end{proof}

\vspace{1mm}\noindent{\bf Running Example Query Cost:} We now consider how Q\&I-Point performs over the running example of a linear-combination ranking function in Equation~\ref{equ:scr} and a database where each tuple is generated i.i.d.~randomly according to the uniform distribution, while the victim $v$ is chosen uniformly from the database.
\begin{theorem} \label{thm:feb}
In the running example, the expected number of queries Q\&I-Point issues to compromise $v[B_1]$ is at most $1/p + \sum^{m^\prime}_{i=1} (|V^\mathrm{B}_i| - 1)$, where
\begin{small}
\begin{align}
p = \prod_{t \in D, t \neq v} \left(\frac{1}{2} + \frac{1}{2} \cdot \mathrm{erf}\left(\frac{d^A(v, t)}{\sqrt{\sum^{m^\prime}_{i=1} 2w^{\prime2}_i \cdot \frac{|V^\mathrm{B}_i| - 1}{|V^\mathrm{B}_i|^2}}}\right)\right) \label{equ:qcq}
\end{align}
\end{small}
\hspace{-1mm}where $\mathrm{erf}(\cdot)$ is the standard error function \cite{cox2006principles}, 
and $d^A(v, t)$ is the distance between $v$ and $t$ on public attributes - i.e., $d^A(v, t) = w_1 \cdot \rho(v[A_1], t[A_1]) + \cdots + w_m \cdot \rho(v[A_m], t[A_m])$, where $\rho$ is the distance function defined in the running example.
\end{theorem}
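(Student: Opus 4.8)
The plan is to split the expected query cost of \textsf{Q\&I-Point} into the cost of its (at most one) call to \textsf{FIND-Q} and the cost of everything else, and bound the two pieces separately.

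For the ``everything else'' piece I would argue deterministically. Each pass of the outer \textsf{while} loop either halts the algorithm (when the inserted dummy $t$ turns out to coincide with $v$) or ends its inner \textsf{for} loop with a differential pair that conclusively rules out one value for one private attribute; since an excluded value is never reconsidered, there are at most $\sum_{i=1}^{m'}(|V^\mathrm{B}_i|-1)$ passes, and having excluded all wrong values for all private attributes certainly determines $v[B_1]$. Within a pass, the inner loop breaks after issuing $i^\star$ queries, which drops by $i^\star-1$ the number $h$ of private coordinates on which the current winning query $q$ still differs from $t$. The structural point is that in the running example one can always re-assign $t$ on the just-decided coordinate to a value other than the single value $q$ carries there (if only that value were left unexplored, the attribute is already pinned down by elimination and no re-assignment is needed). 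Because $\rho$ is the discrete metric, the re-assigned $t$ then has exactly the same score under $q$ as before, so $q$ keeps returning $v$ over the updated database: \textsf{FIND-Q} is never re-invoked, and $h$ never grows. Amortizing the inner-loop queries against the monotone decrease of $h$ (which starts below $m'$), plus one insertion per pass, bounds the non-\textsf{FIND-Q} requests by $O\!\left(\sum_i(|V^\mathrm{B}_i|-1)\right)$, the additive second term.

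For the single \textsf{FIND-Q} call, it draws point queries with $q[A_i]=v[A_i]$ uniformly without replacement until one returns $v$; if a $p$-fraction of the $N=\prod_i|V^\mathrm{B}_i|$ such queries return $v$, the expected number of draws is $(N+1)/(pN+1)\le 1/p$, which is the first term. It remains to estimate $p$. A query returns $v$ iff $s(v\mid q)<s(t\mid q)$ for every $t\in D$ with $t\neq v$; treating these events as independent gives $p\approx\prod_{t\neq v}\Pr[s(v\mid q)<s(t\mid q)]$. For the linear running-example function, since $q[A_i]=v[A_i]$ we have $s(v\mid q)=\sum_i w^\mathrm{B}_i\,\rho(q[B_i],v[B_i])$ and $s(t\mid q)=d^A(v,t)+\sum_i w^\mathrm{B}_i\,\rho(q[B_i],t[B_i])$. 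Replacing the private part of $s(t\mid q)$ by its mean and recentering, the sign of $s(v\mid q)-s(t\mid q)$ is governed by $Z-d^A(v,t)$, where $Z=\sum_i w^\mathrm{B}_i\bigl(\rho(q[B_i],v[B_i])-\tfrac{|V^\mathrm{B}_i|-1}{|V^\mathrm{B}_i|}\bigr)$ is a zero-mean sum of $m'$ independent bounded terms of total variance $\sum_i (w^\mathrm{B}_i)^2\frac{|V^\mathrm{B}_i|-1}{|V^\mathrm{B}_i|^2}$. Invoking the central limit theorem for $Z$ and the identity $\Phi(z)=\tfrac12\bigl(1+\mathrm{erf}(z/\sqrt2)\bigr)$ turns each factor into $\tfrac12+\tfrac12\,\mathrm{erf}\!\bigl(d^A(v,t)/\sqrt{\sum_i 2(w^\mathrm{B}_i)^2\frac{|V^\mathrm{B}_i|-1}{|V^\mathrm{B}_i|^2}}\bigr)$, and the product over $t$ is exactly the stated $p$.

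I expect the main obstacle to be that both approximations feeding into $p$ — the normal approximation of $Z$, a sum of only $m'$ terms, and the independence assumption across the $n-1$ competitor tuples — are heuristic rather than exact, so ``at most $1/p+\cdots$'' must be read in the approximate sense appropriate to a running-example analysis; making it rigorous would need Berry--Esseen-type control of the CLT error and a correction for the (mild) negative correlations among the events ``$v$ beats $t$''. A secondary item to verify carefully is the bookkeeping that confines \textsf{FIND-Q} to one invocation — i.e.\ that a usable replacement value for $t$ is always available — which follows from the elimination argument above but deserves an explicit statement.
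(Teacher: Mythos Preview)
Your proposal follows essentially the same route as the paper: decompose the cost into the \textsf{FIND-Q} piece and the post-\textsf{FIND-Q} piece, model the private-attribute contribution to $s(\cdot\mid q)$ as a zero-mean sum with variance $\sum_i (w^\mathrm{B}_i)^2\frac{|V^\mathrm{B}_i|-1}{|V^\mathrm{B}_i|^2}$, invoke a normal approximation to turn each $\Pr[s(v\mid q)<s(t\mid q)]$ into the stated erf factor, and take the product over $t\neq v$ to obtain $p$ with expected \textsf{FIND-Q} cost $1/p$. You are actually more careful than the paper on two points it glosses over --- you justify via the discrete-metric structure of the running example why \textsf{FIND-Q} need only be invoked once (the paper simply asserts the additive $\sum_i(|V^\mathrm{B}_i|-1)$ with a back-reference), and you explicitly flag that the CLT and the across-$t$ independence are heuristic approximations --- but the overall argument is the same.
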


\begin{proof}
Note that $q$ generated in the above-described random process has $d(q, v)$ following Multinomial distribution with mean $\mu$ and variance $\sigma^2$ as follows.
\begin{align}
\mu = \sum^{m^\prime}_{i=1} w^{\prime}_i \cdot \frac{1}{|V_i^B|}; 
\hspace{2mm} 
\sigma^2 = \sum^{m^\prime}_{i=1} w^{\prime2}_i \cdot \frac{|V_i^B-1|}{|V_i^B|^2}
\end{align}
In addition, $\forall t \in D$, given $d^A(q, t)$, the overall distance $d(q, t)$ follows the Multinomial distribution with mean $\mu_0 = d^A(q, t) + \mu$, and the same variance $\sigma^2$ as above. Note that since $q$ shares the same attribute values as $v$ on all public attributes, we have $d^A(q, t) = d^A(v, t)$. As such, the probability for a query $q$ with $q[A_i] = v[A_i]$ for all $i \in [1, m]$ and $q[B_i]$ chosen uniformly at random from $V_i^B$ to return $v$ is
\begin{align}
p = \prod_{t \in D, t \neq v} \left(\frac{1}{2} + \frac{1}{2} \cdot \mathrm{erf}\left(\frac{d^A(v, t)}{\sqrt{\sum^{m^\prime}_{i=1} 2w^{\prime2}_i \cdot \frac{|V^\mathrm{B}_i| - 1}{|V^\mathrm{B}_i|^2}}}\right)\right) \label{equ:ppq}
\end{align}
In other words, the expected number of queries the adversary needs to issue before finding a query that returns $v$ is
\begin{align}
\frac{1}{p} = \frac{1}{\prod_{t \in D, t \neq v} \left(\frac{1}{2} + \frac{1}{2} \cdot \mathrm{erf}\left(\frac{d^A(v, t)}{\sqrt{\sum^{m^\prime}_{i=1} 2w^{\prime2}_i \cdot \frac{|V^\mathrm{B}_i| - 1}{|V^\mathrm{B}_i|^2}}}\right)\right)}
\label{equ:qpq}
\end{align}

Of course, after finding a query that returns $v$, the adversary in the worst case has to issue an additional $\sum^{m^\prime}_{i=1} (|V^\mathrm{B}_i| - 1)$ queries (see proof of Theorem~\ref{thm:wc1}) for inferring the value of $v[B_1]$.
\end{proof}

Note from (\ref{equ:qcq}) why the average-case query cost of FIND-Q (and thereby Q\&I-Point) is likely much smaller than its worst-case bound: $p$ is the probability for a query $q$ randomly tested in FIND-Q to return $v$. One can see that, when there is a large number of public attributes or a small number of private ones - i.e., a larger $d^A(v, t)$ or a smaller $w^\prime_i$, the probability for a tuple $t$ ($t \neq v$) to ``overcome'' its difference with $q$ on public attributes (with which $v$ has zero difference) by private attribute values is fairly small - leading to a larger $p$ and, ultimately, a smaller query cost.

The query cost required for Q\&I-Point to compromise $v[B_1]$ actually {\em decreases} with a {\em smaller} weight on the private attributes. This observation seems counter-intuitive because when $w^\prime_1 = 0$, no privacy disclosure occurs as the rank becomes independent of $v[B_1]$ - but the worst disclosure occurs when $w^\prime_1$ takes the smallest positive value! To understand why, note that the smaller private ranking weights $w^\prime_i$ are, the easier it is for an adversary to pinpoint a query that returns $v$, as the adversary already has prior knowledge of all public attribute values of $v$. Given that, for a Q\&I-adversary, finding a query returning $v$ is (almost) equivalent with compromising $v[B_1]$, we have this seemingly counter-intuitive observation.

\subsection{Q-only adversary}
\noindent{\bf Design of Q-Point:} For adversaries subject to the tuple-insertion constraint, the feasibility of compromising $v[B_1]$ is not of certainty as in the Q\&I adversary case, as shown in the following theorem.
\begin{theorem} \label{thm:fqp}
Given any victim tuple $v$, there exists a ranking function $s(\cdot | \cdot)$ and a database $D$ such that no Q-only adversary can perform a rank-based inference of $v[B_1]$ over $D$.
\end{theorem}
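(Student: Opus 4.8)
The plan is to exhibit one concrete instance --- a ranking function together with a database $D$ containing the given victim $v$ --- in which the answer to \emph{every} supported point query is, from the adversary's point of view, completely insensitive to the value of $v[B_1]$. Once such an instance is in hand, infeasibility is immediate: compare the ``real'' world (with $v[B_1]$ equal to its given value $a$) against a second world in which only $v[B_1]$ is changed to some $b \neq a$ (such a $b$ exists because $|V^\mathrm{B}_1|\ge 2$, which is implicit --- otherwise the statement is vacuous, and the second database will still satisfy the no-duplicate assumption). If every query produces the same returned tuple, with the same displayed public attributes and identifier, in both worlds, then a Q-only adversary sees identical transcripts in both, so any deterministic $\delta(\cdot)$ returns the same value in both and hence cannot equal $v[B_1]$ in both. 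Note this argument covers adaptive adversaries for free, since identical answers force identical execution paths.

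For the construction I would take the linear running example of Equation~\ref{equ:scr} with one dominating public weight: set $w^\mathrm{A}_1 = 1$ and let every other weight (all remaining $w^\mathrm{A}_i$ and all $w^\mathrm{B}_j$) equal a common $\epsilon$ small enough that $(m + m^\prime - 1)\,\epsilon < 1$ (all weights stay in $(0,1]$). Let $D$ consist of $v$ together with one ``shadow'' tuple $u_c$ for each value $c \in V^\mathrm{A}_1 \setminus \{v[A_1]\}$, where $u_c$ agrees with $v$ on every attribute except $A_1$, on which $u_c[A_1] = c$. The private coordinates of the $u_c$'s are fixed arbitrarily; all that matters is that they do not depend on $v[B_1]$, and the $u_c$'s are automatically pairwise distinct and distinct from $v$ since they disagree on $A_1$.

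The heart of the proof is a one-line case analysis on $q[A_1]$ for an arbitrary point query $q$. On all attributes other than $A_1$ the total distance of any tuple to $q$ is at most $(m + m^\prime - 1)\epsilon < 1$. If $q[A_1] = v[A_1]$, then $v$ pays $0$ on $A_1$ while each $u_c$ pays $w^\mathrm{A}_1 = 1$, so $s(v|q) < 1 \le s(u_c|q)$ for every $c$ and $v$ is the unique top-$1$ answer --- whatever $v[B_1]$ is. Symmetrically, if $q[A_1] = c_0 \neq v[A_1]$, the same computation gives $s(u_{c_0}|q) < 1 \le \min(s(v|q), \min_{c\neq c_0} s(u_c|q))$, so $u_{c_0}$ is the unique top-$1$ answer, again independently of $v[B_1]$. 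In both cases the returned tuple, together with its public attributes and identifier, is independent of $v[B_1]$, which is exactly the insensitivity required above; combining with the reduction of the first paragraph completes the proof.

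I expect the friction here to be bookkeeping rather than ideas: (i) checking the weight gap is honestly sufficient --- the single inequality $(m + m^\prime - 1)\epsilon < 1$ does it; (ii) ruling out ties, which the strict chain $s < 1 \le s'$ handles cleanly, so no tie-breaking convention need be invoked; and (iii) phrasing the last step so it manifestly applies to adaptive adversaries. The one conceptual point worth flagging is \emph{why} one cannot simply make $v$ unreturnable: by monotonicity the query matching $v$ on all $m + m^\prime$ attributes always returns $v$ uniquely, so $v$ is unavoidably returned by \emph{some} query; the construction's real job is to hide which query that is, by forcing $v$'s rank to depend only on the (already-known) public attributes. This is also where the point-query restriction of \S\ref{sec:pq} is used --- an IN query with $q[A_1] = {*}$ would neutralize the dominating weight, so the IN-interface case needs a different argument.
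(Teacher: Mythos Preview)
Your proposal is correct and follows essentially the same approach as the paper: both instantiate the linear ranking function of Equation~\ref{equ:scr} with weights chosen so that $B_1$'s contribution is too small ever to flip a rank, and then pick a database in which the remaining attributes already determine every query's top-$1$ answer. The paper states this as an abstract weight condition ($w'_1$ smaller than any nonzero gap between subset-sums of the other weights) together with the requirement that no tuple differ from $v$ only on $B_1$, while you give a concrete instance (one dominant public weight, one shadow tuple per value of $A_1$) and carry out the case analysis explicitly; the underlying idea is the same.
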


\begin{proof}
Consider the linear-combination ranking function defined in (\ref{equ:scr}). We now show that there exists certain value combinations of $w_i$ and $w^\prime_i$ that make it impossible for a Q-only adversary to compromise $v[B_1]$ as long as there does {\em not} exist another tuple $v^\prime$ in $D$ which shares the same value with $v$ on all attributes but $B_1$. Specifically, if $\forall S_1, S_2$ $\subseteq$ $\{w_1,$ $\ldots$, $w_m$, $w^\prime_2$, $\ldots$, $w^\prime_{m^\prime}\}$ where $S_1 \neq S_2$, there is $|\sum S_1 - \sum S_2| > w^\prime_1$, then one can see that no query answer will be changed for all values of $v[B_1]$, because, for any query $q$, the change of $d(v, q)$ caused by changing the value of $v[B_1]$ is smaller than even the smallest possible rank difference between any two tuples. In other words, it is infeasible for a Q-only adversary to compromise $v[B_1]$.
\end{proof}

As a simple example, consider the linear ranking function in the running example and a database with only two attributes, one public $A_1$ and one private $B_1$. If the weighting on $A_1$ is larger than $B_1$, and each tuple in the database takes a different value on $A_1$, then there is no way for a Q-only adversary to infer $v[B_1]$ because the results of every possible query is already determined without knowing the value of $B_1$ for any tuple. Specifically, a query will always return the tuple that shares its value on $A_1$, regardless of what values the tuples have on $B_1$. As such, the inference of $v[B_1]$ from tuple ranks becomes infeasible.

Despite of the worst-case infeasibility, however, in practice it is quite likely for a Q-only adversary to find enough queries to unveil $v[B_1]$, as we shall show in the experimental results. To address these cases, we now develop Algorithm Q-Point for a Q-only adversary to launch a rank-based inference attack over a point query interface. Once again, our goal here is to find a pair of {\em differential queries} $q_\theta$ and $q^\prime_\theta$ for each value $\theta \in V^\mathrm{B}_1 \backslash v[B_1]$. Like in the Q\&I-case, without loss of generality, we denote the domain values in $V^\mathrm{B}_1$ as 0, 1, \ldots, $|V^\mathrm{B}_1| - 1$.

We start by calling Algorithm FIND-Q to find a query $q$ which returns $v$. Then, we construct and issue $|V^\mathrm{B}_1|$ queries $f_0(q)$, $f_1(q)$, \ldots, $f_{|V^\mathrm{B}_1| - 1}(q)$. While all these queries share the exact same predicates as $q$ on $A_1$, $\ldots$, $A_m$, $B_2$, $\ldots$, $B_{m^\prime}$, there is $f_i(q)[B_1] = i$ for all $i \in [0, |V^\mathrm{B}_1| - 1]$. Due to the additivity property, at least one of these $|V^\mathrm{B}_1|$ queries must return $v$.  If only one does, then our attack on $v[B_1]$ already succeeds - the one which returns $v$ must have the same value as $v$ on $B_1$. If more than one return $v$, we can do two things: First, we can exclude from consideration those values corresponding to the queries that do not return $v$ - for those, we have already found their differential queries to support the exclusion. Second, we can proceed to revise $q$ (and correspondingly $f_i(q)$) as follows to continue the exclusion process.

\begin{figure}[h]
\centering
\includegraphics[scale=0.4]{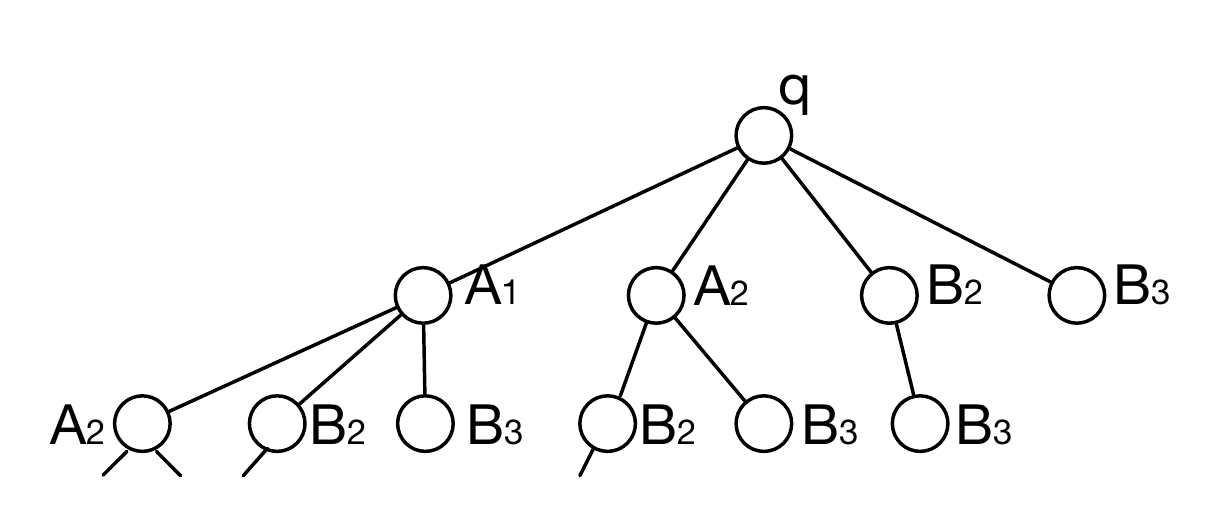}
\caption{Enumeration Tree in Algorithm Q-Point}
\label{fig:treeqp}
\end{figure}

Specifically, our query-revision process can be considered performing a breadth-first search over the tree structure depicted in Figure~\ref{fig:treeqp}, which demonstrates a special case where $m = 2$ and $m^\prime = 3$. In the tree, each node consists of a class of revisions to $q$. Specifically, a node contains all queries that differ from $q$ exactly on the attributes that appear on the path from the node to the root in the tree. For example, the bottom-left corner node in Figure~\ref{fig:treeqp} contains all queries that differ from $q$ on $A_1$ and $A_2$.

During the search process, for each node encountered, we enumerate all queries $q^\prime$ in the node and repeat the value-exclusion process described above by issuing $f_i(q^\prime)$ for all $i \in [0, |V^\mathrm{B}_1|-1]$. Note that the enumeration can be made more efficient with a pruning-based optimization: If for a query $q^\prime$, none of the $|V^\mathrm{B}_1|$ queries $f_i(q^\prime)$ returns $v$, then we can safely exclude from future consideration all queries in the subtree of the current node which only differs from $q^\prime$ on public attribute values. Algorithm~\ref{alg:qPoint} summarizes the pseudocode for Algorithm Q-Point.

\vspace{1mm}
\noindent{\bf Performance Analysis:} 
One can see from the design of Q-Point that, in the worst-case scenario, it issues enough queries to determine for every query specifiable through the point-query interface whether it returns $v$. Thus, Q-Point always accomplishes the attack as long as such an attack is at all feasible over the point-query interface. Nonetheless, the query complexity of Q-Point is $O(\prod^{m}_{i=1}|V^\mathrm{A}_i| \cdot \prod^{m^\prime}_{i=1}|V^\mathrm{B}_i|)$ - much higher than Q\&I-Point, given that real-world databases often feature more public attributes with large domains.

We consider again the linear ranking function in the running example and a database where each tuple is generated i.i.d.~uniformly at random. We have the following result for Q-Point:
\begin{theorem} \label{thm:rlb}
In the above scenario, given $q$ produced by FIND-Q, the probability (taken over the randomness of database $D$) for Q-Point to infer $v[B_1]$ after issuing only $|V^\mathrm{B}_1|$ queries is at least
\begin{small}
\begin{align}
\left(1 - \prod_{t \in D, t \neq v} \frac{1 + \mathrm{erf}\left(\frac{(d^A(v, t) - w^\prime_1)}{\sqrt{2\sum^{m^\prime}_{i=2} (w^{\prime2}_i \cdot (|V^\mathrm{B}_i| - 1) / |V^\mathrm{B}_i|^2)}}\right)}{1 + \mathrm{erf}\left(\frac{d^A(v, t)}{\sqrt{2\sum^{m^\prime}_{i=2} (w^{\prime2}_i \cdot (|V^\mathrm{B}_i| - 1) / |V^\mathrm{B}_i|^2)}}\right)}\right)^{|V^\mathrm{B}_1| - 1}
\end{align}
\end{small}
\end{theorem}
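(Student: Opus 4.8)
The plan is to condition on the query $q$ returned by FIND-Q --- so $q$ returns $v$, and by additivity we may assume $q[A_i]=v[A_i]$ for every public attribute --- and then analyze the $|V^\mathrm{B}_1|$ queries $f_0(q),\dots,f_{|V^\mathrm{B}_1|-1}(q)$ that Q-Point issues, where $f_\theta(q)$ agrees with $q$ everywhere except that its predicate on $B_1$ is $\theta$. By additivity $f_{v[B_1]}(q)$ still returns $v$, so Q-Point pins down $v[B_1]$ after exactly these $|V^\mathrm{B}_1|$ queries iff every other query $f_\theta(q)$, $\theta\in V^\mathrm{B}_1\setminus v[B_1]$, fails to return $v$ (this is genuinely an ``iff'': if $q[B_1]\neq v[B_1]$ then $f_{q[B_1]}(q)=q$ is one of the remaining queries and still returns $v$, so the attack does not finish and the right-hand side is also false). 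Hence the quantity to lower-bound is $\Pr\!\big[\bigcap_{\theta\neq v[B_1]} E_\theta\big]$, where $E_\theta$ is the event ``some bona fide $t\neq v$ has $s(t\mid f_\theta(q))<s(v\mid f_\theta(q))$'' and the probability is over the i.i.d.\ uniform generation of $D$.

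Next I would fix one $\theta\neq v[B_1]$ and estimate $\Pr[E_\theta]$. Since the tuples are mutually independent, $\Pr[\overline{E_\theta}]=\prod_{t\neq v}\Pr[t\text{ does not beat }v\text{ under }f_\theta(q)]$. Plugging in the running-example ranking function: the public contributions to $s(t\mid f_\theta(q))$ and $s(v\mid f_\theta(q))$ are $d^A(v,t)$ and $0$ respectively (using $q[A_i]=v[A_i]$); on $B_1$ the predicate value $\theta$ contributes $0$ to $t$ exactly when $t[B_1]=\theta$ and contributes $w^\prime_1$ to $v$ because $\theta\neq v[B_1]$; and on $B_2,\dots,B_{m^\prime}$ the contributions are the discrete-metric terms $w^\prime_i\rho(q[B_i],\cdot)$. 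Collecting the $B_{\ge 2}$ part into $\sum_{i\ge 2} w^\prime_i\big(\rho(q[B_i],t[B_i])-\rho(q[B_i],v[B_i])\big)$, applying the Gaussian approximation to the multinomial used in the proof of Theorem~\ref{thm:feb} (this sum has variance $\sigma_{\ge2}^2=\sum_{i=2}^{m^\prime} w^{\prime 2}_i (|V^\mathrm{B}_i|-1)/|V^\mathrm{B}_i|^2$), and exploiting that the conditioning event ``$q$ returns $v$'' is the intersection of the \emph{independent} events ``$t$ does not beat $v$ under $q$'' --- so it conditions each tuple separately, and $q$ and $f_\theta(q)$ differ only in the $B_1$ term --- the per-tuple factor should reduce to $\big(1+\mathrm{erf}((d^A(v,t)-w^\prime_1)/\sqrt{2\sigma_{\ge2}^2})\big)\big/\big(1+\mathrm{erf}(d^A(v,t)/\sqrt{2\sigma_{\ge2}^2})\big)$, giving $\Pr[E_\theta]\ge 1-\prod_{t\neq v}(\text{this ratio})$.

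Finally I would combine over the $|V^\mathrm{B}_1|-1$ values of $\theta$: treating the $E_\theta$ as at least independent --- each essentially needs a tuple whose $B_1$-value equals $\theta$ to beat $v$, and these rely on disjoint groups of tuples --- gives $\Pr\!\big[\bigcap_\theta E_\theta\big]\ge\prod_\theta\Pr[E_\theta]=\big(1-\prod_{t\neq v}(\text{ratio})\big)^{|V^\mathrm{B}_1|-1}$, which is the claimed bound.

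The hard part will be making the conditioning on ``$q$ returns $v$'' rigorous in the second step: that conditioning couples $q$'s own $B_{\ge2}$ predicates to the random tuples, and one has to show that, under the Gaussian approximation and the mean-substitution conventions already used for Theorem~\ref{thm:feb}, the per-tuple conditional probability collapses to the clean $\mathrm{erf}$-ratio above rather than to a messier expression that splits on whether $t[B_1]$ equals $v[B_1]$, equals $\theta$, or neither. A secondary gap is the last step: the $E_\theta$ look like occupancy events, which are typically \emph{negatively} associated, so the ``at least independent'' claim should be backed by a correlation inequality (note that the obvious genuinely-independent sub-event --- pinning one witness tuple per value --- would change the shape of the bound, so some care is needed) rather than simply asserted.
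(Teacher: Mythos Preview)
Your plan matches the paper's proof in structure: (i) show that flipping $q[B_1]$ to a wrong value $\theta$ shifts the ``public gap'' each $t\neq v$ must overcome from $d^A(v,t)$ down to $d^A(v,t)-w'_1$, (ii) plug into the same Gaussian/erf approximation used for Theorem~\ref{thm:feb} to get a per-$\theta$ failure bound that is the displayed product of ratios, and (iii) raise to the $(|V^\mathrm{B}_1|-1)$-th power to combine over all $\theta\neq v[B_1]$.

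The main difference is framing. The paper does \emph{not} condition on ``$q$ returns $v$'' and work with conditional probabilities over the random database as you propose; instead it reasons about the \emph{expected fraction of point queries} (over the random choice of $q$) that return $v$, calls this $p$, defines $p'$ analogously for the event ``returns $v$ and still returns $v$ after the flip,'' and then simply writes the answer as $(1-p'/p)^{|V^\mathrm{B}_1|-1}$. In other words, the two issues you flag as the ``hard part'' --- making the conditioning rigorous so the per-tuple factor collapses to the clean erf-ratio, and justifying the independence-across-$\theta$ exponent via a correlation inequality --- are precisely the steps the paper leaves as assertions. Your conditional-probability formulation is the more honest one (and is what the theorem statement actually asks for), but you should not expect the paper to supply those missing justifications; at the level of rigor the paper operates at, both gaps are simply taken for granted under the running Gaussian-approximation convention.
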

\begin{proof}
Following the results from the proof of Theorem~\ref{thm:feb}, the expected ratio of point queries $q$ which has $q[A_i] = v[A_i]$ for all $i \in [1, m]$ and returns $v$ is
\begin{align}
p = \prod_{t \in D, t \neq v} \left(\frac{1}{2} + \frac{1}{2} \cdot \mathrm{erf}\left(\frac{d^A(v, t)}{\sqrt{\sum^{m^\prime}_{i=1} 2w^{\prime2}_i \cdot \frac{|V^\mathrm{B}_i| - 1}{|V^\mathrm{B}_i|^2}}}\right)\right)
\end{align}
because $d^A(v, t)$ is the extra distance a tuple $t \neq v$ has compared with $v$ - and such a distance has to be ``covered'' by the private attributes in order for $t$ to be returned. A key observation here is that, changing $q[B_1]$ between adjacent values in $V_1^B$ changes the distance by at most $w^\prime_1$. Thus, in order for a tuple $t$ to be returned after the change, the private attributes of $t$ still have to ``cover'' a distance of at least $d^A(v, t) - w^\prime_1$. In other words, the expected ratio of point queries $q$ which (1) has $q[A_i] = v[A_i]$ for all $i \in [1, m]$, (2) returns $v$, and (3) still returns $v$ after the flip of $q[B_1]$ is at most
\begin{align}
p^\prime \leq \prod_{t \in D, t \neq v} \left(\frac{1}{2} + \frac{1}{2} \cdot \mathrm{erf}\left(\frac{(d^A(v, t) - w^\prime_1)}{\sqrt{2\sum^{m^\prime}_{i=2} (w^{\prime2}_i \cdot (|V^\mathrm{B}_i| - 1) / |V^\mathrm{B}_i|^2)}}\right) \right)
\end{align}
Thus, among all point queries $q$ which have $q[A_i] = v[A_i]$ for all $i \in [1, m]$ and return $v$, the ratio that, upon changing the value of $q[B_1]$ to all values in $V_1^B \setminus v[B_1]$, return another tuple in the database is at least
\begin{align}
& \left(1 - p^\prime/p \right)^{|V_1^B|-1} =    \nonumber \\
& \left(1 - \prod_{t \in D, t \neq v} \frac{1 + \mathrm{erf}\left(\frac{(d^A(v, t) - w^\prime_1)}{\sqrt{2\sum^{m^\prime}_{i=2} (w^{\prime2}_i \cdot (|V^\mathrm{B}_i| - 1) / |V^\mathrm{B}_i|^2)}}\right)}{1 + \mathrm{erf}\left(\frac{d^A(v, t)}{\sqrt{2\sum^{m^\prime}_{i=2} (w^{\prime2}_i \cdot (|V^\mathrm{B}_i| - 1) / |V^\mathrm{B}_i|^2)}}\right)}\right)^{|V^\mathrm{B}_1| - 1}
\label{equ:rlb}
\end{align}

One can see that if a point query $q$ has $q[A_i] \neq v[A_i]$ for certain $i \in [1, m]$, then this ratio must be even higher because of the now shorter distance $d^A(q, t)$ a tuple $t \neq v$ needs to ``cover'' with the private attributes. Thus, (\ref{equ:rlb}) is indeed a lower bound on the expected ratio for all point queries which return $v$.
\end{proof}

Similar to the Q\&I case, the attack is more (likely to be) efficient with a smaller $|V^\mathrm{B}_1|$. Nonetheless, an interesting observation here is that, contrary to the Q\&I case, now the larger $w^\prime_1$ is, the more efficient the attack is likely to be. On the other hand, the efficiency also increases with a larger database size $|D|$ and a smaller weight on other private attributes $w^\prime_i$ (as $\mathrm{erf}(x)$ has a larger derivative when $x$ is close to 0). 

\begin{algorithm}[!htb]
\caption{{\bf Q-Point}}
\begin{algorithmic}[1]
\label{alg:qPoint}
\STATE {\bf Input:} $v$ \qquad {\bf Output:} $v[B_1]$
\STATE {\bf while} some query returns $v$ {\bf do}
    \STATE \hspace{\algorithmicindent} $q \leftarrow$ FIND-Q($v, H_v$); Construct enumeration tree $T_q$ for $q$
    \STATE \hspace{\algorithmicindent} {\bf for } $i = 1$ to  $m+m^{\prime}$ {\bf do}
        \STATE \hspace{\algorithmicindent} \hspace{\algorithmicindent} 
                    {\bf for} each query node $q^{\prime}$ in level $i$ of $T_q$ {\bf do}
            \STATE \hspace{\algorithmicindent} \hspace{\algorithmicindent} \hspace{\algorithmicindent}
                    Construct queries $f_0(q^{\prime}) \ldots f_{|V_1^B|-1}(q^{\prime})$
            \STATE \hspace{\algorithmicindent} \hspace{\algorithmicindent} \hspace{\algorithmicindent} 
                    {\bf if} none return $v$ {\bf then} prune subtree($q^{\prime}$)
            \STATE \hspace{\algorithmicindent} \hspace{\algorithmicindent} \hspace{\algorithmicindent} 
                    {\bf if} only $f_j(q^{\prime})$ returns $v$ {\bf then} {\bf return} $j$ as $v[B_1]$ 
            \STATE \hspace{\algorithmicindent} \hspace{\algorithmicindent} \hspace{\algorithmicindent} 
                    Exclude query nodes $f_k(q^{\prime})$ that does not return $v$
\RETURN failure
\end{algorithmic}
\end{algorithm}


\section{IN Query Interface} \label{sec:rq}

\subsection{Q\&I Adversary}

For Q\&I adversaries, the feasibility of rank-based inference attack is established for point-query interface in \S\ref{sec:pq}. Since point-query interface is a special case of IN, the attack feasibility here is already established. Thus, our focus here is to study how the additional power of IN queries further empowers Q\&I adversaries.

Recall from \S\ref{sec:pq} that, for Q\&I adversaries, rank-based inference can be fairly efficiently reduced to the task of FIND-Q - i.e., identifying a (now IN) query returning victim $v$. We shall start by showing that, despite of the larger space of queries, the reduction still holds - leading to the design of Algorithm Q\&I-IN. Then, we show that, while FIND-Q for IN has the same worst-case query cost as in Q\&I-Point, the query cost in practice is likely much smaller.

\subsubsection{Reduction to finding an IN query that returns $v$}

We start by showing that, so long as a Q\&I adversary can call upon FIND-Q to identify an {\em IN query} $q$ that returns $v$, it can always infer $v[B_1]$ within $O(m^{\prime} \cdot \sum^{m^\prime}_{i=1}|V^\mathrm{B}_i|)$ queries. The reduction in \S\ref{sec:pq} cannot be directly used here as it relies on the ability to find a {\em point} query returning $v$ - which we do not want FIND-Q to do over an IN-query interface due to high query cost associated with it.

To enable the reduction to finding an IN query, the only difference from point-query case (\S\ref{sec:rfo}) is that now the input $q$ might have ranges like $\{0, 1, 2\}$ specified as predicates on $B_i$, instead of a single value as in the point-query case (which we denoted as 0). Fortunately, this change does not alter the key design of reduction construction.  What we do now is to define $B^\prime_1, \ldots, B^\prime_h$ as those attributes on which the inserted tuple $t$ has a value that differs from the set specified in $q$. This could be that $t[B^\prime_i]$ falls outside of the range specified in $q[B^\prime_i]$ (e.g., when $t[B^\prime_i] = 3$ while $q[B^\prime_i] = \{0, 1, 2\}$; or that $t[B^\prime_i]$ is in the range but not the only element of $q[B^\prime_i]$ (e.g., when $t[B^\prime_i] = 0$ and $q[B^\prime_i] = \{0, 1, 2\}$. Here is an example of the sequence of queries we construct:
\begin{table}[h]
\begin{center}
\begin{tabular}{cccccc}
& $A_1, \ldots, A_m$ & $B^\prime_1$ & $B^\prime_2$ & $B^\prime_3$ & $B_\mathrm{others}$\\
\hline
$q$ & \{0\} & \{0,1\} & \{1,2\} & \{1\} & \{0\}\\
\hdashline
$q_0$ & \{0\} & \{0,1\} & \{1,2\} & \{1\} & \{0\}\\
$q_1$ & \{0\} & \{0,1\} & \{1,2\} & \{0\} & \{0\}\\
$q_2$ & \{0\} & \{0,1\} & \{0\} & \{0\} & \{0\}\\
$q_3$ & \{0\} & \{0\} & \{0\} & \{0\} & \{0\}\\
\hdashline
$t$ & 0 & 0 & 0 & 0 & 0\\
\end{tabular}
\end{center}
\vspace{-2mm}
\end{table}

Once again, there must exist a pair of adjacent queries $q_i$ and $q_{i+1}$ such that $q_i$ returns $v$ while $q_{i+1}$ does not. The remaining inference process follows \S\ref{sec:rfo}. For example, if $q_1$ returns $v$ but $q_2$ does not, then we can safely infer that $v[B^\prime_2] \neq 0$ due to the additivity condition. Similarly, if $q_2$ returns $v$ while $q_3$ does not, we can infer that $v[B^\prime_1] \neq 0$. Thus, just like in the Q\&I-Point case, excluding the query cost of FIND-Q, a Q\&I-adversary requires at most $O(m^\prime \cdot \sum^{m^\prime}_{i=1}|V^\mathrm{B}_i|)$ queries to compromise $v[B_1]$.

\subsubsection{Efficiency Enhancement in Q\&I-IN}
Given that the reduction still holds, we are now ready to study how IN queries empower an adversary to quickly accomplish FIND-Q and find a query that returns $v$. In the following, we describe a concrete example which demonstrates the significant saving brought by IN queries, followed by the design of Algorithm Q\&I-IN.

\vspace{1mm}
\noindent{\bf Example of significant query savings:} To understand why IN queries significantly reduce the query cost, consider a simple example where: (1) the number of public attributes $m$ is sufficiently large, so each tuple in the database has a unique value combination for the $m$ public attributes; and (2) the number of private attributes $m^\prime$ is even larger, so the probability for a randomly generated point query to return $v$ is extremely small.

The first observation from this example is that FIND-Q over a point-query interface actually requires an extremely large number of queries. Specifically, note from Theorems \ref{thm:wc1} and \ref{thm:feb} that, for a given $m$, the query cost can be made arbitrarily large with an increasing $m^\prime$. On the other hand, if IN queries are available, the attack query cost - more specifically, the number of queries required to find one query returning $v$ - is exactly 1 because an IN query $q$ with $A_i = v[A_i]$ for $i \in [1, m]$ and $B_j = V^\mathrm{B}_j$ (essentially ``*'', i.e., do-not-care) for $j \in [1, m^\prime]$ always returns $v$.

One can see from the example that the usage of IN queries significantly reduces the attack query cost because of a simple reason: the ability for an adversary to {\em eliminate} all private attributes from a query specification makes it much easier for FIND-Q to unveil the victim tuple from the database, so that the adversary can compromise the private attributes one at a time using the above-described reduction. In other words, with IN queries, an adversary no longer has to get lucky and guess multiple private attributes correctly at the same time (e.g., in order to have $v$ returned by a point query).

\vspace{1mm}
\noindent{\bf Design of Q\&I-IN:} 
Algorithm~\ref{alg:qAndIRange} 
depicts the pseudocode for Algorithm Q\&I-IN, which enables a Q\&I-adversary to launch our rank-based inference attack on $v[B_1]$ over an IN query interface. With the algorithm, we start with a query $q$ which has $q[A_i] = v[A_i]$ for all $i \in [1, m]$ and $q[B_j] = V^\mathrm{B}_j$ for all $j \in [1, m^\prime]$. Then, if $q$ does not return $v$, we gradually replace predicates on $B_i$ with point predicates (i.e., $B_i = v$ where $v \in V^\mathrm{B}_i$). Specifically, we perform what is essentially a {\em breadth-first search} process which enumerates all value combinations for $B_1$, $\{B_1, B_2\}$, $\{B_1, B_2, B_3\}$, $\ldots$, $\{B_1, \ldots, B_{m^\prime}\}$ in order. For example, when $V^\mathrm{B}_1 = V^\mathrm{B}_2 = \{0, 1\}$, the queries we issue are $B_1 = 0$, $B_1 = 1$, $B_1 = 0$ AND $B_2 = 0$, $B_1 = 0$ AND $B_2 = 1$, $B_1 = 1$ AND $B_2 = 0$, $B_1 = 1$ AND $B_2 = 1$, $\ldots$, where each query also includes $q[A_i] = v[A_i]$ for all $i \in [1, m]$ and $q[B_j] = V^\mathrm{B}_j$ for all unspecified $B_j$. When we find a query that returns $v$, we launch the above-described reduction process to complete the attack of $v[B_1]$.

One can see from the algorithm design that, just like in the point-query case, we guarantee a successful attack. But the worst-case query cost for Q\&I-IN is also just like Q\&I-Point - i.e., $O(\prod^{m^\prime}_{i=1}|V^\mathrm{B}_i|)$. As we shall demonstrate in the following worst-case analysis, this query cost still cannot be improved beyond a constant factor.

\begin{algorithm}[!htb]
\caption{{\bf Q\&I-IN}}
\begin{algorithmic}[1]
\label{alg:qAndIRange}
\STATE {\bf Input:} $v$ \qquad {\bf Output:} $v[B_1]$
\STATE Initialize starting query $q$:  $q[A_i] = v[A_i] \, \forall i \in [1,m]$ and $q[B_j] = V_j^{B} \, \forall j \in [1,m']$
\STATE Iteratively convert $q$ to a point query till it returns $v$
\STATE $v[B_1] \leftarrow$ Q\&I-Point($q,v$)
\end{algorithmic}
\end{algorithm}

\subsubsection{Query cost analysis}

The main result here is that, while the availability of an IN query interface does {\em not} help a Q\&I adversary at all in the worst-case scenario, it does have the potential to significantly reduce the query cost in practice - especially when the number of public attributes is large. 
To understand why the worst-case scenario remains unchanged, consider the construction in the proof of Theorem~\ref{thm:wc1} which inserts to the database $\sum^{m^\prime}_{i=1}(|V^\mathrm{B}_i| - 1)$ tuples described in \S\ref{sec:pq}. Given the worst-case assumption that, when there is a draw (i.e., $s(t_1|q) = s(t_2|q)$), any inserted tuple will be returned before the victim $v$, one can see that the adversary gets no help from IN queries - because as long as a query $q$ contains an IN predicate, say on $B_i$, it is impossible for $q$ to return the victim tuple $v$ as there must exist an inserted tuple which matches $q$ on $B_i$, has the exact same value combination as $v$ on all other attributes, and therefore will be returned ahead of $v$ in the answer to $q$. Thus, the worst-case query cost Q\&I-IN remains $\Omega(\prod^{m^\prime}_{i=1}|V^\mathrm{B}_i|)$ - same as Q\&I-Point.

\vspace{1mm}
\begin{theorem} \label{thm:qra}
In the running example, the expected number of queries FIND-Q requires for finding a query that returns $v$ is $1$ if $\min_{t \in D, t \neq v} d^A(v, t) > 0$, and at most $\sum^{m^\prime - 1}_{h=1} (c_{h+1} \cdot (1 - (1-p(h))^{c_h})$ otherwise, where $c_h = \sum^h_{i=1}\prod^i_{j=1}|V^\mathrm{B}_i|$ and
\begin{small}
\begin{align}
p(h) = \prod_{t \in D, t \neq v} \left(\frac{1}{2} + \frac{1}{2} \cdot \mathrm{erf}\left(\frac{d^A(v, t)}{\sqrt{2\sum^h_{i=1} w^{\prime2}_i \cdot \frac{|V^\mathrm{B}_i| - 1}{|V^\mathrm{B}_i|^2}}}\right)\right)
\end{align}
\end{small}
\end{theorem}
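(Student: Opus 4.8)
The plan is to split on whether any database tuple ties $v$ on all public attributes, i.e., on whether $\min_{t \in D, t \neq v} d^A(v, t) > 0$, and to handle the two branches with quite different arguments.

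\emph{Easy branch.} When $\min_{t} d^A(v, t) > 0$, I would observe that the very first query issued by Q\&I-IN -- the one with $q[A_i] = v[A_i]$ for all $i \in [1,m]$ and $q[B_j] = V^\mathrm{B}_j$ for all $j \in [1,m^\prime]$ -- already returns $v$. Indeed, with every private predicate set to the full domain, $\rho(q[B_j], t[B_j]) = 0$ for every tuple $t$, so under the running-example ranking function $s(t|q) = d^A(v, t)$ for all $t$; since $s(v|q) = 0$ while $s(t|q) = d^A(v, t) > 0$ for every $t \neq v$, the victim is the unique top-$1$. Hence FIND-Q terminates after exactly one query, and this branch requires no randomness beyond the conditioning event itself.

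\emph{Main branch.} Here some tuple matches $v$ on all public attributes, the initial query may fail, and FIND-Q must descend the breadth-first tree of Algorithm Q\&I-IN, stage $h$ of which enumerates the $\prod_{i=1}^{h} |V^\mathrm{B}_i|$ queries that pin $B_1, \dots, B_h$ to fixed domain values while leaving $B_{h+1}, \dots, B_{m^\prime}$ as ``$*$''. Two ingredients are needed. The first is the per-stage hit probability $p(h)$: for a uniformly random stage-$h$ query I would repeat the computation from the proof of Theorem~\ref{thm:feb} -- $q$ agrees with $v$ on public attributes, so the gap a tuple $t \neq v$ must close with its private-attribute distance is exactly $d^A(v, t)$ -- with the one change that now only $B_1, \dots, B_h$ carry random point predicates, the ``$*$''-predicates contributing $0$ to every $s(t|q)$ and hence nothing to the variance. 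This replaces the full-sum variance $\sum_{i=1}^{m^\prime} w^{\prime2}_i (|V^\mathrm{B}_i| - 1)/|V^\mathrm{B}_i|^2$ of Theorem~\ref{thm:feb} by the truncated sum $\sum_{i=1}^{h} w^{\prime2}_i (|V^\mathrm{B}_i| - 1)/|V^\mathrm{B}_i|^2$, and the same CLT-plus-independence-across-tuples approximation yields the stated $p(h)$. I would also record that $p(h)$ is non-increasing in $h$, since enlarging the variance shrinks the argument of each $\mathrm{erf}$.

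The second ingredient is assembling the expected query cost from the stages. FIND-Q issues part of stage $h+1$ only if none of the $c_h$ queries of stages $1, \dots, h$ returned $v$, and stage $h+1$ contributes at most $\prod_{i=1}^{h+1} |V^\mathrm{B}_i| \le c_{h+1}$ queries. Treating the return events of distinct queries as independent -- the same heuristic already used for tuples in Theorem~\ref{thm:feb} -- and using $p(h') \ge p(h)$ for $h' \le h$, the probability of reaching stage $h+1$ is at most $\prod_{h' \le h} (1 - p(h'))^{s_{h'}} \le (1 - p(h))^{c_h}$, where $s_{h'}$ denotes the number of queries at stage $h'$; summing the per-stage contributions $c_{h+1}$ weighted by these probabilities over $h = 1, \dots, m^\prime - 1$ then gives the claimed bound (stage $m^\prime$, which always contains a returning query, caps the sum). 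I expect this last step to be the main obstacle: the queries are deterministic and all their return events are correlated through the single random draw of $D$, so the ``independent trials'' picture is an approximation rather than a theorem and must be accepted as such; a secondary nuisance is that $p(h)$ is only the stage-average hit probability, which is why the cleanest route bounds the no-hit probability by the product over stages of $(1 - p(h'))$ raised to the stage size rather than trying to control each individual query separately.
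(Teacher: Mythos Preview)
Your approach coincides with the paper's: both split on whether $\min_{t} d^A(v,t) > 0$, both obtain $p(h)$ by re-running the Theorem~\ref{thm:feb} computation with only the first $h$ private attributes contributing variance, and both then aggregate over the breadth-first stages of Q\&I-IN. Your write-up is in fact more detailed than the paper's two-sentence sketch (which simply asserts the final formula); note, however, that your stage-aggregation argument naturally yields the weight $(1-p(h))^{c_h}$---the probability of no hit through stage $h$---whereas the statement as printed has $1-(1-p(h))^{c_h}$, a complement that the paper's own proof does not derive and that appears to be a typographical slip rather than a different argument.
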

\begin{proof}
First, when no other tuple $t \in D$ shares the same public-attribute value-combination as $v$ (i.e., $\min_{t \in D, t \neq v} d^A(v, t) > 0$), then as we discussed in the design of Q\&I-IN, only one query (with point-predicates and IN-predicates on all public and private attributes, respectively) is required. For other cases, in analogy to the proof of Theorem~\ref{thm:feb}, one can see that the probability for a randomly generated query $q$ with $q[A_i] = v[A_i]$ for all $i \in [1, m]$ and $h$ point-query predicates specified on private attributes to return $v$ is
\begin{align}
p(h) = \prod_{t \in D, t \neq v} \left(\frac{1}{2} + \frac{1}{2} \cdot \mathrm{erf}\left(\frac{d^A(v, t)}{\sqrt{2\sum^h_{i=1} w^{\prime2}_i \cdot \frac{|V^\mathrm{B}_i| - 1}{|V^\mathrm{B}_i|^2}}}\right)\right)
\end{align}
Since the total number of such queries is $c^h$, and the overall query cost after enumerating queries with $h$ or fewer predicates is (i.e., on $B_1$, $\{B_1, B_2\}$, $\ldots$, $\{B_1, \ldots, B_h\}$, as specified in Q\&I-IN) is $c^{h+1}$, the expected number of queries required by Q\&I-IN is at most $\sum^{m^\prime - 1}_{h=1} (c_{h+1} \cdot (1 - (1-p(h))^{c_h})$.
\end{proof}

One can see from the theorem the substantial promise for IN queries to significantly reduce the query cost - not only the query cost can be cut to 1 when no other tuple shares the same public-attribute value-combination as $v$, but the value of $p(h)$ - i.e., the probability for a query with $h$ point-predicates on private attributes to return $v$ - actually decreases with $h$. As such, the query cost is likely much smaller than Q\&I-Point, especially when the number of public attributes $m$ is large (which leads to a large $d^A(v, t)$).

\subsection{Q-only adversary}

Just like the availability of IN queries does not help reduce the worst-case query cost for Q\&I-adversaries, it cannot change the (in)feasibility result for Q-only adversaries either. To understand why, consider a database with the aforementioned linear ranking function and all tuples sharing the same value on $B_1$. Clearly, the returned tuples will be of the same order regardless of what range the query specifies on $B_1$. Thus, there is no way for a Q-only adversary to infer which value in $v[B_1]$ all tuples take - proving that Q-only adversaries cannot guarantee the success of rank-based inference even for IN query interfaces. Nonetheless, as we shall show in this subsection and in the experimental results, the availability of IN queries does help with reducing the query cost in practice, especially when the number of public attributes is large.

Algorithm~\ref{alg:qRange} 
depicts the pseudocode for Algorithm Q-IN, which enables a Q-only adversary to launch our rank-based inference attack on $v[B_1]$ over an IN query interface. We start with calling Algorithm FIND-Q to find one query $q$ which returns $v$. Note that, according to the design of FIND-Q, $q$ always has $q[A_i] = v[A_i]$ for all $i \in [1, m]$.
After obtaining $q$, Algorithm Q-IN issues $|V^\mathrm{B}_1|$ queries $f_0(q)$, \ldots, $f_{|V^\mathrm{B}_1| - 1}(q)$ defined in the same way as in \S\ref{sec:pq} - i.e., while all these queries are exactly the same as $q$ on $A_1$, $\ldots$, $A_m$, $B_2$, $\ldots$, $B_{m^\prime}$, there is $f_i(q)[B_1] = i$ for all $i \in [0, |V^\mathrm{B}_1| - 1]$. Similar to the discussion in Algorithm Q-Point, one can see that at least one of these queries must return $v$, and the attack is already successful if only one of them does. If more than one returns $v$, we can exclude from consideration those values corresponding to queries that do not return $v$, and then gradually revise $q$ according to the following procedure.

Specifically, we start with revising $q$ to $q_1, \ldots, q_m$ by changing the predicate of $q_i$ on $A_i$ to ($A_i$ IN $V^\mathrm{A}_i$). For each $q_i$ which returns $v$, we repeat the above process and issue $f_j(q_i)$ for each $j \in [0, |V^\mathrm{B}_1| - 1]$ that is not yet excluded as a possible value of $v[B_1]$. Once again, this either directly reveals $v[B_1]$ or further excludes additional values from consideration. If we still cannot pin down $v[B_1]$ after enumerating $q_1, \ldots, q_m$, we consider the process by setting an additional public attribute to its entire domain. For example, if $q_1$ returns $v$, we construct $q_{1,x_1}, \ldots, q_{1,x_h}$, such that (1) $q_{x_1}, \ldots, q_{x_h}$ also return $v$, and (2) $q_{1,i}$ is the same as $q_1$ on all attributes but $A_i$, for which there is $q_{1,i}[A_i] = V^\mathrm{A}_i$. We repeat this value-exclusion process until finding the exact value of $v[B_1]$, or when we have exhausted all combinations of public attributes - at which time we move back to Algorithm FIND-Q, find another query $q$ which returns $v$, and attempt the revision process again.


\begin{algorithm}[!htb]
\caption{{\bf Q-IN}}
\begin{algorithmic}[1]
\label{alg:qRange}
\STATE {\bf Input:} $v$ \qquad {\bf Output:} $v[B_1]$
\STATE {\bf while} some query returns $v$ {\bf do}
    \STATE \hspace{\algorithmicindent} $q \leftarrow$ FIND-Q($v, H_v$)
    \STATE \hspace{\algorithmicindent} {\bf for} $i=0$ to $m$ {\bf do}
        \STATE \hspace{\algorithmicindent} \hspace{\algorithmicindent}  {\bf for} each ${m \choose i}$ combination of $C$ of $\{A_1, \ldots, A_m\}$ {\bf do}
            \STATE \hspace{\algorithmicindent} \hspace{\algorithmicindent}  \hspace{\algorithmicindent}  
                    $q' \leftarrow q$; \qquad  $q[A_{i'}] = V_{i'}^A \quad \forall A_{i'} \in C$
            \STATE \hspace{\algorithmicindent} \hspace{\algorithmicindent}  \hspace{\algorithmicindent}  Construct queries $f_0(q') \ldots f_{|V_1^B|-1}(q')$
            \STATE \hspace{\algorithmicindent} \hspace{\algorithmicindent}  \hspace{\algorithmicindent}  {\bf if} only $f_j(q')$ returned $v$ then return $j$ as $V[B_1]$
            \STATE \hspace{\algorithmicindent} \hspace{\algorithmicindent}  \hspace{\algorithmicindent}  Exclude query nodes that did not return $v$
\end{algorithmic}
\end{algorithm}


One can see from the design of Algorithm Q-IN that its worst-case query cost is the same as Q-Point, i.e., $O(\prod^{m}_{i=1}|V^\mathrm{A}_i| \cdot \prod^{m^\prime}_{i=1}|V^\mathrm{B}_i|)$. For the running example and a database where each tuple is generated i.i.d.~uniformly at random, we have the following results:
\newtheorem{corollary}{Corollary}
\begin{corollary} \label{thm:rlbc}
In the above scenario, given $q$ from FIND-Q which (1) has point-predicates on $S \subseteq \{A_1, \ldots, A_m\}$, (2) has point-predicates on $B_1$ and $S^\prime \subseteq \{B_2, \ldots, B_{m^\prime}\}$, and (3) returns $v$, the probability (taken over the randomness of database $D$) for Q-IN to infer $v[B_1]$ after issuing only $|V^\mathrm{B}_1|$ queries is at least
\begin{small}
\begin{align}
\left(1 - \prod_{t \in D, t \neq v} \frac{1 + \mathrm{erf}\left(\frac{(d^S(v, t) - w^\prime_1)}{\sqrt{2\sum_{i: B_i \in S^\prime} (w^{\prime2}_i \cdot (|V^\mathrm{B}_i| - 1) / |V^\mathrm{B}_i|^2)}}\right)}{1 + \mathrm{erf}\left(\frac{d^S(v, t)}{\sqrt{2\sum_{i: B_i \in S^\prime} (w^{\prime2}_i \cdot (|V^\mathrm{B}_i| - 1) / |V^\mathrm{B}_i|^2)}}\right)}\right)^{|V^\mathrm{B}_1| - 1}
\end{align}
\end{small}
\end{corollary}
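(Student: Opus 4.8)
The plan is to mirror the proof of Theorem~\ref{thm:rlb}, using one structural observation: an IN predicate equal to an entire domain adds nothing to the ranking distance of any tuple, so only the point-predicate attributes of $q$ matter. Since $q$ carries point predicates exactly on the public attributes in $S$ and on $B_1$ together with the private attributes in $S^\prime$, and entire-domain predicates elsewhere, we have $d(q,t) = d^S(q,t) + w^\prime_1\,\rho(q[B_1],t[B_1]) + \sum_{i:\,B_i\in S^\prime} w^\prime_i\,\rho(q[B_i],t[B_i])$ for every $t$, because $\rho(V^\mathrm{A}_i,\cdot)=\rho(V^\mathrm{B}_i,\cdot)=0$; and as $q$ agrees with $v$ on every attribute of $S$, the public part equals $d^S(v,t)$.

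First I would reuse the distributional analysis in the proof of Theorem~\ref{thm:feb}: over the randomness of $D$ (i.i.d.\ uniform tuples), the private part of $d(q,t)$ coming from $S^\prime$ is a sum of independent bounded terms of total variance $\sigma^{\prime2}=\sum_{i:\,B_i\in S^\prime} w^{\prime2}_i(|V^\mathrm{B}_i|-1)/|V^\mathrm{B}_i|^2$, and a tuple $t\neq v$ outranks $v$ under $q$ only if that private part ``covers'' the public gap $d^S(v,t)$. By the same normal approximation, the expected fraction of such queries that return $v$ is $p=\prod_{t\neq v}\!\bigl(\tfrac12+\tfrac12\,\mathrm{erf}(d^S(v,t)/\sqrt{2\sigma^{\prime2}})\bigr)$. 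Then, exactly as in Theorem~\ref{thm:rlb}, changing $q[B_1]$ to another value of $V^\mathrm{B}_1$ shifts $d(q,\cdot)$ by at most $w^\prime_1$, so after such a flip a tuple $t$ can outrank $v$ only if its $S^\prime$-attributes cover at least $d^S(v,t)-w^\prime_1$; discarding the non-negative contribution of the flipped $B_1$ term gives $p^\prime\le\prod_{t\neq v}\!\bigl(\tfrac12+\tfrac12\,\mathrm{erf}((d^S(v,t)-w^\prime_1)/\sqrt{2\sigma^{\prime2}})\bigr)$ for the fraction of queries that return $v$ both before and after a fixed flip.

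Finally I would assemble the bound. Q-IN pins down $v[B_1]$ from the $|V^\mathrm{B}_1|$ queries $f_0(q),\dots,f_{|V^\mathrm{B}_1|-1}(q)$ precisely when only one of them returns $v$, i.e.\ when every one of the $|V^\mathrm{B}_1|-1$ flips $\theta\in V^\mathrm{B}_1\setminus v[B_1]$ returns some other tuple; since these events are non-negatively correlated over the randomness of $D$, their joint probability is at least $(1-p^\prime/p)^{|V^\mathrm{B}_1|-1}$. Substituting the two bounds above makes the $\tfrac12$'s cancel so that $p^\prime/p$ simplifies to the product of erf-ratios displayed in the statement, yielding the claim. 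I would also note the harmless point that this event forces $q[B_1]=v[B_1]$ (otherwise $q$ itself would be one of the ``other-tuple'' flips yet returns $v$), so the unique surviving query indeed identifies $v[B_1]$ correctly.

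The step I expect to be the main obstacle is the same as in Theorem~\ref{thm:rlb}: making the Gaussian approximation to the discrete, finite-domain distance distribution precise and uniform over all $t\in D$, and converting ``each individual flip fails to keep $v$'' into the joint statement --- which I would handle through the positive-correlation observation rather than a literal independence claim. Everything else is a direct transcription of the Q-Point argument with $d^A(v,t)$ replaced by $d^S(v,t)$ and the private-attribute variance restricted to $S^\prime$.
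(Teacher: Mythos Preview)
Your proposal is correct and takes essentially the same approach as the paper: the paper's proof is the single sentence ``The corollary follows directly from Theorem~\ref{thm:rlb},'' and what you have written is precisely the spelled-out version of that claim---observe that entire-domain IN predicates contribute nothing to the ranking distance, so one may rerun the Theorem~\ref{thm:rlb} argument verbatim with $d^A(v,t)$ replaced by $d^S(v,t)$ and the private-attribute variance restricted to $S^\prime$. Your extra remarks on positive correlation and on the Gaussian approximation are refinements the paper itself leaves implicit even in the proof of Theorem~\ref{thm:rlb}, so they do not constitute a different route.
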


The corollary follows directly from Theorem~\ref{thm:rlb}. We can observe from the theorem the substantial promise for IN queries to significantly reduce the query cost - specifically, note that the smaller $S$ or $S^\prime$ is, the higher this expected ratio will be. As such, the overall query cost is likely much smaller than Q-Point.

\section{Discussions} \label{sec:ext}
\subsection{Numeric Attributes}
\vspace{1mm}
\noindent{\bf Attack Precision for Numeric Private Attribute:} In the original problem definition discussed in \S\ref{sec:pre}, we consider an attack to succeed if and only if the adversary unveils the exact value of a (Boolean or categorical) private attribute. For a numeric (private) attribute, however, it becomes more complex to measure the success of an attack. Specifically, as we shall demonstrate as follows, while there are cases where an adversary can able infer a numeric attribute value to an arbitrary precision, there are also cases where the precision is limited to a (small) fixed range. Nonetheless, either case still represents serious compromise of user privacy.

Interestingly, whether an adversary can infer a numeric attribute $B_i$ to arbitrary precision depends on the {\em ranking function}, specifically the definition of $s(q[B_i]|t[B_i])$, used by the query interface. If the query interface allows a range to be specified for each attribute, and the ranking function simply assigns $s(q[B_i]|t[B_i]) = 0$ if $t[B_i] \in q[B_i]$ and $1$ otherwise, then any adversary which can successfully launch the attack (i.e., finding $q_1$ and $q_2$ which only differ on $B_1$ yet return $t$ at different ranks) can always infer $t[B_i]$ to any precision level (by continuously shrinking $q[B_1]$) as long as the interface allows an arbitrarily small range to be specified in the query. On the other hand, if the interface is point-query only and the ranking function is $s(q[B_i]|t[B_i]) = |q[B_i] - t[B_i]|$ (or with range-query allowed and $s(q[B_i]|t[B_i])$ being the difference between $t[B_i]$ and the center point of $q[B_i]$) with precision set to two digits after decimal point, then clearly no adversary can infer $v[B_1]$ beyond a precision level of 0.01.

Given the wide variety of ranking functions a query interface might feature, and the fact that even a fairly wide interval on $B_1$ (as long as it is significantly narrower than the entire domain) is usually a significant threat to privacy in practice, discussing the achievable precision for each type of interfaces is beyond the scope of this paper. Instead, we make an assumption that numeric attributes can be properly discretized (and treated as a categorical one) according to two principles: (1) the discretized range is narrow enough so each tuple has a unique value combination of all attributes, and (2) the range should be as wide as possible, so as to minimize $|V^\mathrm{A}_i|$ and $|V^\mathrm{B}_i|$, thereby minimizing the query cost of the attack.

\subsection{Defense Against Rank-based Inference}

Since our main objective here is to unveil a novel rank-based inference attack on web databases, a comprehensive discussion of defense methodologies is beyond the scope of this paper. Nonetheless, we would like to briefly describe a few simple defense strategies, and discuss how the analysis of various algorithms in the paper might shed lights on the design of defense.

An obvious defense methodology is to enforce more stringent {\em practical constraints} discussed in the paper - e.g., requiring a user to answer a CAPTCHA challenge \cite{von2003captcha} before issuing each query, performing rigid authentication for each tuple insertion/update operation, etc. Another possible strategy here is to {\em delay} any new tuples from appearing in query answers. As one can see from the design of Q\&I-Point and Q\&I-IN, this delay may significantly prolong the amount of time a Q\&I-adversary needs to launch the attack. However, it is important to note that all defense strategies in this category are essentially making a tradeoff between privacy protection and the {\em convenience} of bona fide users, and therefore must be designed and implemented carefully (e.g., after user studies).

Another category of defense is to adjust the assignment of public/private attributes and/or the design of ranking function. Recall from the discussions of Q\&I-Point and Q-Point that the more attributes the database owner assigns to be private, and the higher weights the ranking function assigns on private attributes, the more difficult it is for an adversary to launch the attack, as the prior knowledge held by an adversary on the victim tuple (i.e., $v[A_1]$, $\ldots, v[A_m]$) now plays a lesser role on determining the rank, making it harder for the adversary to efficiently locate the victim tuple. 

Nonetheless, this strategy does not work as effectively on an IN-query interface. To understand why, note from the design of Q\&I-IN that, as long as the public attributes are sufficient for uniquely identifying the victim tuple, a Q\&I-adversary can succeed with $O(\sum_{i=1}^{m\prime} |V_i^B|)$ queries no matter how much weight the ranking function places on the private attributes. In this case, the defender can choose to publicize fewer attributes (if doing so prevents an adversary from learning these attribute values for the victim tuple), or disabling IN-query predicates on certain attributes. As we discussed in \S\ref{sec:rq}, the reduction of IN-query predicates may significantly delay the attack in the average-case scenario.


\begin{figure*}[ht]
\begin{minipage}[t]{0.25\linewidth}
\centering
\includegraphics[width = 55mm, height = 32mm]{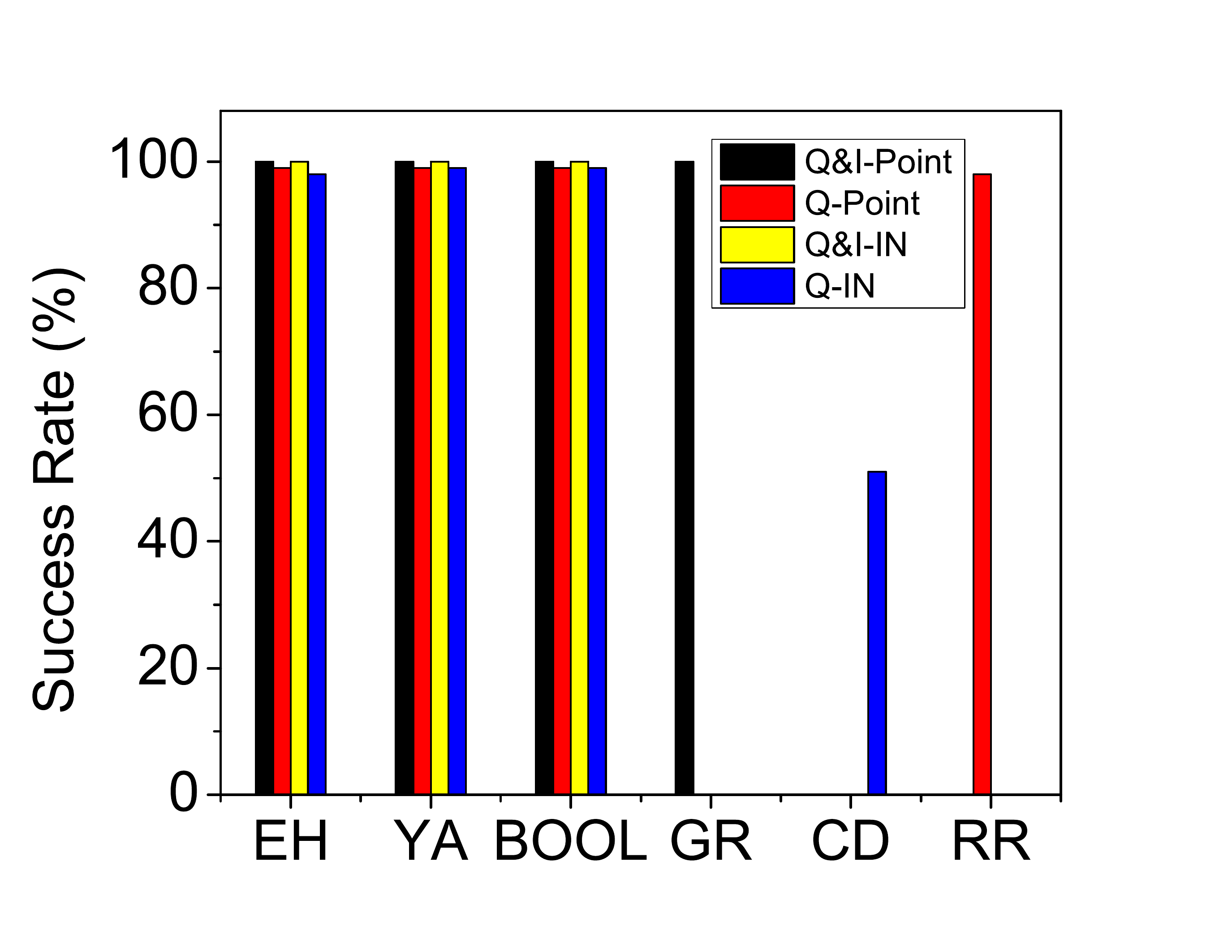}
\vspace{-7mm}\caption{Attack Success Rate}
\label{fig:attackSuccessRate}
\end{minipage}
\hspace{1mm}
\begin{minipage}[t]{0.23\linewidth}
\centering
\includegraphics[width = 45mm, height = 32mm]{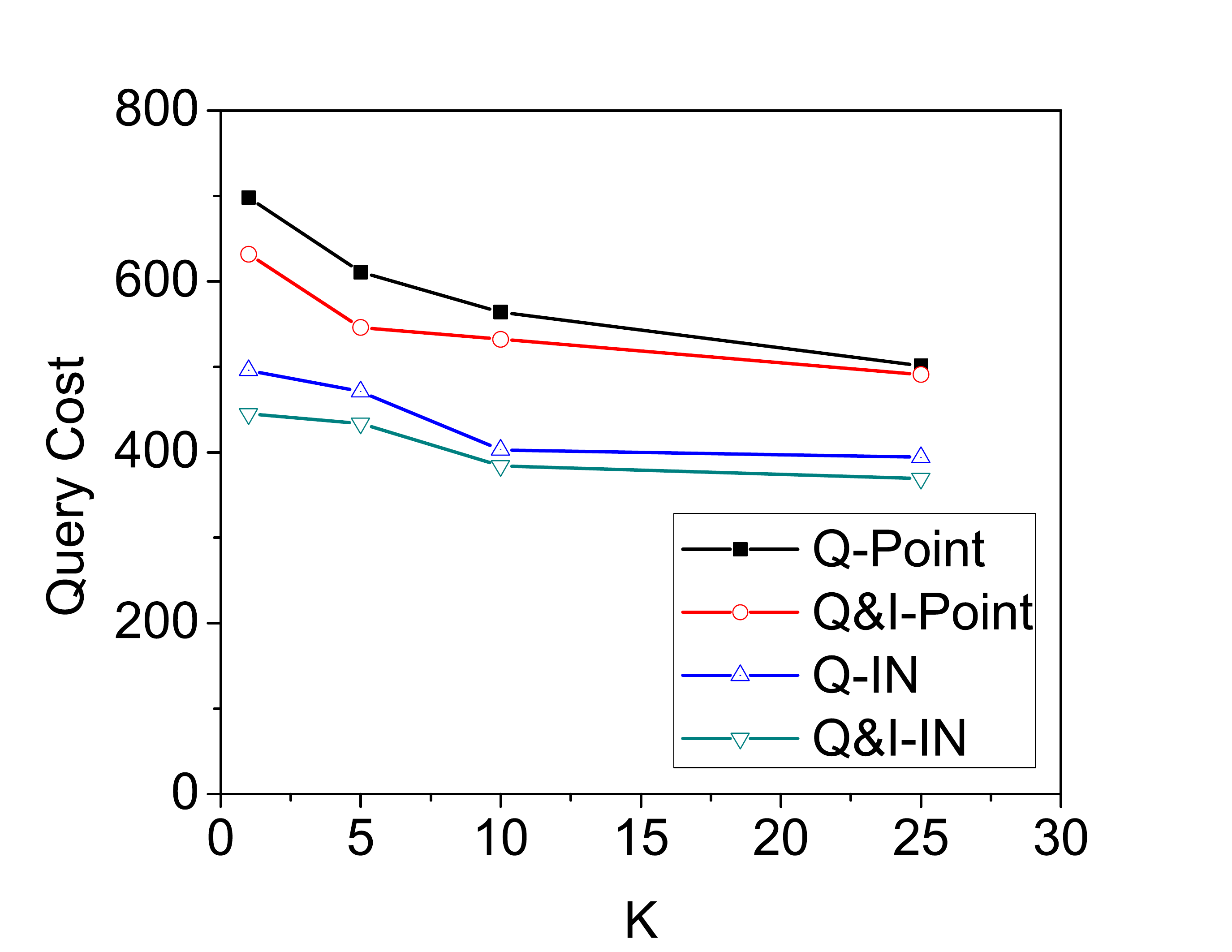}
\vspace{-7mm}\caption{Varying $k$}
\label{fig:eh_kVsQueryCost}
\end{minipage}
\hspace{1mm}
\begin{minipage}[t]{0.23\linewidth}
\centering
\includegraphics[width = 45mm, height = 32mm]{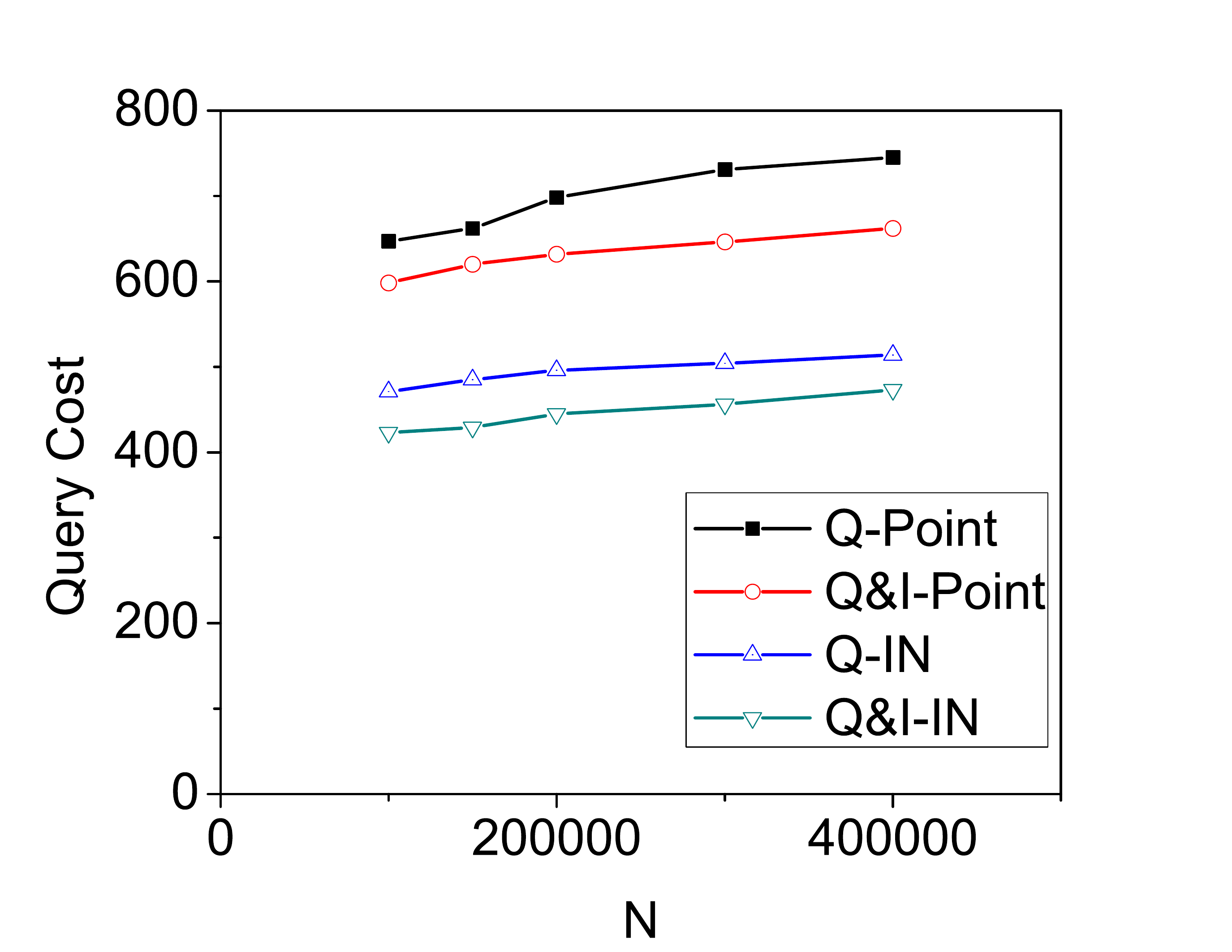}
\vspace{-7mm}\caption{Varying $n$}
\label{fig:eh_NVsQueryCost}
\end{minipage}
\hspace{1mm}
\begin{minipage}[t]{0.23\linewidth}
\centering
\includegraphics[width = 45mm, height = 32mm]{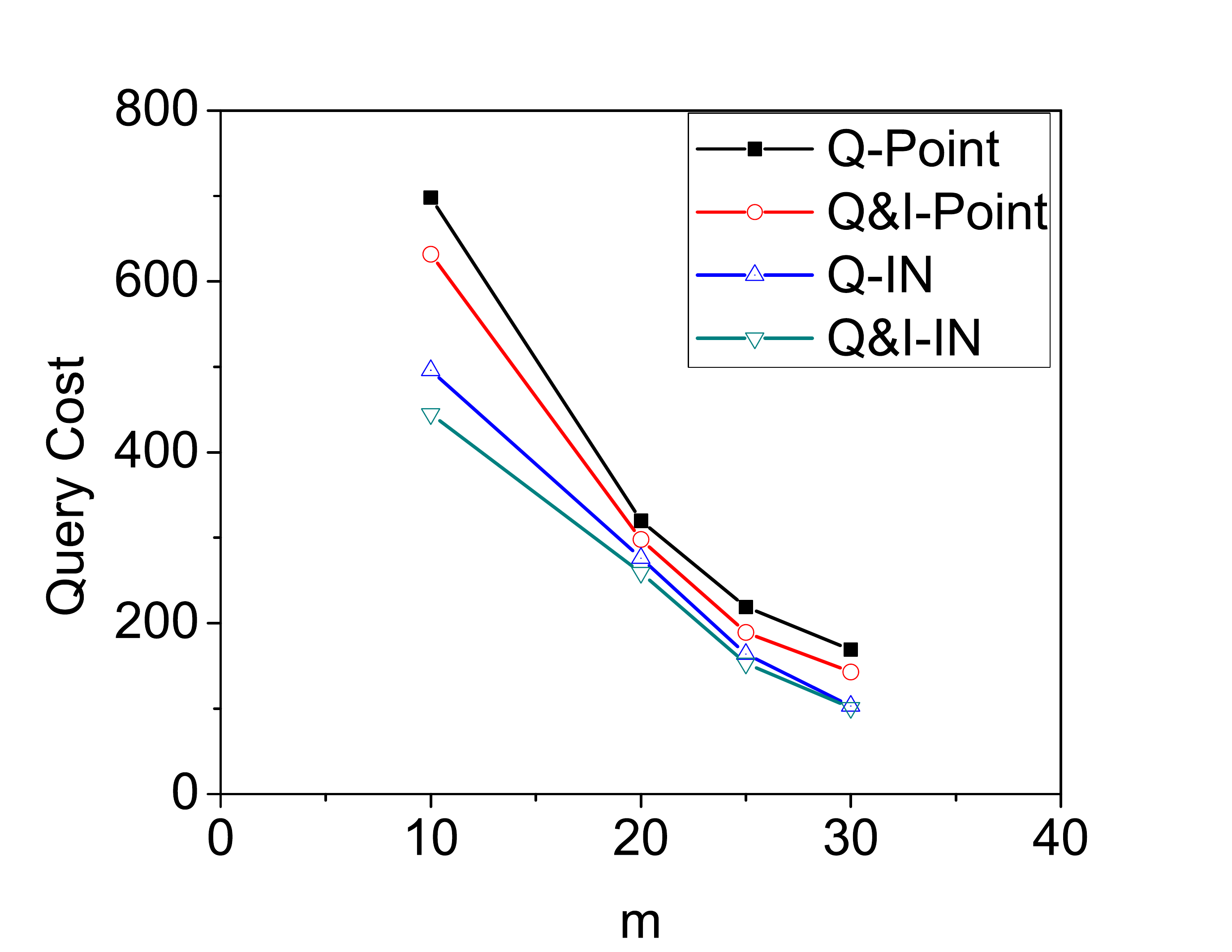}
\vspace{-7mm}\caption{Varying $m$}
\label{fig:eh_mVsQueryCost}
\end{minipage}
\hspace{-2mm}
\end{figure*}

\begin{figure*}[t]
\begin{minipage}[t]{0.23\linewidth}
\centering
\includegraphics[width = 50mm, height = 32mm]{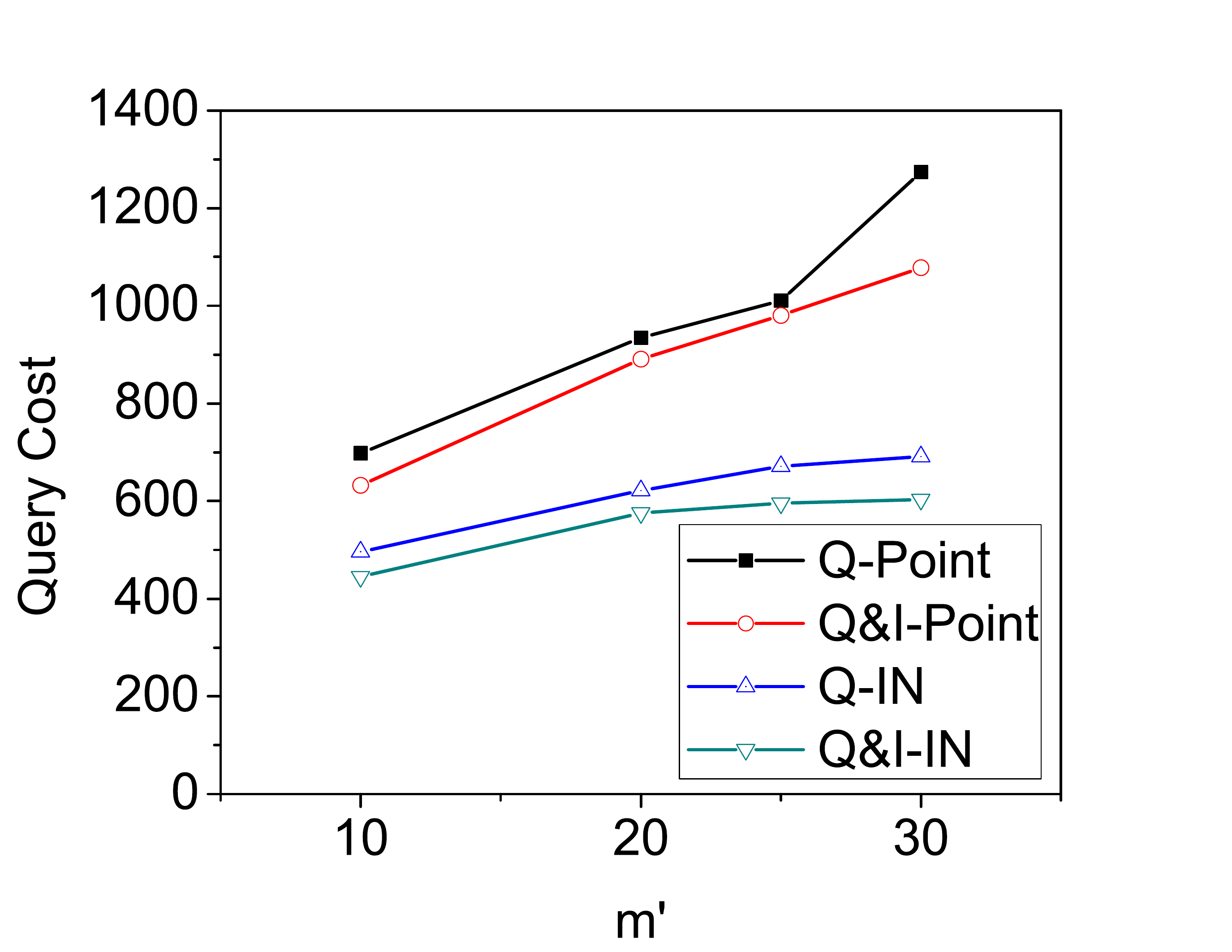}
\vspace{-7mm}\caption{Varying $m^\prime$}
\label{fig:eh_mPrimeVsQueryCost}
\end{minipage}
\hspace{1mm}
\begin{minipage}[t]{0.23\linewidth}
\centering
\includegraphics[width = 50mm, height = 32mm]{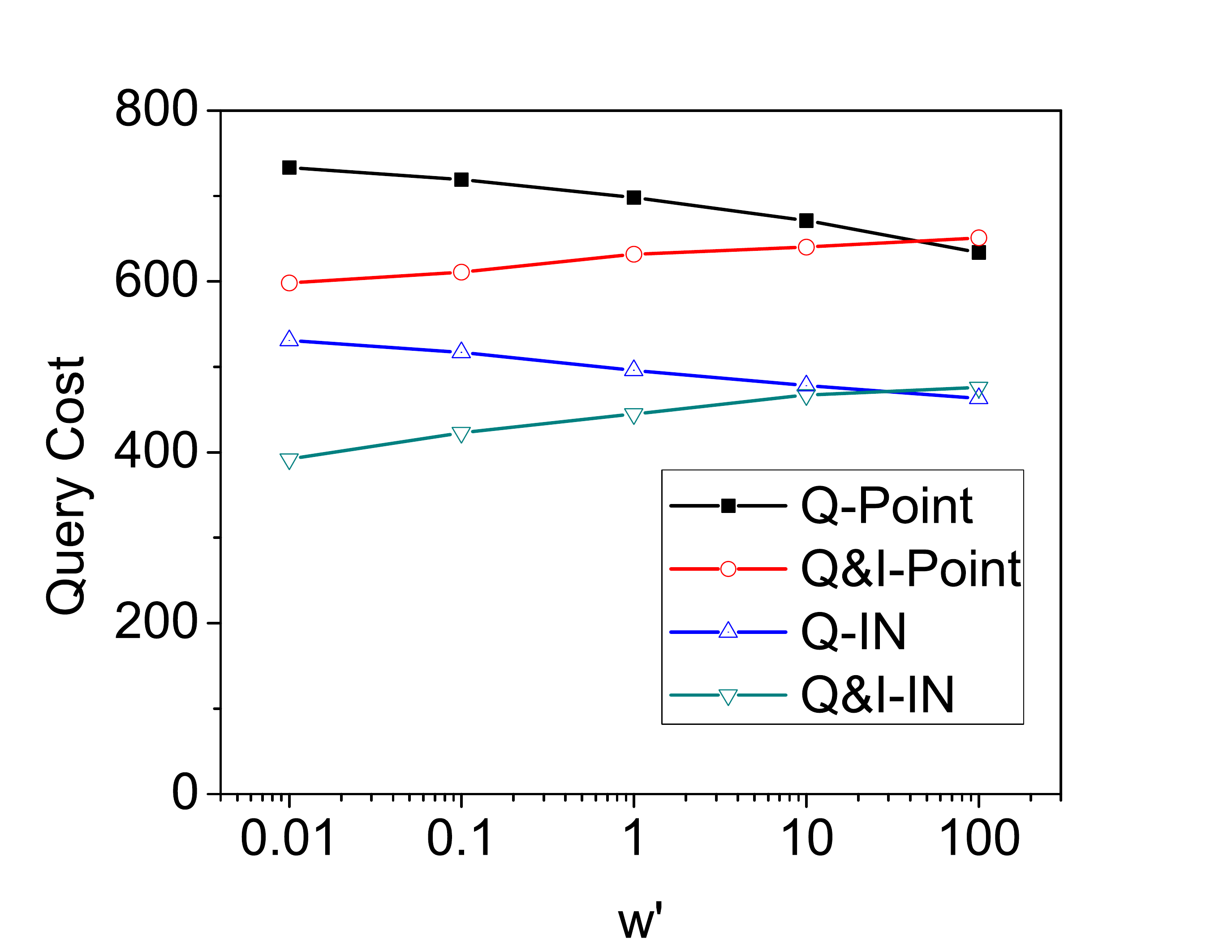}
\vspace{-7mm}\caption{Varying $w^{\prime}_1$}
\label{fig:eh_wPrimeVsQueryCost}
\end{minipage}
\hspace{1mm}
\begin{minipage}[t]{0.23\linewidth}
\centering
\includegraphics[width = 50mm, height = 32mm]{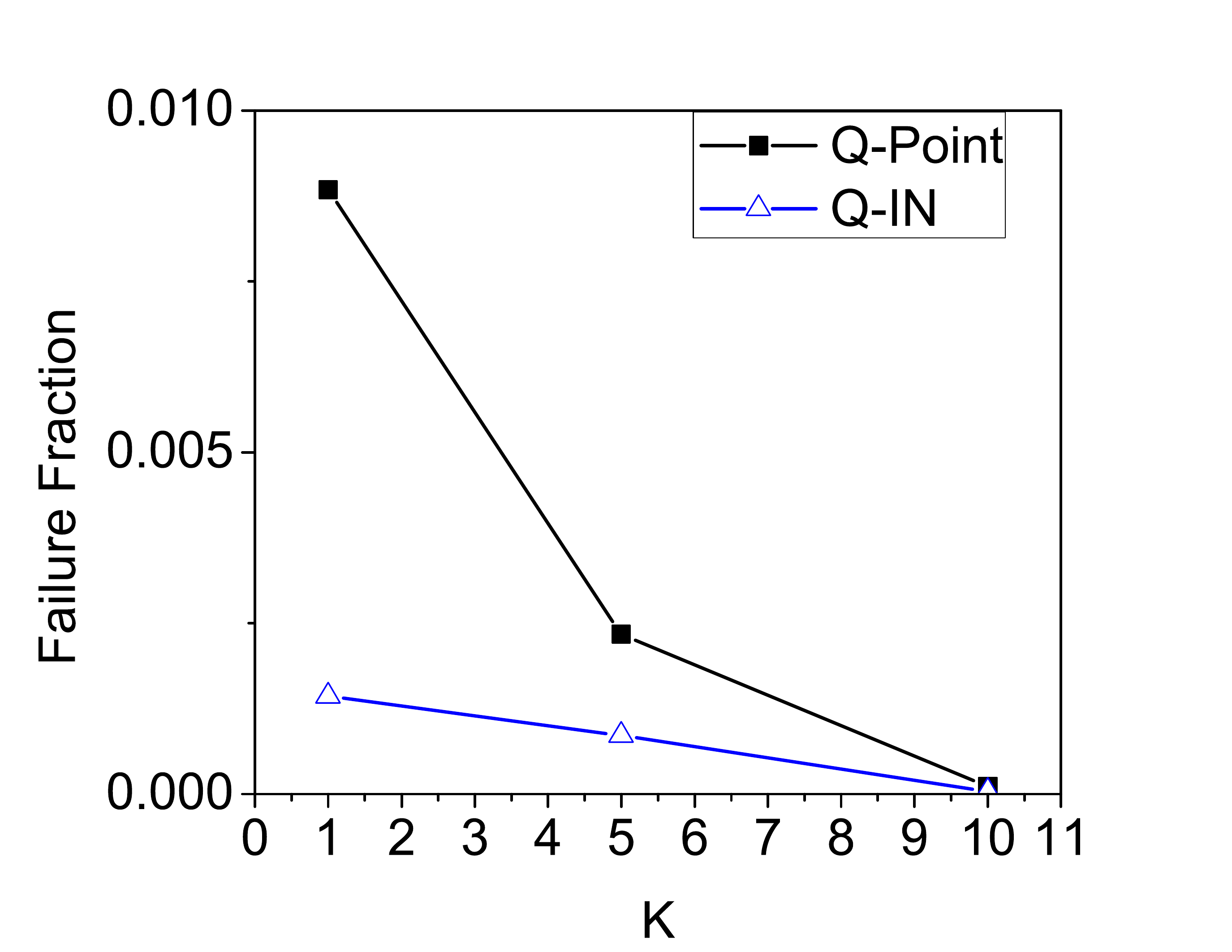}
\vspace{-7mm}\caption{Fraction of Uncompromised Accounts }
\label{fig:eh_fracSuccQOnly}
\end{minipage}
\hspace{1mm}
\begin{minipage}[t]{0.23\linewidth}
\centering
\includegraphics[width = 50mm, height = 32mm]{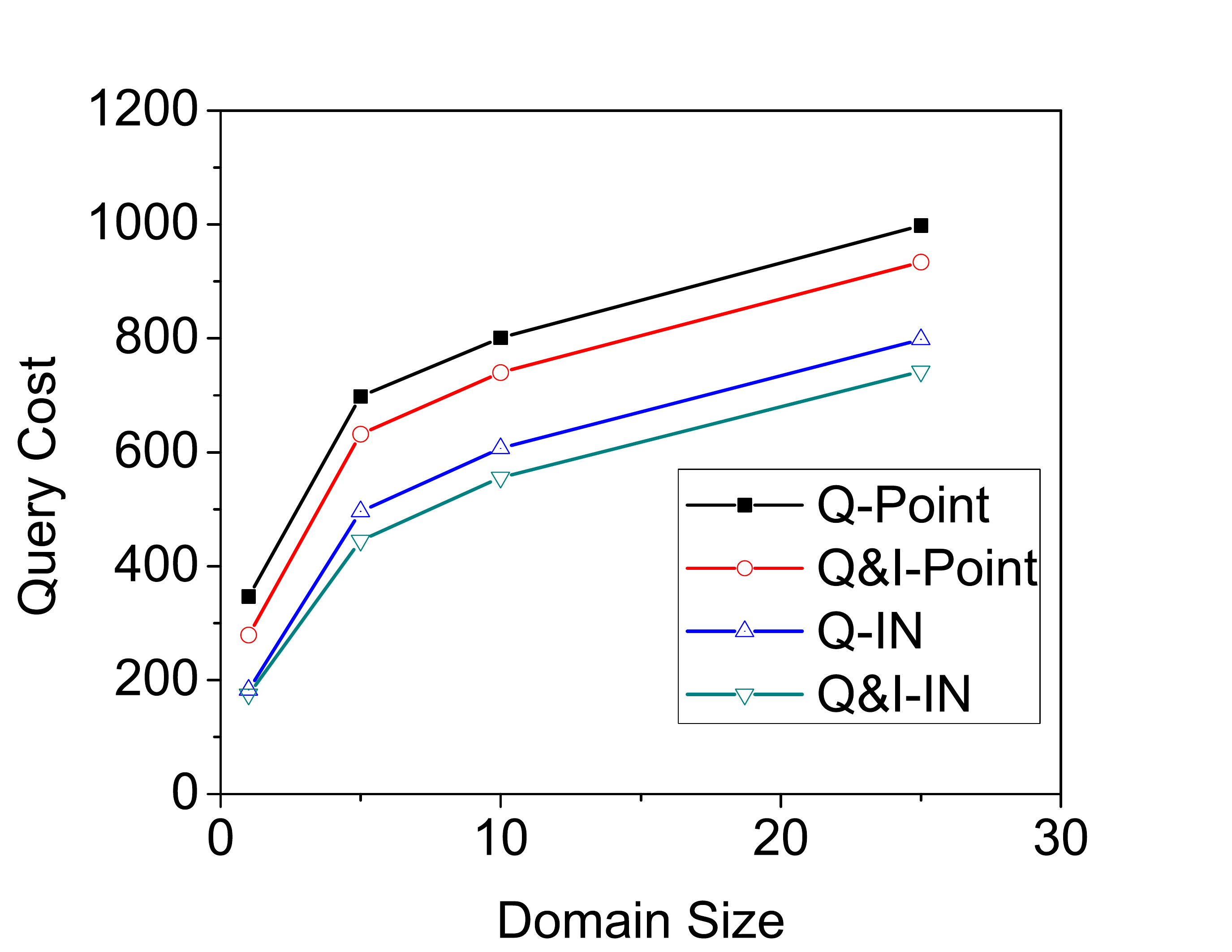}
\vspace{-7mm}\caption{Varying Domain Size of Inferred Attribute} 
\label{fig:eh_domainSizeVsQueryCost}
\end{minipage}
\hspace{-2mm}
\end{figure*}

\section{Experimental Results} \label{sec:exp}
\subsection{Experimental Setup}
\noindent{\bf Hardware and Platform:} 
All our experiments were performed on a quad-core 2 GHz AMD Phenom machine running Ubuntu 14.04 with 8 GB of RAM. 
The algorithms were implemented in Python.

\vspace{1mm}
\noindent {\bf Offline Datasets:} 
To verify the correctness of our results, we started by testing our algorithms locally over two real-world and two synthetic datasets. We have full access to these datasets, along with full control of the ranking function used. One dataset is from {\em eHarmony} \cite{eHarmony}, a prominent online dating service \cite{mcfee2010metric} and consists of anonymized profile information of 500K users. Each user has 56 attributes, of which more than 30 are boolean. The second dataset is {\em Yahoo!~Autos}, which contains 200K used cars for sale in the Dallas-Fort Worth area with 32 Boolean attributes and 6 categorical attributes, the domain cardinalities of which vary from 5 to 447. The third dataset is a synthetic Boolean i.i.d.~dataset with 200K tuples and 40 attributes, each following the uniform distribution.
The final synthetic dataset has 50 attributes with an average domain size of 150 with a Zipfian distribution ($z=2$).

The public and private attributes were randomly chosen from the set of available attributes. By default, we randomly picked 20 attributes for testing, designated $m=10$ of them as public and $m'=10$ as private, while varying $m$ and $m^\prime$ between 10 and 30 in various tests. Target attribute $B_1$ was chosen uniformly at random from all private attributes. Unless otherwise specified, we used the ranking function from (\ref{equ:scr}) with all weights set to 1.

\vspace{1mm}
\noindent {\bf  Online Demonstration:} In order to demonstrate the success of our attacks over real-world websites, we selected three high-profile real-world websites - Renren.com, Amazon Goodreads, Catch22Dating - and conducted live experiments using our algorithms. We would like to note that, without a partnership with these websites, we do not possess/assume any knowledge of their ranking function (other than the monotonicity and additivity properties defined in \S\ref{sec:pre}, which we verified through the correctness of our experiment outputs). The results of the online experiments can be found in \S\ref{subsec:expOnline}.

\vspace{1mm}
\noindent {\bf Performance Measures:} As discussed earlier in \S\ref{sec:spa}, we measure efficiency through query cost, i.e., the number of queries required for each attack - consistent with prior work\cite{dasgupta2007random,dasgupta2010unbiased}.

\subsection{Experiments over Real-World Datasets}
\label{subsec:expOffline}

\noindent {\bf Empirical Evaluation of Attack Success Rate:}
Figure~\ref{fig:attackSuccessRate} shows the attack success rate of all our algorithms over 3 offline datasets and the relevant algorithm (based on the problem subspace the website falls) over 3 online datasets. As expected, Q\&I-adversary has 100\% success rate for all datasets. For Q-only adversaries over real-world datasets, we were able to achieve a success rate of almost 100\%. The same holds for online tests except CD - the main reason here is that CD allows NULL value on the private attribute we are targeting, leading to failed attacks.

In the following discussion of offline experiments, we focus on results over eHarmony. 
The results on Yahoo!~Autos and the synthetic dataset were largely similar and detailed at the end of this section.

\vspace{1mm}
\noindent {\bf Query Cost versus $k$:} 
We first investigated the performance of our algorithms for different values of $k$.
Figure~\ref{fig:eh_kVsQueryCost} shows that query cost decreases with higher values of $k$ as expected.
Extending our algorithms for $k > 1$ is straightforward.
First, we seek to find a query that returns $v$ in top-$k$ (not just top-$1$).
Second, we extend the notion of differential queries (see \S\ref{sec:eqp}) such that the $v$ has a higher rank for query $q^{\prime}_{\theta}$ than for $q_{\theta}$. 
The query cost of our algorithms can be broadly categorized into two parts -
the query cost to identify a query $q$ that returns the victim tuple
and the query cost required to construct additional queries from $q$ through which the private attribute is inferred.
When the value of $k$ increases, the former query cost falls dramatically.
Further, the figure also shows that when IN-queries are available (Q\&I-IN and Q-IN), the query cost is lower than the cases where only point queries are allowed (Q\&I-Point and Q-Point), consistent with our discussions in \S\ref{sec:rq}.

\vspace{1mm}
\noindent {\bf Query Cost versus Database Size, $n$:}
Figure~\ref{fig:eh_NVsQueryCost} depicts the impact of database size on query cost when $k = 1$ (which is henceforth used as the default setting unless otherwise specified).
As expected, the increase in database size do not have any major impact and only results in a slight increase in overall query cost.
This is due to the fact that the number of queries needed to identify a randomly chosen tuple increases 
much more slowly than the database size.

\vspace{1mm}
\noindent {\bf Query Cost versus $m, m'$:}
In our next experiments, we investigate how varying the number of public and private attributes affect the query cost.
The results of these experiments are shown in Figures~\ref{fig:eh_mVsQueryCost} and \ref{fig:eh_mPrimeVsQueryCost}.
As expected, when the number of public attributes increase, the query cost drops significantly.
When the number of public attributes are limited, their values are not adequate to distinctly identify a random tuple.
Hence, we need to resort to using randomly chosen values for the private attributes which increases query cost.
However, when $m$ increases, most tuples become uniquely identified based on their public attributes only.
For a fixed $m$, the query cost increases with increasing $m'$ - 
 when the public attributes are inadequate for uniquely identifying the victim tuple,
our algorithms resort to issuing queries where the private attributes are chosen randomly from their respective value domains. But the number of such possible queries increases with higher $m'$ - hence the phenomenon.

\vspace{1mm}
\noindent {\bf Query Cost versus Ranking Weights:}
In this experiment, we fixed the weight of all public attributes to 1 and 
varied the weights of private attributes $w^\prime_i$ between $0.01$ and $100$.
The results shown in Figure~\ref{fig:eh_wPrimeVsQueryCost} are consistent with our theoretical results
from Sections~\ref{sec:pq} and \ref{sec:rq}.
When the weights over private attributes decrease, the query cost for Q\&I adversaries also decreases. 
This is due to the fact that identifying the query $q$ that returns the victim tuple $v$ becomes much easier for this case.
The opposite holds for Q-only adversaries where increasing the weights decreases the query cost. 


\vspace{1mm}
\noindent {\bf Other Experiments:}
In order to identify the fraction of tuples in a database that could be successfully compromised using our algorithms, 
we randomly chose 100K tuples and tried to compromise them.
Recall that the Q\&I adversary based algorithms are always guaranteed to succeed.
Figure~\ref{fig:eh_fracSuccQOnly} shows that the Q-only algorithms are able to compromise almost all the tuples.
Even with a highly restrictive interface of $k=1$, Q-Point compromises more than 99\% of the tuples. We then adapted our inference algorithms so that they seek to infer all $m^{\prime}=25$ private attributes.
Figure~\ref{fig:eh_kVsQueryCostInferAll} shows the result.
While the overall query cost seems high, the amortized query per private attribute varies between $35$ and $60$.
Figure~\ref{fig:eh_domainSizeVsQueryCost} shows how varying the domain size of the private attribute affects the query cost.
Consistent with our analysis query cost increases with larger domain size.

\begin{figure}[ht]
\centering
\begin{minipage}[h]{0.48\linewidth}
\centering
\includegraphics[width = 50mm, height = 32mm]{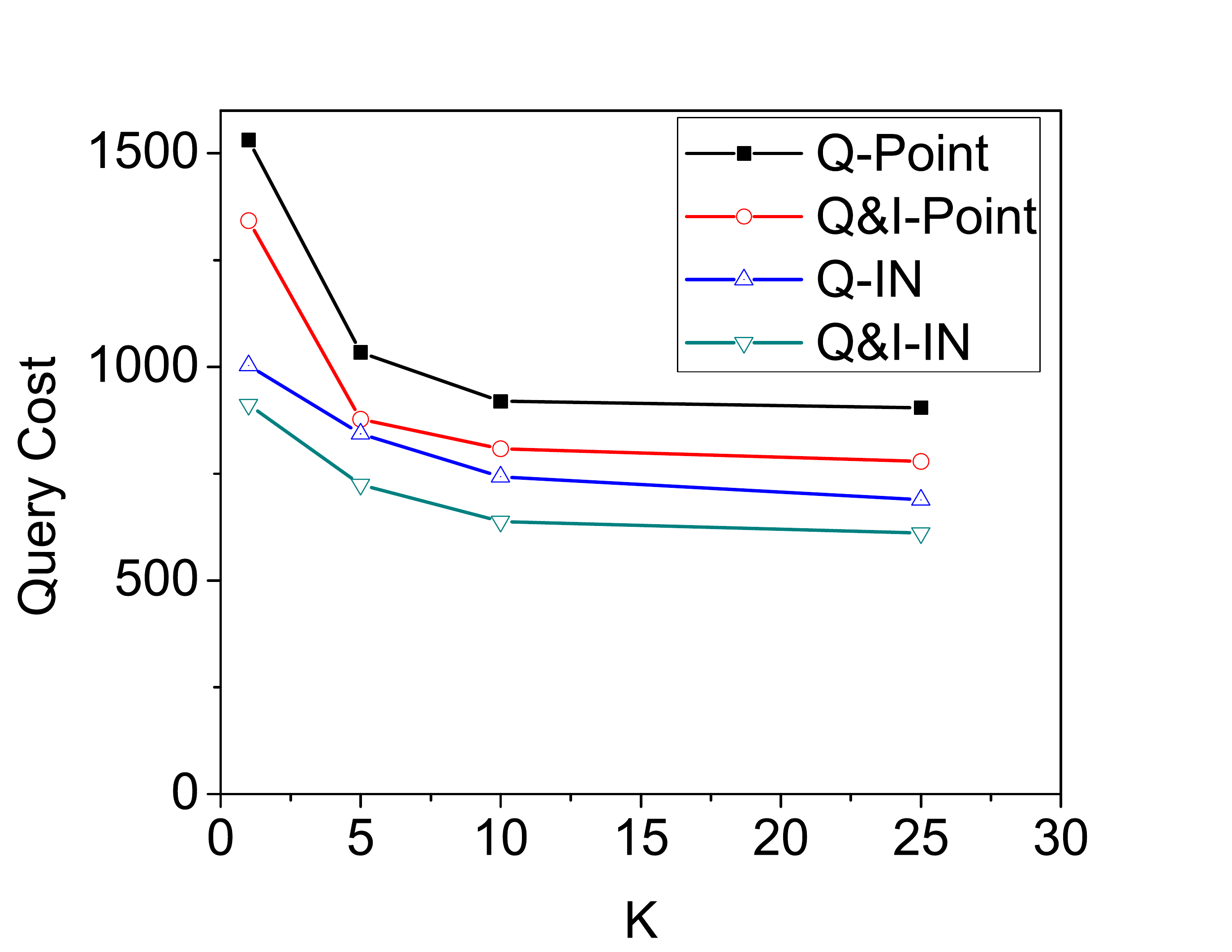}
\vspace{-7mm}\caption{Query Cost to Infer all Private Attributes}
\label{fig:eh_kVsQueryCostInferAll}
\end{minipage}
\hspace{1mm}
\begin{minipage}[h]{0.48\linewidth}
\includegraphics[width = 50mm, height = 32mm]{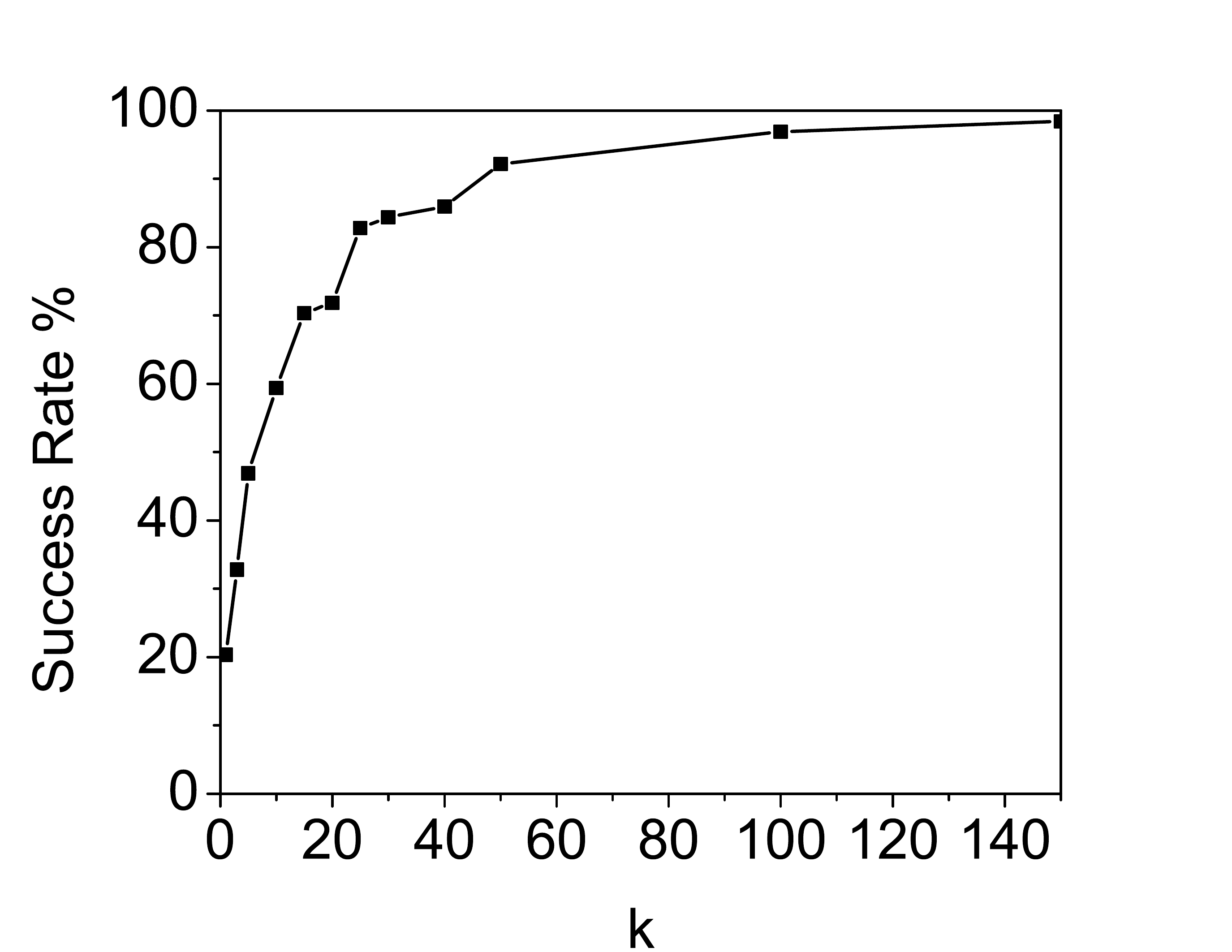}
\vspace{-7mm}\caption{Renren: Success Rate vs. $k$}
\label{fig:renrenKVsSuccRate}
\end{minipage}
\end{figure}

{\bf Experiments over Other Datasets:}
In addition to eHarmony dataset, we also conducted experiments over two other datasets, Yahoo Autos and BOOL-IID. 
The results of experiments over Yahoo Autos can be found in Figures~\ref{fig:ya_kVsQueryCost}-\ref{fig:ya_kVsQueryCostInferAll}.
The results for BOOL-IID can be found in Figures~\ref{fig:bool_kVsQueryCost}-\ref{fig:bool_kVsQueryCostInferAll}.
Finally, the results for Zipfian can be found in Figures~\ref{fig:zipf_mVsQueryCost}-\ref{fig:zipf_domainSizeAllVsQueryCost}.
We can see that the results are very similar to that of eHarmony demonstrating the wide spread applicability of our algorithms over diverse datasets.

\begin{figure*}[ht]
\begin{minipage}[t]{0.23\linewidth}
\centering
\includegraphics[width = 45mm, height = 32mm]{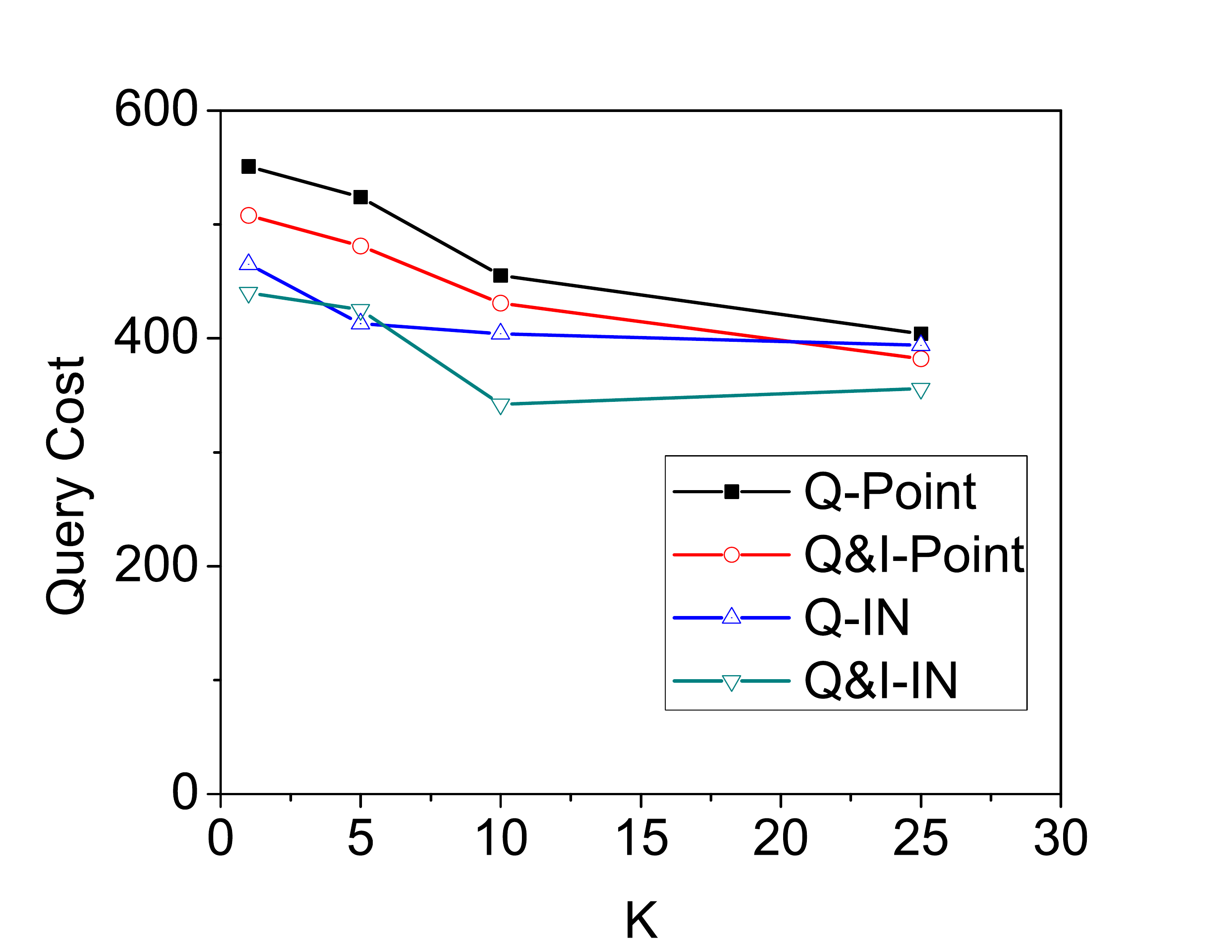}
\vspace{-7mm}\caption{Varying $k$ (YA)}
\label{fig:ya_kVsQueryCost}
\end{minipage}
\hspace{1mm}
\begin{minipage}[t]{0.23\linewidth}
\centering
\includegraphics[width = 45mm, height = 32mm]{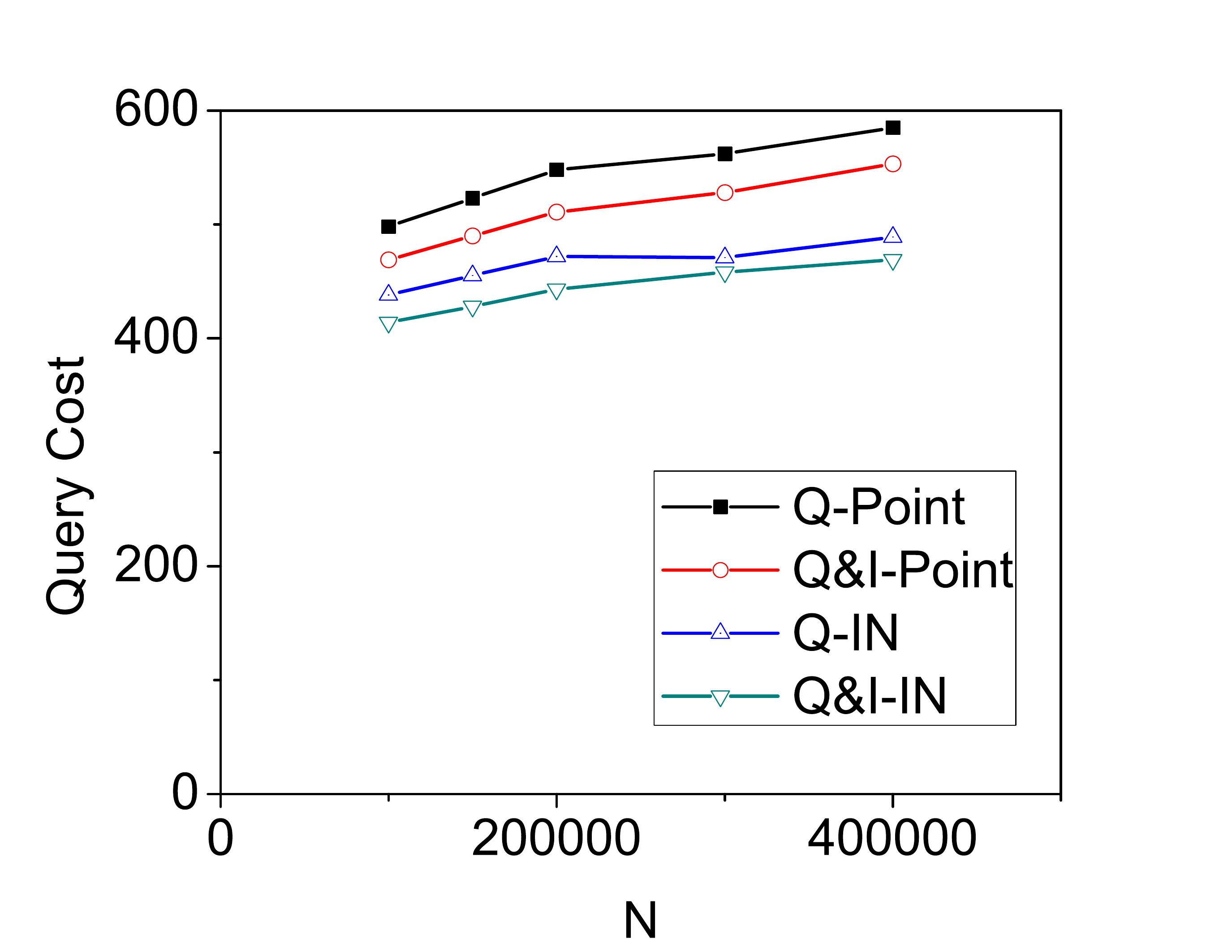}
\vspace{-7mm}\caption{Varying $n$ (YA)}
\label{fig:ya_NVsQueryCost}
\end{minipage}
\hspace{1mm}
\begin{minipage}[t]{0.23\linewidth}
\centering
\includegraphics[width = 45mm, height = 32mm]{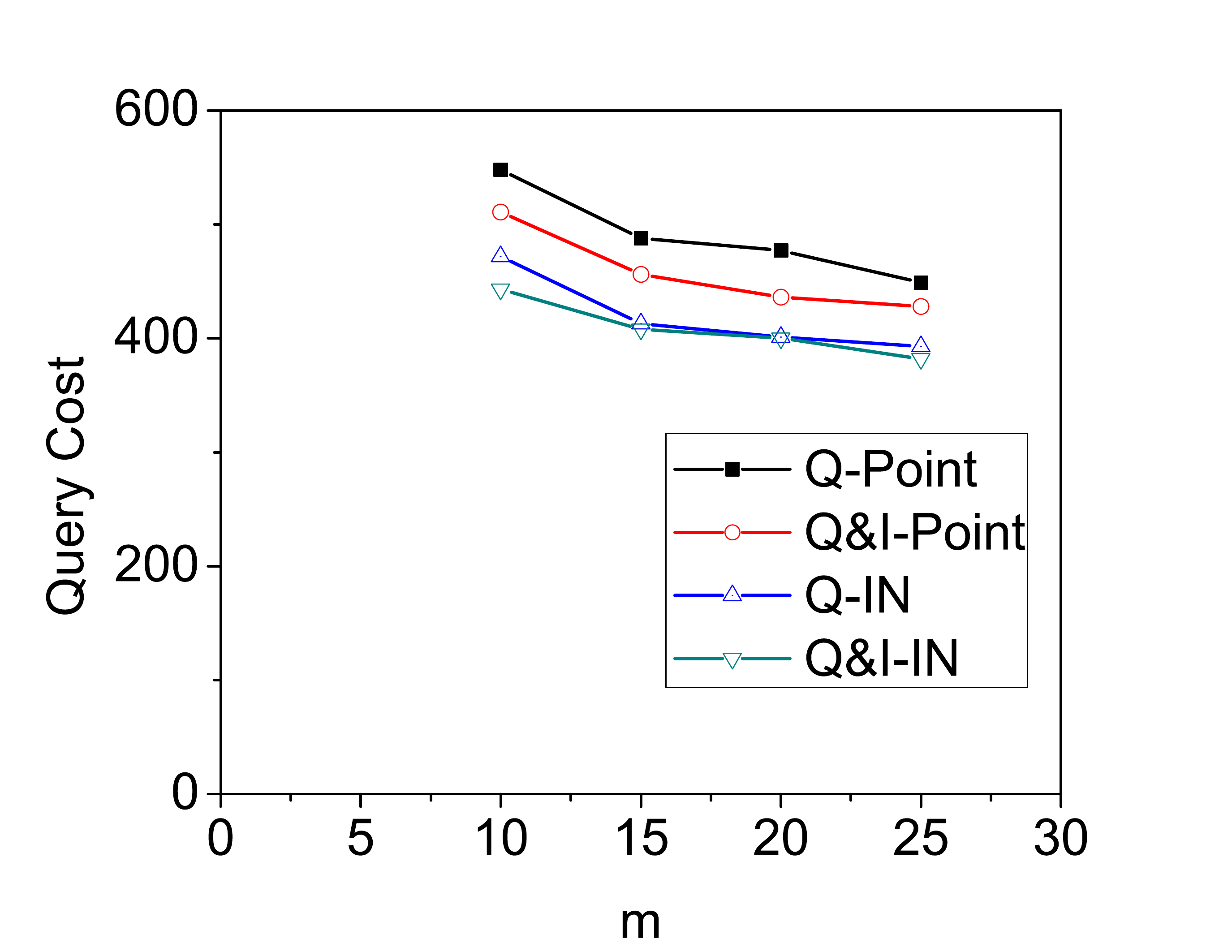}
\vspace{-7mm}\caption{Varying $m$ (YA)}
\label{fig:ya_mVsQueryCost}
\end{minipage}
\begin{minipage}[t]{0.23\linewidth}
\centering
\includegraphics[width = 50mm, height = 32mm]{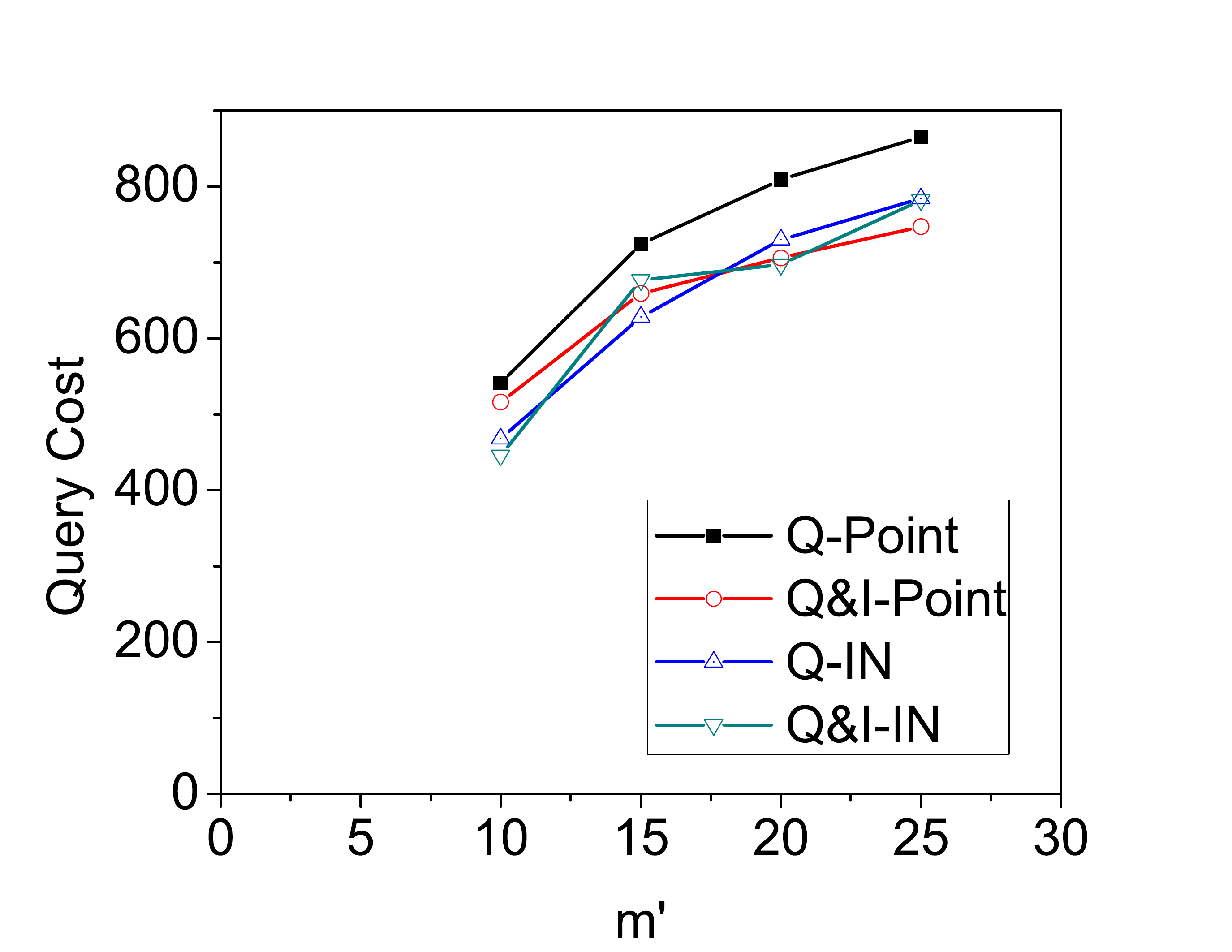}
\vspace{-7mm}\caption{Varying $m^\prime$ (YA)}
\label{fig:ya_mPrimeVsQueryCost}
\end{minipage}
\hspace{-2mm}
\end{figure*}

\begin{figure*}[t]
\begin{minipage}[t]{0.23\linewidth}
\centering
\includegraphics[width = 50mm, height = 32mm]{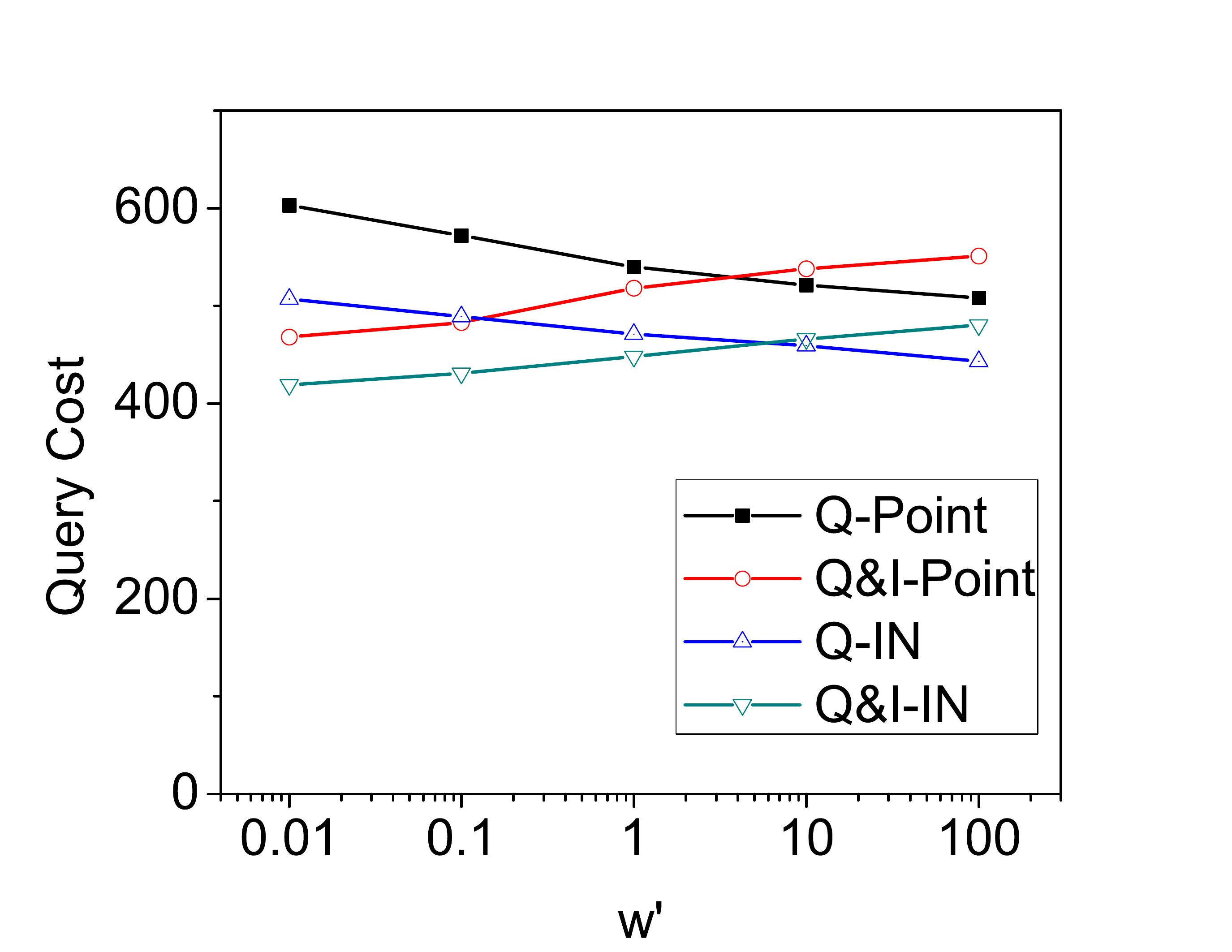}
\vspace{-7mm}\caption{Varying $w^{\prime}_1$ (YA)}
\label{fig:ya_wPrimeVsQueryCost}
\end{minipage}
\hspace{1mm}
\begin{minipage}[t]{0.23\linewidth}
\centering
\includegraphics[width = 50mm, height = 32mm]{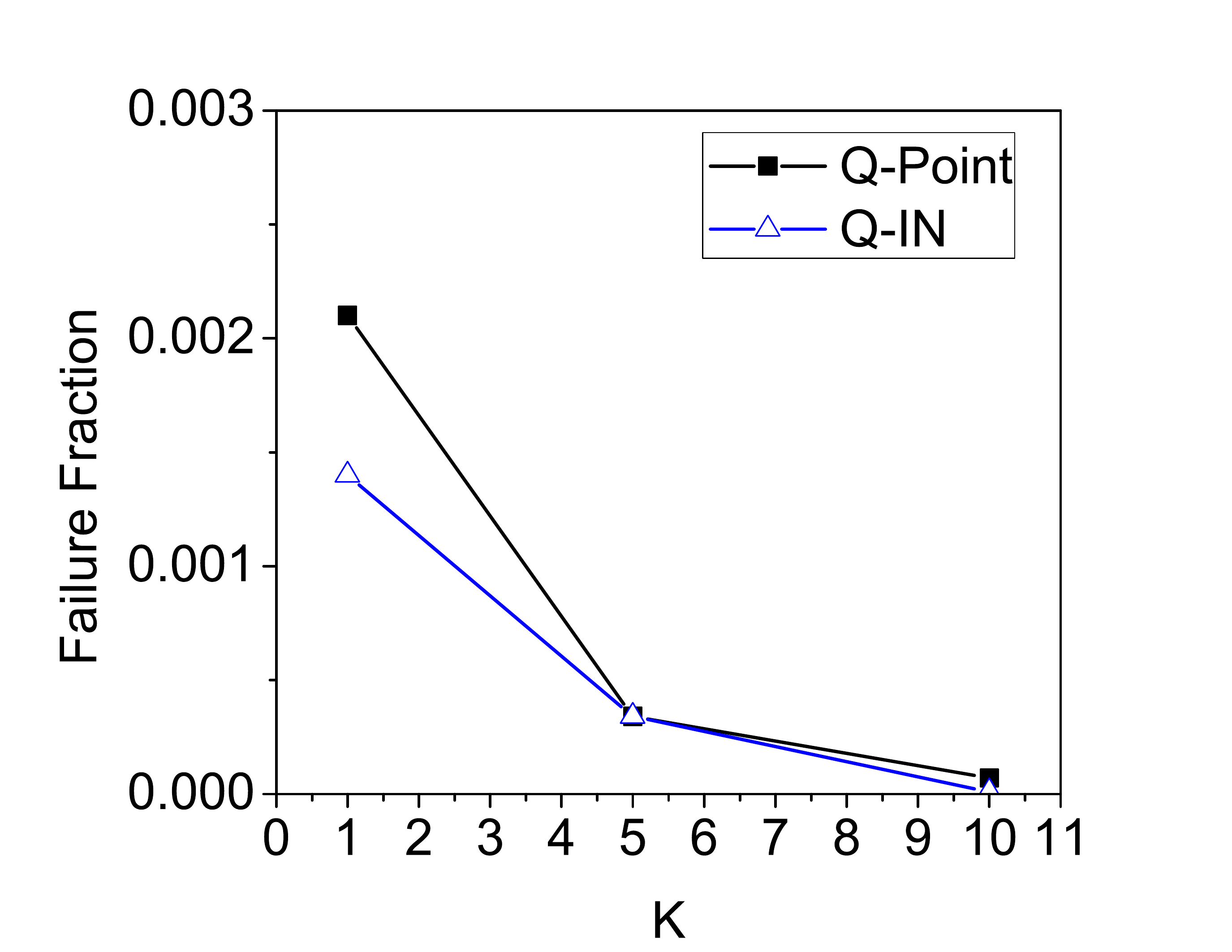}
\vspace{-7mm}\caption{Fraction of Uncompromised Accounts  (YA)}
\label{fig:ya_fracSuccQOnly}
\end{minipage}
\hspace{1mm}
\begin{minipage}[t]{0.23\linewidth}
\centering
\includegraphics[width = 50mm, height = 32mm]{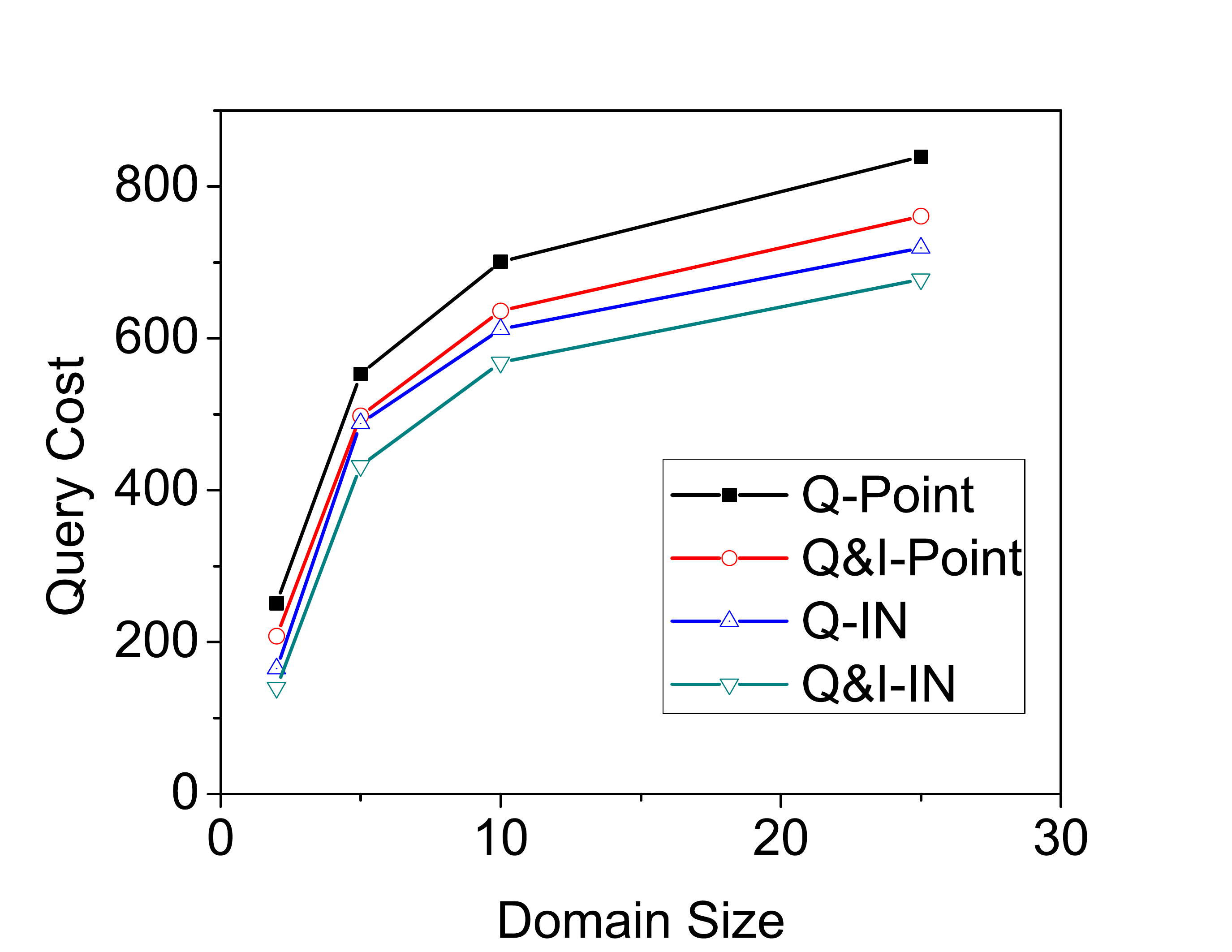}
\vspace{-7mm}\caption{Varying Domain Size of Inferred Attribute (YA)} 
\label{fig:ya_domainSizeVsQueryCost}
\end{minipage}
\hspace{1mm}
\begin{minipage}[t]{0.23\linewidth}
\centering
\includegraphics[width = 50mm, height = 32mm]{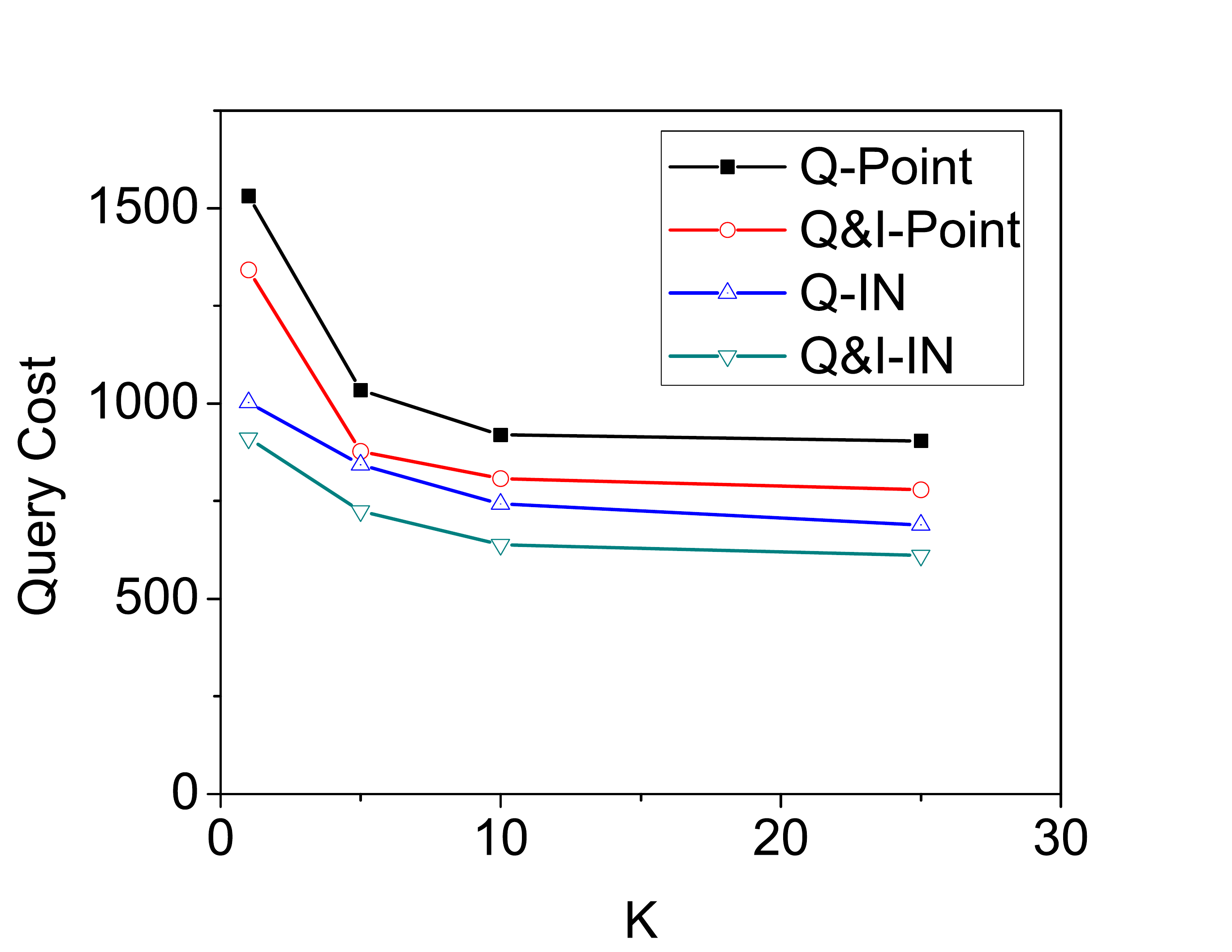}
\vspace{-7mm}\caption{Query Cost to Infer all Private Attributes (YA)}
\label{fig:ya_kVsQueryCostInferAll}
\end{minipage}
\hspace{-2mm}
\end{figure*}

\begin{figure*}[ht]
\begin{minipage}[t]{0.23\linewidth}
\centering
\includegraphics[width = 45mm, height = 32mm]{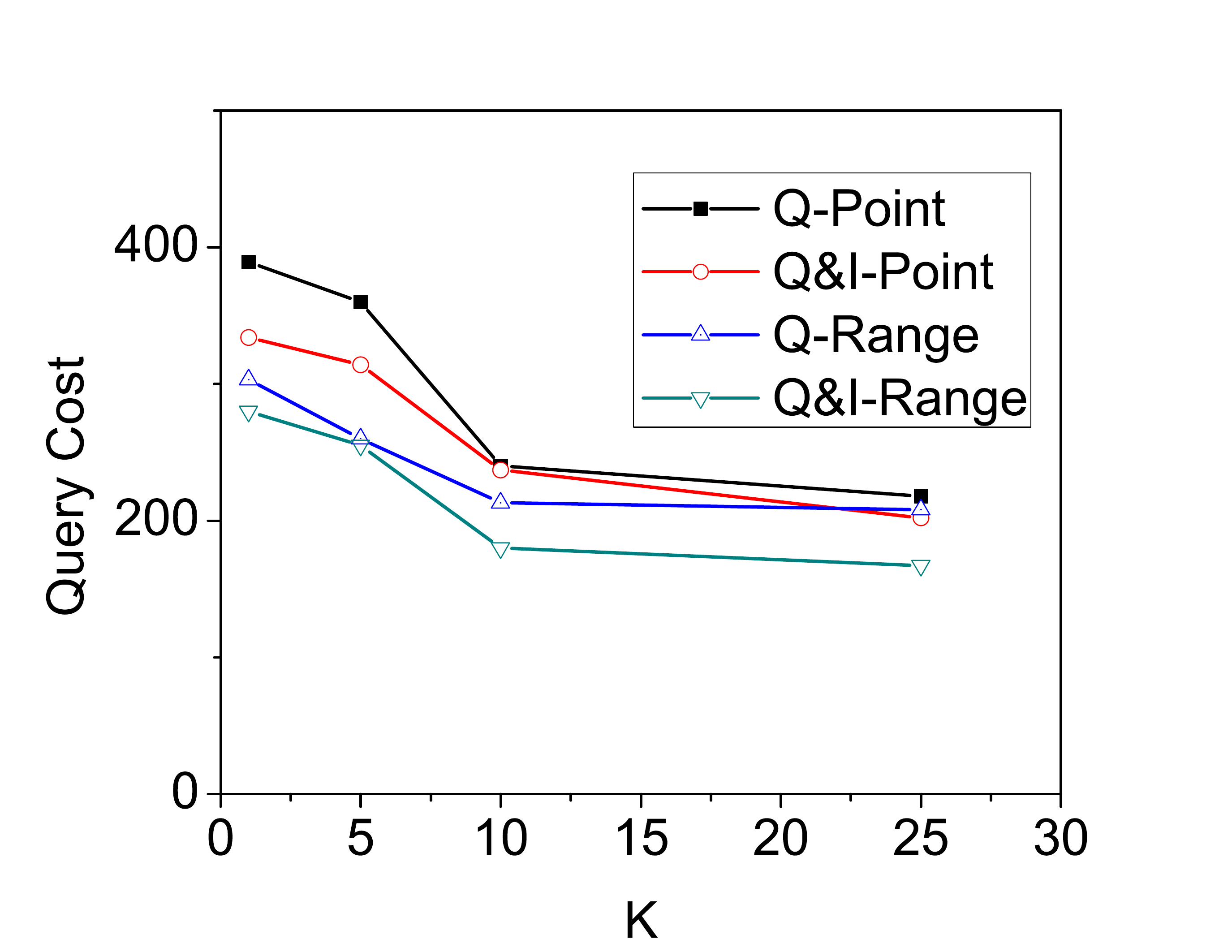}
\vspace{-7mm}\caption{Varying $k$ (BOOL)}
\label{fig:bool_kVsQueryCost}
\end{minipage}
\hspace{1mm}
\begin{minipage}[t]{0.23\linewidth}
\centering
\includegraphics[width = 45mm, height = 32mm]{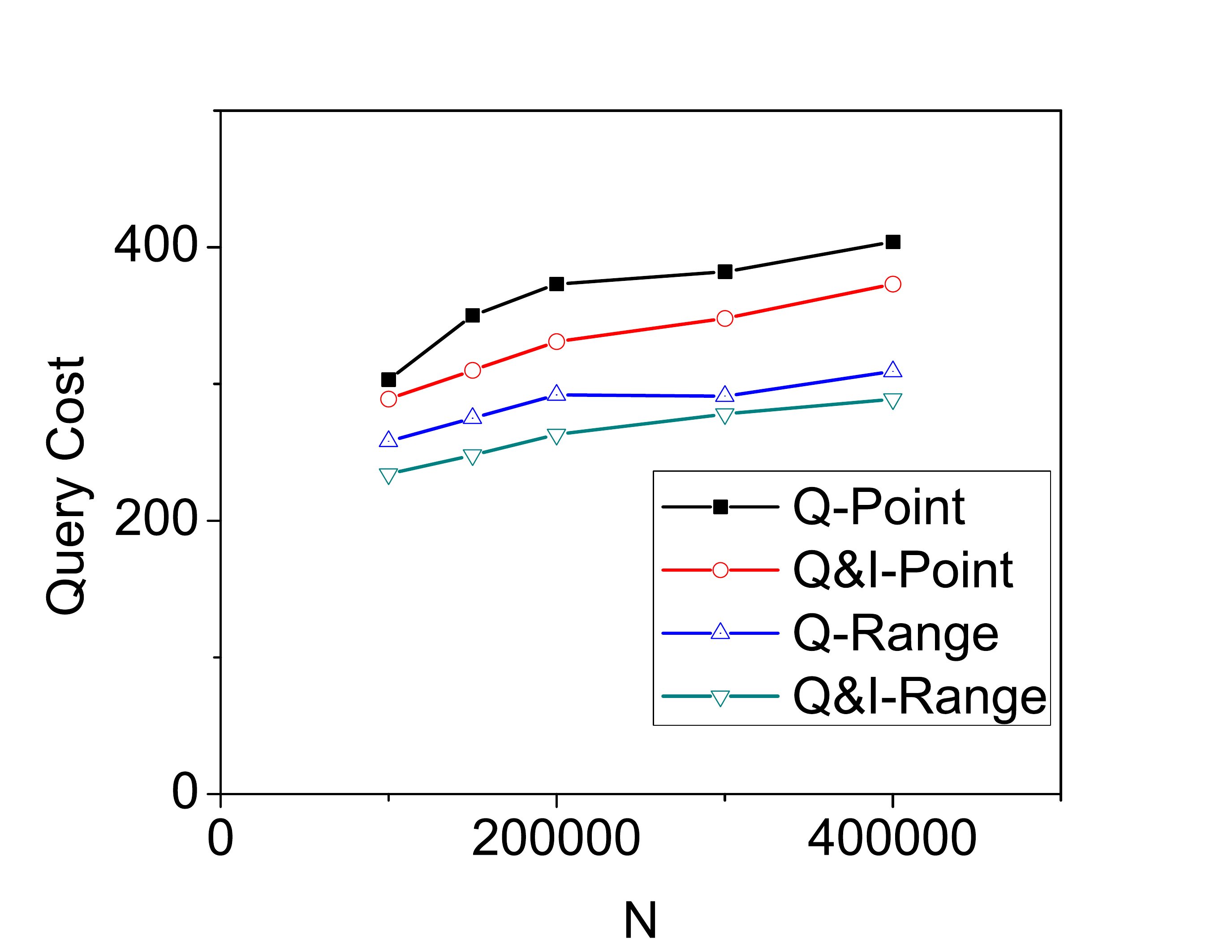}
\vspace{-7mm}\caption{Varying $n$ (BOOL)}
\label{fig:bool_NVsQueryCost}
\end{minipage}
\hspace{1mm}
\begin{minipage}[t]{0.23\linewidth}
\centering
\includegraphics[width = 45mm, height = 32mm]{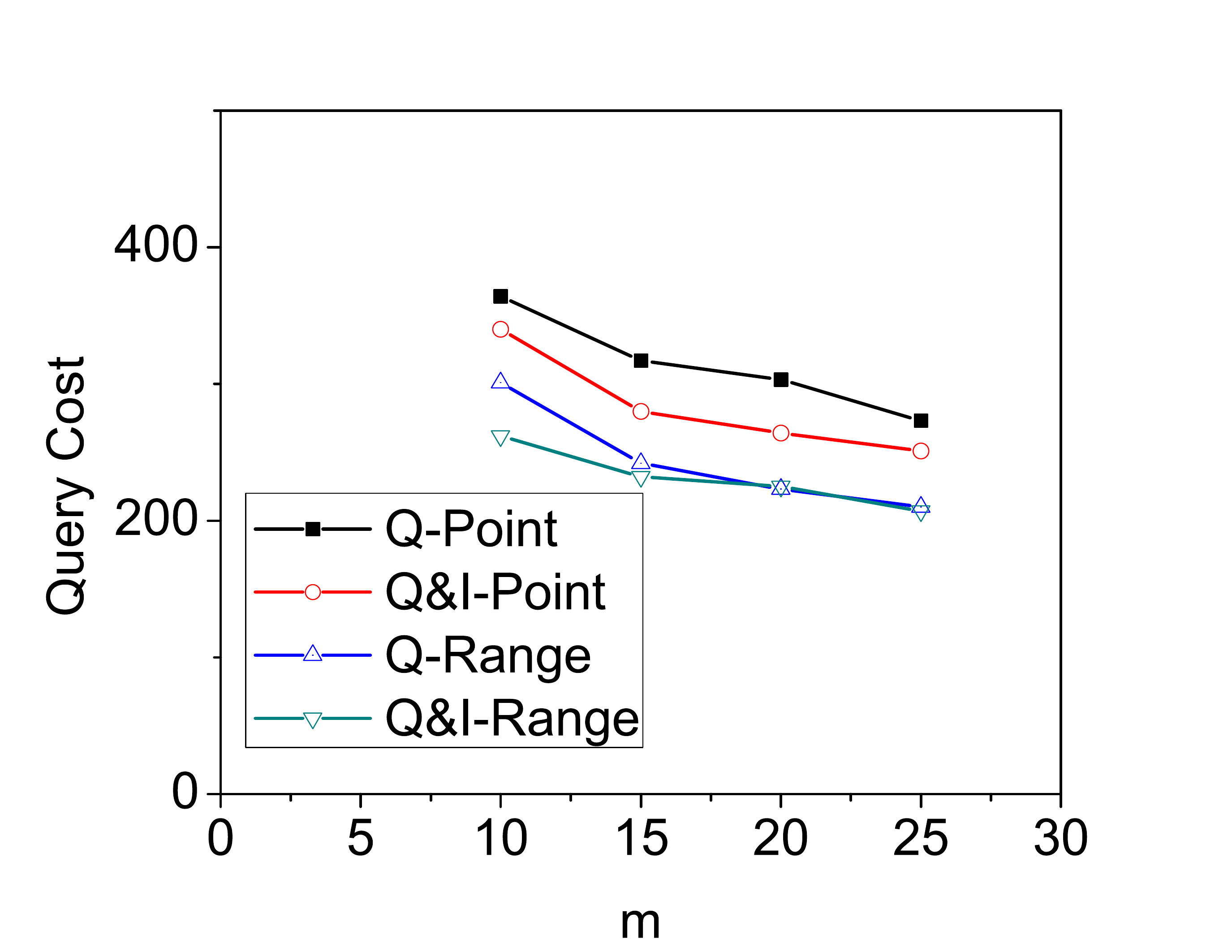}
\vspace{-7mm}\caption{Varying $m$ (BOOL)}
\label{fig:bool_mVsQueryCost}
\end{minipage}
\begin{minipage}[t]{0.23\linewidth}
\centering
\includegraphics[width = 50mm, height = 32mm]{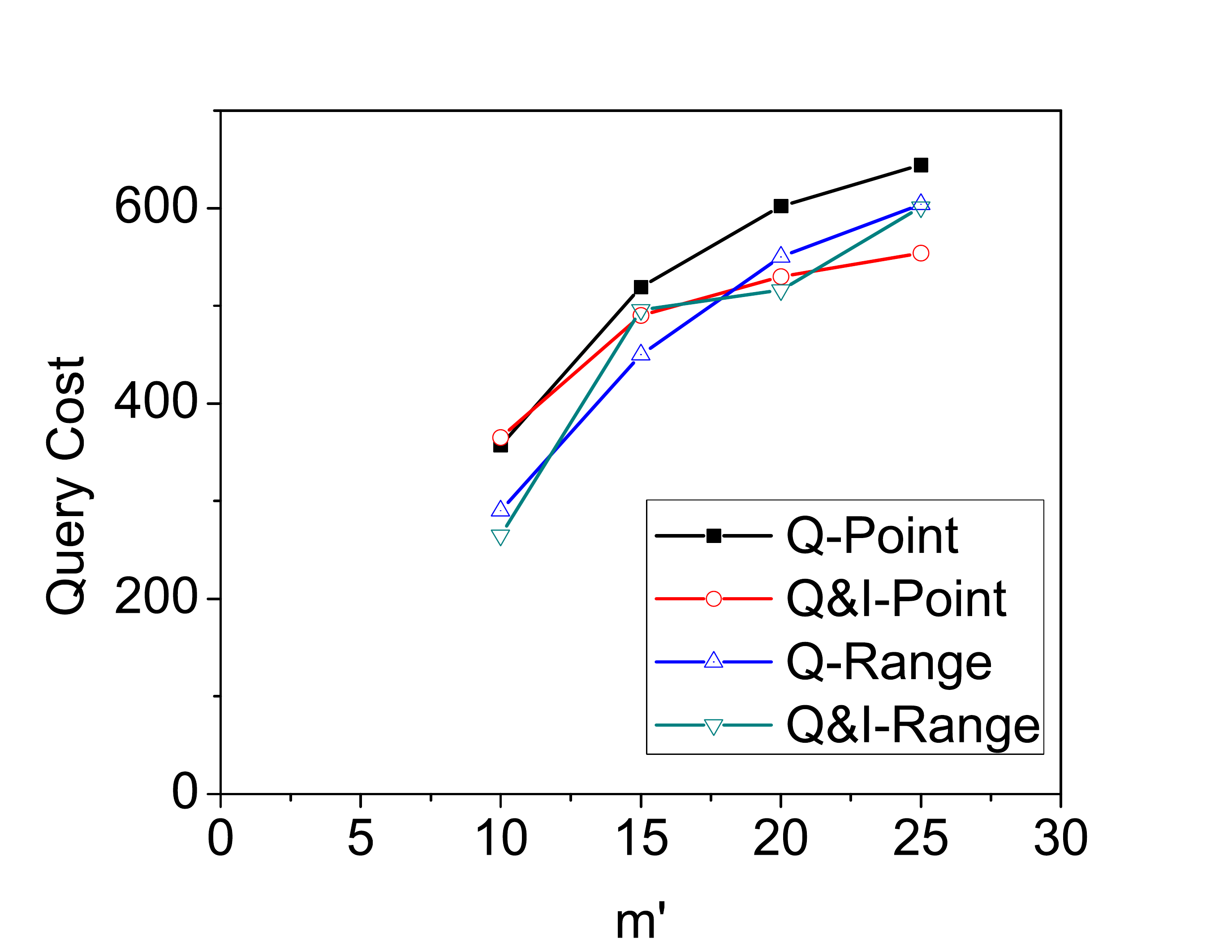}
\vspace{-7mm}\caption{Varying $m^\prime$ (BOOL)}
\label{fig:bool_mPrimeVsQueryCost}
\end{minipage}
\hspace{-2mm}
\end{figure*}

\begin{figure*}[t]
\begin{minipage}[t]{0.23\linewidth}
\centering
\includegraphics[width = 50mm, height = 32mm]{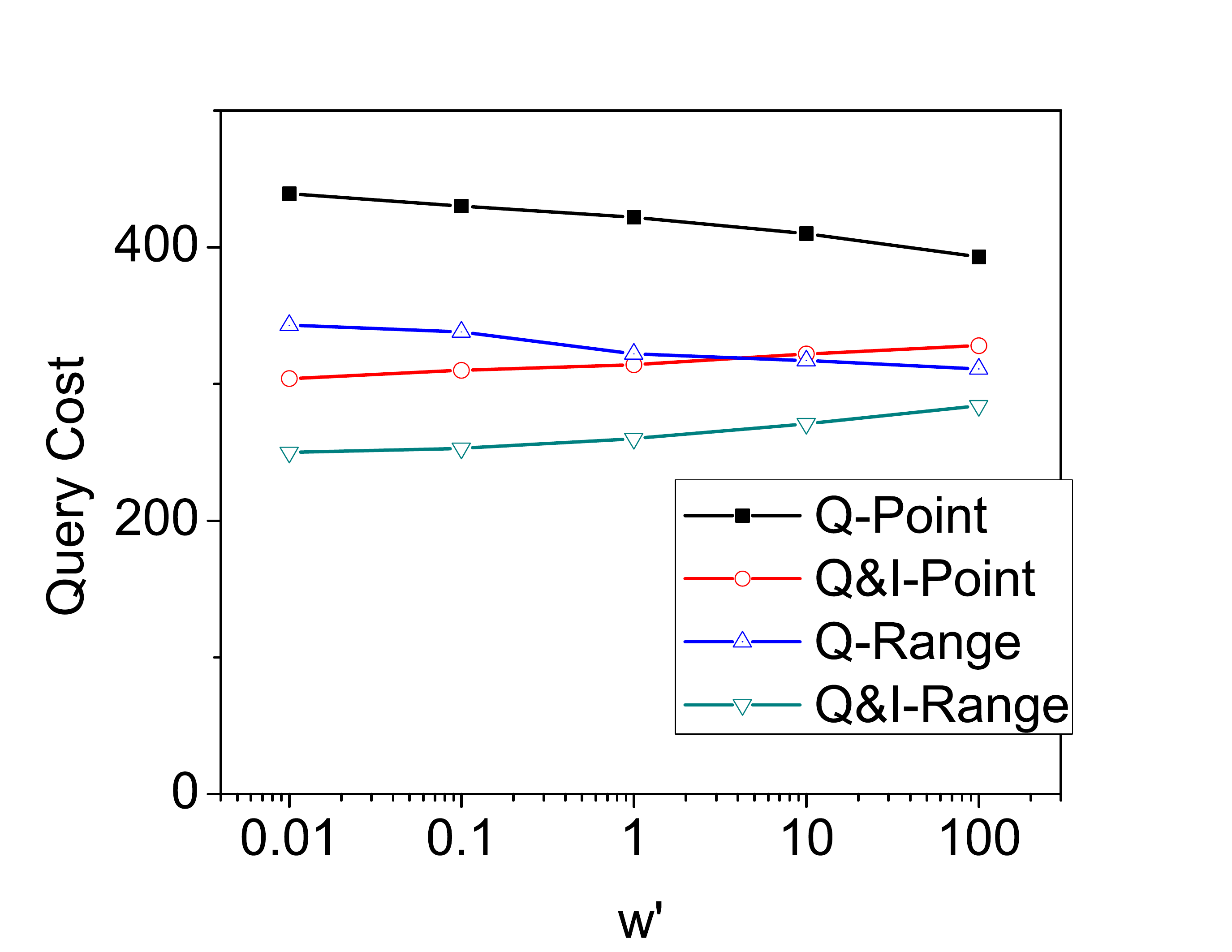}
\vspace{-7mm}\caption{Varying $w^{\prime}_1$ (BOOL)}
\label{fig:bool_wPrimeVsQueryCost}
\end{minipage}
\hspace{1mm}
\begin{minipage}[t]{0.23\linewidth}
\centering
\includegraphics[width = 50mm, height = 32mm]{figures/ya_fracSuccQOnly.pdf}
\vspace{-7mm}\caption{Fraction of Uncompromised Accounts  (BOOL)}
\label{fig:bool_fracSuccQOnly}
\end{minipage}
\hspace{1mm}
\begin{minipage}[t]{0.23\linewidth}
\centering
\includegraphics[width = 50mm, height = 32mm]{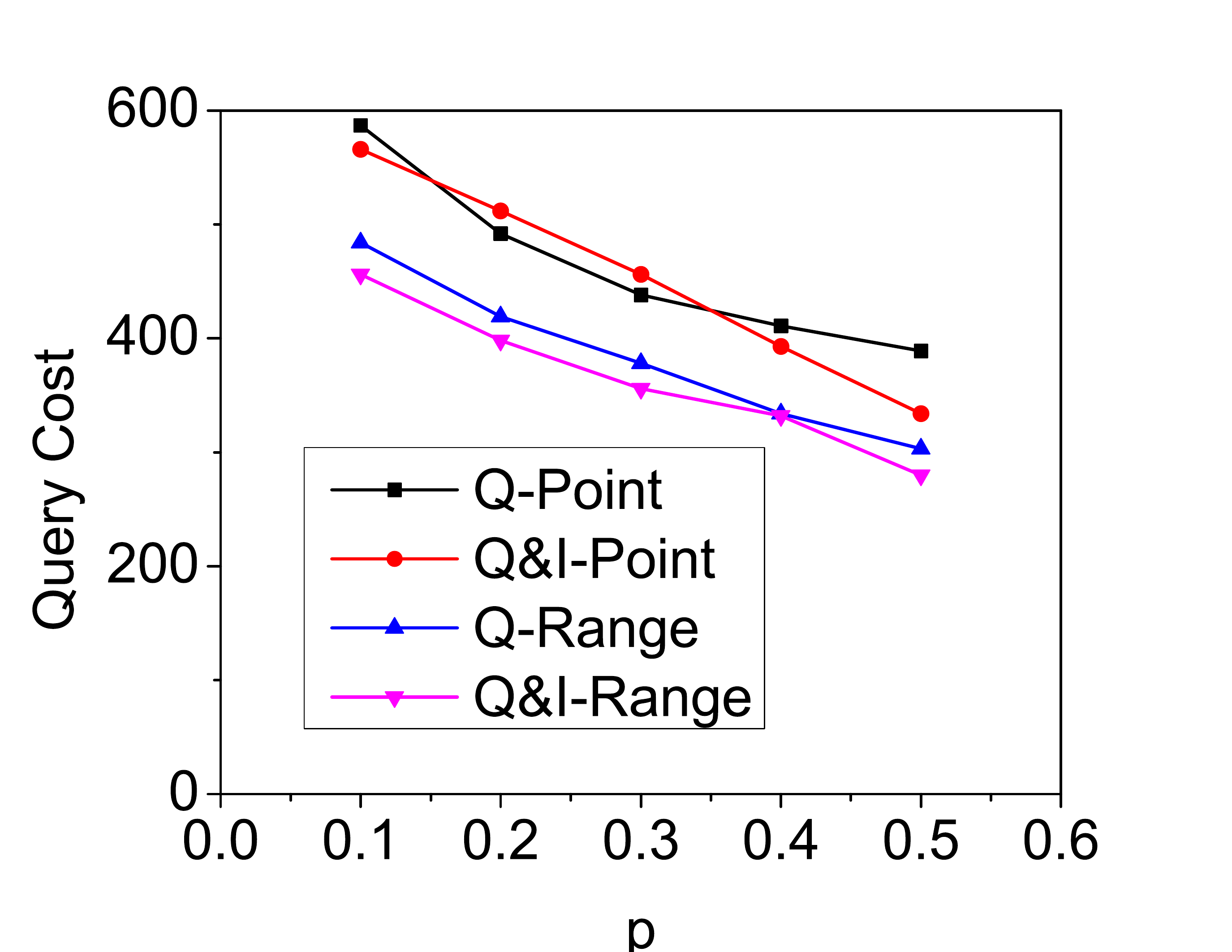}
\vspace{-7mm}\caption{Varying $p$ (BOOL)} 
\label{fig:bool_domainSizeVsQueryCost}
\end{minipage}
\hspace{1mm}
\begin{minipage}[t]{0.23\linewidth}
\centering
\includegraphics[width = 50mm, height = 32mm]{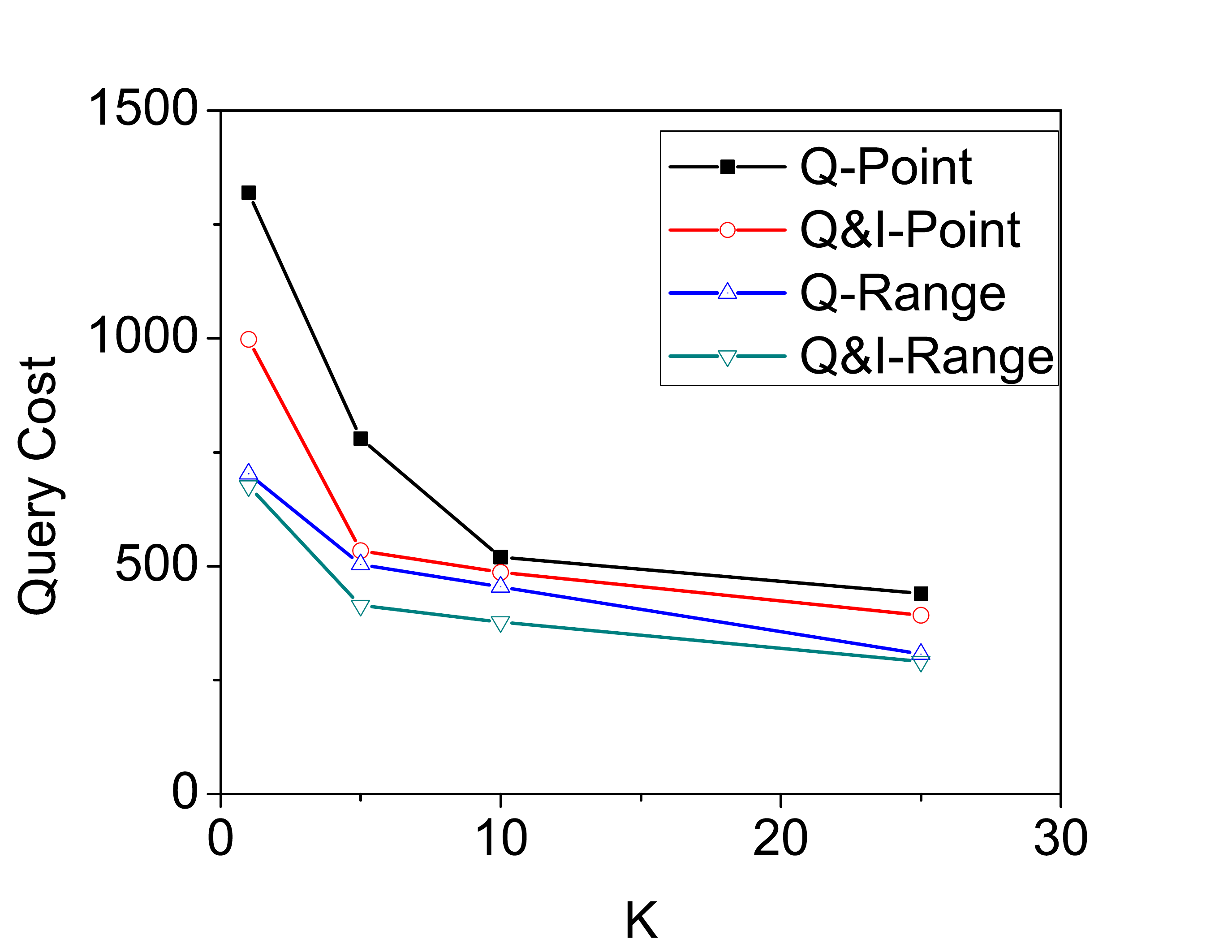}
\vspace{-7mm}\caption{Query Cost to Infer all Private Attributes (BOOL)}
\label{fig:bool_kVsQueryCostInferAll}
\end{minipage}
\hspace{-2mm}
\end{figure*}

\begin{figure*}[ht]
\begin{minipage}[t]{0.32\linewidth}
\centering
\includegraphics[width = 45mm, height = 32mm]{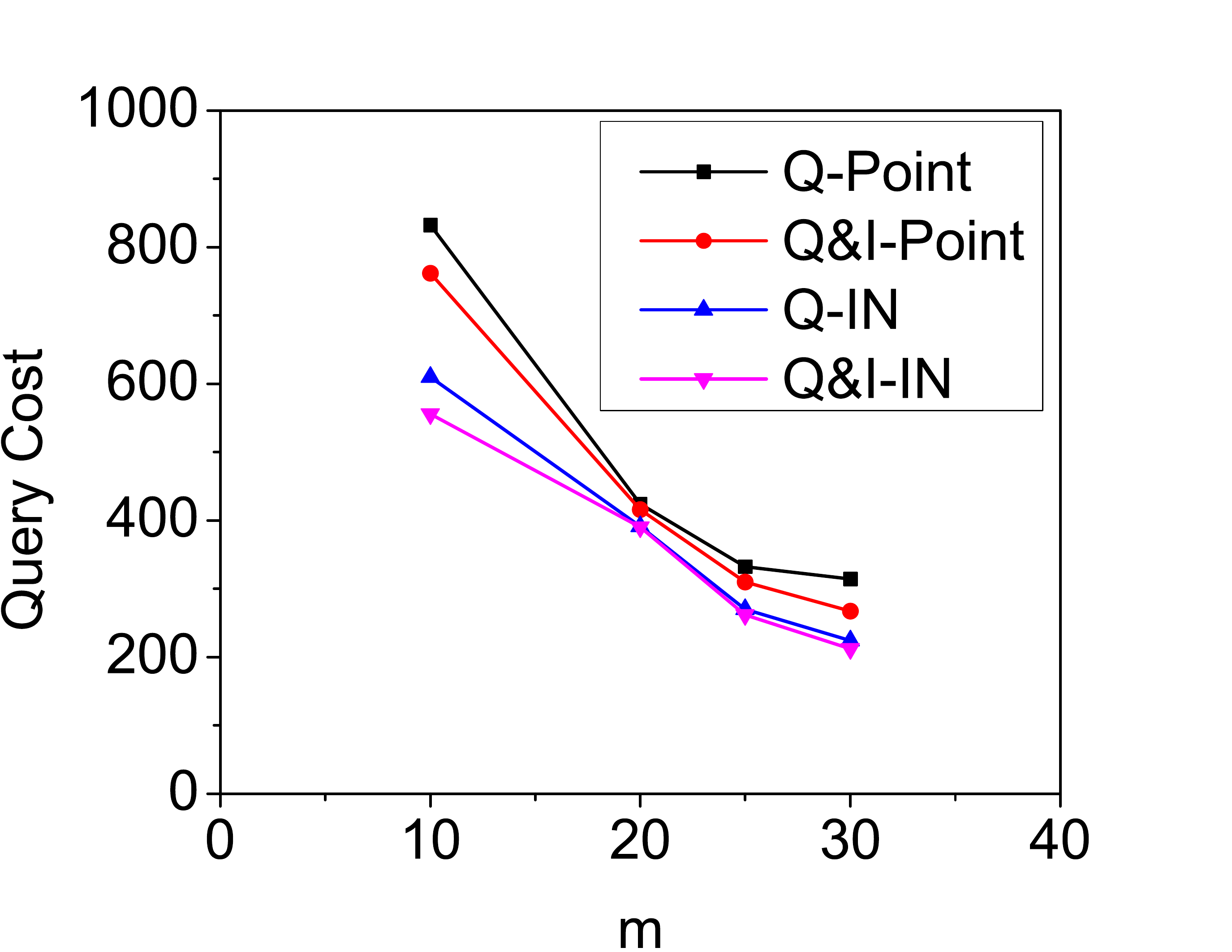}
\vspace{-3mm}\caption{Varying $m$ (Zipfian)}
\label{fig:zipf_mVsQueryCost}
\end{minipage}
\hspace{1mm}
\begin{minipage}[t]{0.32\linewidth}
\centering
\includegraphics[width = 45mm, height = 32mm]{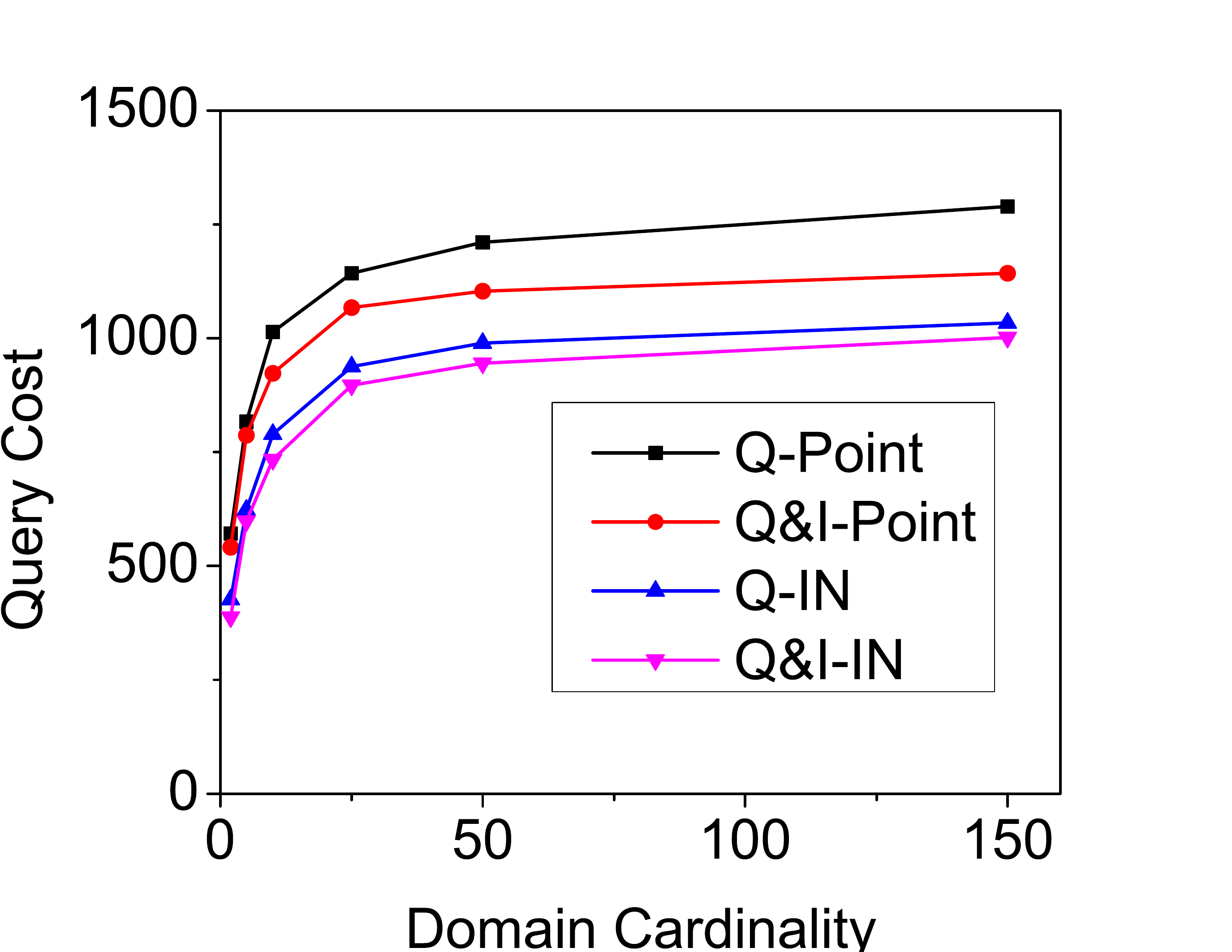}
\vspace{-3mm}\caption{Varying Domain Size of Inferred Attribute(Zipfian)}
\label{fig:zipf_domainSizeVsQueryCost}
\end{minipage}
\hspace{1mm}
\begin{minipage}[t]{0.32\linewidth}
\centering
\includegraphics[width = 45mm, height = 32mm]{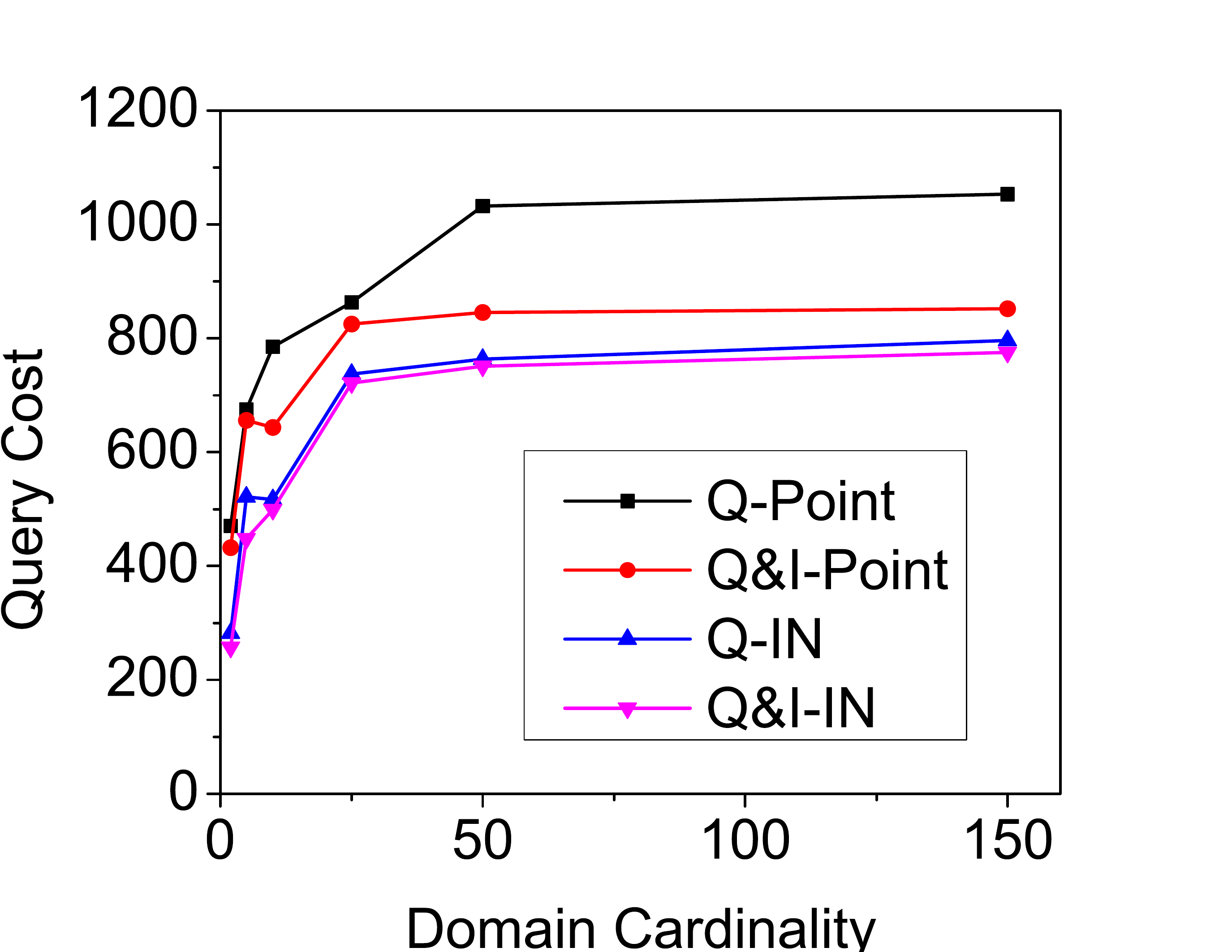}
\vspace{-3mm}\caption{Varying Domain Size of All Attributes (Zipfian)}
\label{fig:zipf_domainSizeAllVsQueryCost}
\end{minipage}
\end{figure*}

\subsection{Online Demonstration}
\label{subsec:expOnline}

In the online experiments, we sought to compromise private attributes of user profiles from Amazon Goodreads (GR), Catch22-Dating (CD) and Renren.com (RR) respectively. A detailed description of the procedure we used and its correctness can be found in \S\ref{sec:onlineAttacks}. Note that since we have no connection with these websites and thus do not have access to the ground truth, we limit the scope to a small-scale proof-of-concept.

\noindent {\em Renren:}
Renren (RR)\cite{renren} is a major Chinese social networking website (similar to Facebook) with more than 160 million users. The user profile consists of details like demographics, education and work affiliation. The website supports extensive privacy settings - allowing a user to specify any subset of profile attributes as public, private or only visible to friends.  In our experiments, we focused on one attribute \texttt{hometown province} with a domain size of 34 - which is set to private by all users we target. Renren has a search interface that accepts keyword queries on a user's public attributes such as profile name. It displays appropriate information depending on who issued the query - i.e. everyone can see public attributes, only friends can see attributes marked as visible to friends and none can see the private attributes. The results are ordered based on a ranking function that takes into account the entire profile regardless of privacy settings (as shown in \S\ref{sec:intro}). 

Renren enforces tuple insertion constraint and also allows NULL values. 
We conducted our attack using Q-Point algorithm.
One can see from Table~\ref{tbl:onlineExpSummary} that we were able to successfully infer the private attribute for 75 out of 76 profiles 
with an average query cost of 20 per profile.
We also conducted an experiment to measure the success rate of our attacks by varying $k$.
The search interface of Renren, has a large value of $k$ (ranging in hundreds).
In our experiment, we artificially truncated the results for different values of $k$ and verified if we can infer the private attribute.
Figure~\ref{fig:renrenKVsSuccRate} shows that for $k$ as little as $50$, we achieve a success rate of 92\%.

\vspace{1mm}
\noindent {\em Catch22Dating:} 
Catch22Dating (CD)\cite{catch22Dating} is an online dating website where users create profiles that are then matched to other users. The public attributes here capture the demographic information of a user, whereas the private attributes specify a user's matching preferences - e.g., the one private attribute we focus on is Boolean ``{\it Is it OK if your matches have been married before}'' (henceforth referred to as {\tt Married}). The search interface of Catch22Dating has an option called ``Both Perspectives'', which enables the ranking function to take into account both public and private attributes of all profiles on the website. It does enforce the tuple insertion constraint by requiring Student ID from selected universities during user registration. It also allows IN queries to be specified (e.g., one can set an attribute to be ``do not care'' in the query). Hence, we model the adversary as Q-only operating over an IN-interface.

The website allows NULL values on almost all attributes. As a result, our Q-IN attack might fail simply because the user specified NULL as the attribute value. One can see from Table~\ref{tbl:onlineExpSummary} that out of the 120 users we attacked, we compromised the private attribute {\tt Married} for 61 of them. For the other 60, either the user did not specify whether he/she would like to accept matches who have been married, or Q-IN attack fails on these users. The average query cost for the success and NULL/failure cases are 60 and 660, respectively, consistent with our prior discussions that failures generally consume many more queries than the successful cases.

\vspace{1mm}
\noindent{\em Amazon Goodreads} (GR)\cite{goodReads} a social cataloging site where the users can connect to each other and share their experience/opinions about books. 
The user profile consists of demographic information such as {\tt user name} which is always public, and attributes such as {\tt zipcode} which can be set as private. Regardless of a user's choice on location privacy, the ranking function used in the website's ``user search'' interface 
ranks each user according to its (geographic) distance from the location of the user performing the search. Goodreads allows free and instant account registration - i.e., there is no tuple insertion constraint - but no range query. Hence we use the Q\&I-Point algorithm.

We started with registering 10 fake accounts with randomly generated ZIP codes, and launched Q\&I-Point over it to verify the correctness of our algorithm. Then, to enable verification on real accounts, we identified 53 ``special'' users at Goodreads who have their ZIP code hidden but chose to reveal their city/state (in US). We launched Q\&I-Point successfully on all these users, and then verified that every ZIP code we compromised indeed belongs to its corresponding city/state revealed by the user. The average query cost, as shown in Table~\ref{tbl:onlineExpSummary}, is 455 per victim.

\begin{table}[h]
\centering
\caption{Summary of Online Experiments}
\label{tbl:onlineExpSummary}
\begin{tabular}{cp{0.5in}p{0.5in}p{0.5in}p{0.5in}p{0.5in}}
  \hline
        & \#Accounts Attacked & \#Success & Avg Cost (Success) & Avg Cost (Failure) \\ \hline
    CD  & 120                     & 61        & 60                      & 660               \\
    GR  & 53                      & 53        & 455                     & N/A               \\
    RR  & 76                      & 75        & 20                      & 34                \\ 
  \hline
\end{tabular}
\end{table}


\section{Additional Details for Online Experiments}
\label{sec:onlineAttacks} 
In this section, we provide some additional details for online experiments.
We first describe a practical attack where we infer the private attribute of a user in Catch22Dating website 
and provide a general approach followed by a formal argument as to its correctness.
We then provide the equivalent algorithm for Goodreads. 
The logic and correctness argument for Goodreads is similar.

\begin{figure*}[ht]
\begin{minipage}[h]{0.4\linewidth}
\centering
\includegraphics[scale=0.4]{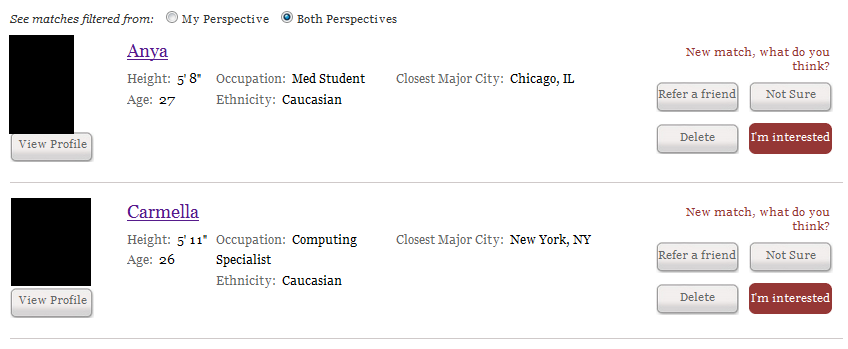}
\vspace{-7mm}\caption{Query $q_1$ where Anya is top ranked}
\label{fig:catch22Before}
\end{minipage}
\hspace{20mm}
\begin{minipage}[h]{0.4\linewidth}
\centering
\includegraphics[scale=0.4]{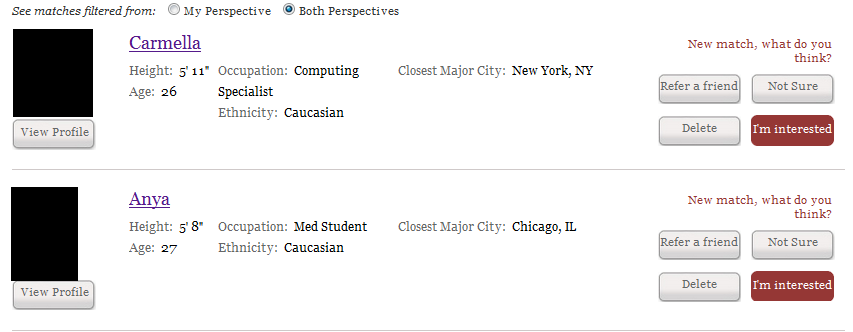}
\vspace{-7mm}\caption{Query $q_2$ where Anya is not top ranked}
\label{fig:catch22After}
\end{minipage}
\end{figure*}

\vspace{1mm}
\noindent {\bf Example Attack over Catch22Dating:}
{\em Catch22dating} (CD) is an online dating website with millions of users.
CD allows users to create profiles containing public (such as demographics) and 
private (such as matching preferences).
CD also has a search interface where users could specify a query (based on public information only) to search for other users.
CD uses a ranking function that matches the profile using {\em both} public and private information.
Suppose, we wish to infer a private information ({\em Is it ok if your matches have been married before}) of a user $v$ 
(with screen name {\tt Anya}).
We first created a fake user profile $u$ where we specified the marital status as `Never married'.
Under these circumstances, our results in {\tt Anya} as the best matching user. 
Figures~\ref{fig:catch22Before} shows the result.
We then change $u$'s profile to specify the marital status as `Previously Married'.
When we issue the {\em same} query (but for the modified profile), we can see that the rank of {\tt Anya} has dropped. 
We can now plausibly infer that {\tt Anya} has specified that she prefers her matches not to be married before.

\vspace{1mm}
\noindent {\bf Catch22Dating Inference:}
Using the notations from the technical sections, let $v$ be the victim tuple whose private attribute value $v[B_1]$ we seek to infer.
In the context of Catch22Dating, the private boolean attribute $B_1$ 
stores the user's response to question: {\em Is it ok if your matches have been married before}.
It can take two values - No or No Preference.
The public attribute most relevant to $B_1$ is $A_1$ which stores the user's response to the question:
{\em Have you married before}. It takes two values - Yes and No.
We construct a random point query by using the public attributes from $v$'s profile and 
chose the values for private attributes randomly.
However, we set the value for the attribute {\em Have you married before} to No.
If this randomly constructed query (say $q_1$) returned $v$, then we create an alternate query $q_2$. 
$q_2$ is identical to $q_1$ on all attributes except for the value of attribute $A_1$ - 
$q_1[A_1]$ = No (not married before) and $q_2[A_1]$ = Yes (had married before). 
Now if the rank of $v$ is lower in $q_2$ than in $q_1$ (i.e., $d_l(q_2, v) > d_l(q_1,v))$, 
the attacker can infer that the target profile $v$ has private attribute $B_1$ value set to No.

\vspace{1mm}
\noindent {\bf Correctness Argument:}
If the target profile $v$ has $B_1$ value set to No Preference, then $d_l(q_1,v) = d_l(q_2,v)$. 
This is because by setting $v[B_1]$ to No Preference, the target profile is accepting any value of $A_1$ in the search query. 
On the other hand if $v[B_1]$ = No then $d_l(q_1, v) < d_l(q_2, v)$. 
When the attacker issues a query $q_2$ followed by $q_1$, 
one of the three scenarios can arise: 
\begin{enumerate}
\itemsep0em 
\item rank of $v$ remains same as it was in $q_1$
\item rank of $v$ increases 
\item rank of $v$ decreases
\end{enumerate}

If $v[B_1]$ = No Preference, only (1) or (2) is possible. 
While scenario (1) is easy to understand, 
scenario (2) may appear if there exists a tuple $t$, 
such that $t[B_1]$ = No, $d_l(q_1, t) < d_l(q_1, v)$ and $d_l(q_2, t) > d_l(q_2, v)$. 
Scenario (3) is impossible when $v[B_1]$ = No Preference as it is not possible to find a 
tuple $t$ that has $d_l(q_1, v) < d_l(q_1, t)$ and $d(q_2,v) > d_l(q_2, t)$. 
So when the attacker finds that the rank of the target profile $v$ decreases after switching 
from $q_1$ to $q_2$, he/she can correctly infer that $v[B_1]$ = No,
because the only assignment $v[B_1]$ can have other than No Preference is No.

\vspace{1mm}
\begin{figure}[ht]
\centering
\includegraphics[scale=0.5]{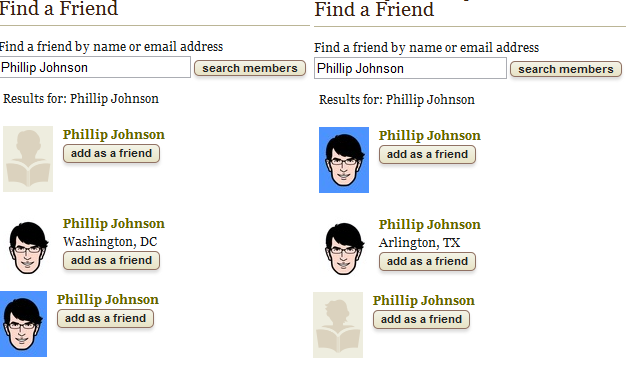}
\caption{Demonstration of an attack over Goodreads}
\label{fig:goodReadsBeforeAndAfter}
\end{figure}

\noindent {\bf Goodreads Inference:}
Goodreads has a single private attribute {\tt zipcode}.
The search interface to find other similar users allows only a single attribute - user name. 
When displaying the results of a search query 
it ranks the user profiles (who have the user name from the query) 
according to a proprietary distance function from the location of the user performing the search.
Based on our observations, Goodreads seems to use some proprietary variant of zipcode-zipcode distance function.
We used a publicly available distance function - but to address the uncertainty of Goodreads' ranking function
we added an error margin.

Our attack proceeds in two stages. 
We start with the set of all zipcodes in USA.
Since Goodreads allows an adversary to create multiple accounts, we create two accounts, say $a_1,a_2$.
We set the zipcode of $a_1, a_2$ to two different randomly chosen zipcodes.
We issue a search query based on victim $v$'s user name.
Suppose for $a_1$, $v$ has a higher rank than $a_2$ (which has the same name as $a_1$),
then remove all zipcodes that has distance higher than the 
distance between zipcodes of $a_1$ and $a_2$ (with an additional error margin) and vice versa.
This process is repeated till the zipcode list cannot be pruned anymore.
Let the set of all non-pruned zipcodes be $Z$.
In the second stage, we set the zipcode of $a_1$ to be a random zipcode from $Z$. 
We set the zipcode of $a_2$ to each value in $Z$ and search for $v$ till
$v$ has a higher rank than $a_1$.
We then use this information to narrow the zipcodes till we identify the user's zipcode.
The results of our experiments can be found in Figure~\ref{fig:goodReadsBeforeAndAfter}.

\section{Related Work}
\label{sec:relWork}

\noindent{\bf Database Ranking:} The area of ranking has been extensively studied 
in the context of deterministic\cite{ilyas2008survey,ho1997range}, probabilistic\cite{li2009unified} and incomplete\cite{green2006models} data. Processing top-$k$ query when the ranking score is a combination of scores of individual attributes was studied in \cite{fagin2003optimal, hristidis2001prefer}. A popular ranking function is nearest neighbor \cite{geng2008query} where the tuples are ordered based on the distance between tuple $t$ and the given query $q$. Other categorizations such as monotone, generic or no ranking (such as Skyline queries) has also been studied \cite{ilyas2008survey}. Recently, there have been studies on learning the rank of a tuple \cite{thirumuruganathan2013rank} or the ranking function design \cite{zheng2008general,podelski2004complete} through a top-$k$ static ranking interface.

\vspace{2mm}
\noindent{\bf Inference Control:} Prior work on privacy inference \cite{farkas2002inference} studied the problem of inferring individual tuple values \cite{chin1984efficient, chin1986security} and the existence of a tuple in a database \cite{nergiz2007hiding} from aggregates such as SUM, MIN, MAX, etc. The field of inference control\cite{farkas2002inference,adam1989security,domingo2008survey} seeks to prevent such attacks by through query auditing, controlling the number of tuples that match a query or modify query responses using perturbation, distortion etc\cite{chin1982auditing}. Researchers have also proposed multiple privacy preserving aggregate query processing techniques \cite{agrawal2000privacy,dwork2006calibrating}. Recently, \cite{Li:2014:YLB:2632951.2632953} has showed that it is possible to infer the location of a user in a Location based Social Network (LBSN) (which could be considered as a private attribute) if the ranking function returns the distance between the query and the victim tuple. However, we do not assume the availability of such information as most websites do not display the score of a tuple for a query. 


\section{Final Remarks}

In this paper, we identified a novel problem of rank-based inferencing over databases that use ranked retrieval model. We introduced a taxonomy of the problem space into four important subspaces based on varying interface designs and adversarial capabilities. For each problem subspace, we developed nontrivial attacking algorithms and conducted theoretical analysis of their feasibility and performance. We verified the effectiveness of the attacks using a comprehensive set of experiments on real-world datasets and online demonstrations on high-profile real-world websites.

It is our hope that the paper initiates a new topic of research on the privacy implications of database ranking; and future research will address the many open problems, e.g., how to design effective defensive strategies that thwart the rank-based inference of private attributes yet maintain the utility of ranking functions.

\bibliographystyle{abbrv}
\bibliography{rankInference}

\begin{thebibliography}{10}

\bibitem{goodReads}
Amazon goodreads.
\newblock \url{https://www.goodreads.com/}.

\bibitem{catch22Dating}
Catch22dating.
\newblock \url{http://www.catch22dating.com/}.

\bibitem{catch22Rules}
Catch22dating frequently asked questions.
\newblock \url{http://www.catch22dating.com/faq}.

\bibitem{eHarmony}
Eharmony.
\newblock \url{http://www.eharmony.com}.

\bibitem{renren}
Renren.
\newblock \url{http://www.renren.com}.

\bibitem{adam1989security}
N.~R. Adam and J.~C. Worthmann.
\newblock Security-control methods for statistical databases: a comparative
  study.
\newblock {\em ACM Computing Surveys (CSUR)}, 21(4):515--556, 1989.

\bibitem{agrawal2000privacy}
R.~Agrawal and R.~Srikant.
\newblock Privacy-preserving data mining.
\newblock {\em ACM Sigmod Record}, 29(2):439--450, 2000.

\bibitem{chaudhuri2003automated}
S.~Agrawal, S.~Chaudhuri, G.~Das, and A.~Gionis.
\newblock Automated ranking of database query results.
\newblock In {\em In CIDR}. CIDR, 2003.

\bibitem{bruno2002top}
N.~Bruno, S.~Chaudhuri, and L.~Gravano.
\newblock Top-k selection queries over relational databases: Mapping strategies
  and performance evaluation.
\newblock {\em ACM TODS}, 27(2), 2002.

\bibitem{chaudhuri2004probabilistic}
S.~Chaudhuri, G.~Das, V.~Hristidis, and G.~Weikum.
\newblock Probabilistic ranking of database query results.
\newblock VLDB, 2004.

\bibitem{chin1986security}
F.~Chin.
\newblock Security problems on inference control for sum, max, and min queries.
\newblock {\em JACM}, 33(3):451--464, 1986.

\bibitem{chin1984efficient}
F.~Y. Chin, P.~Kossowski, and S.~Loh.
\newblock Efficient inference control for range sum queries.
\newblock {\em Theoretical Computer Science}, 32(1):77--86, 1984.

\bibitem{chin1982auditing}
F.~Y. Chin and G.~Ozsoyoglu.
\newblock Auditing and inference control in statistical databases.
\newblock {\em TSE}, 8(6):574--582, 1982.

\bibitem{cox2006principles}
D.~R. Cox.
\newblock {\em Principles of statistical inference}.
\newblock Cambridge University Press, 2006.

\bibitem{dasgupta2007random}
A.~Dasgupta, G.~Das, and H.~Mannila.
\newblock A random walk approach to sampling hidden databases.
\newblock In {\em SIGMOD}, 2007.

\bibitem{dasgupta2010unbiased}
A.~Dasgupta, X.~Jin, B.~Jewell, N.~Zhang, and G.~Das.
\newblock Unbiased estimation of size and other aggregates over hidden web
  databases.
\newblock In {\em SIGMOD}, pages 855--866. ACM, 2010.

\bibitem{domingo2008survey}
J.~Domingo-Ferrer.
\newblock A survey of inference control methods for privacy-preserving data
  mining.
\newblock In {\em Privacy-preserving data mining}, pages 53--80. Springer,
  2008.

\bibitem{dwork2006calibrating}
C.~Dwork, F.~McSherry, K.~Nissim, and A.~Smith.
\newblock Calibrating noise to sensitivity in private data analysis.
\newblock In {\em Theory of Cryptography}, pages 265--284. Springer, 2006.

\bibitem{Li:2014:YLB:2632951.2632953}
M.~L. et~al.
\newblock All your location are belong to us: Breaking mobile social networks
  for automated user location tracking.
\newblock MobiHoc, 2014.

\bibitem{fagin2003optimal}
R.~Fagin, A.~Lotem, and M.~Naor.
\newblock Optimal aggregation algorithms for middleware.
\newblock {\em JCSS}, 66(4):614--656, 2003.

\bibitem{farkas2002inference}
C.~Farkas and S.~Jajodia.
\newblock The inference problem: a survey.
\newblock {\em ACM SIGKDD Explorations Newsletter}, 4(2):6--11, 2002.

\bibitem{geng2008query}
X.~Geng, T.-Y. Liu, T.~Qin, A.~Arnold, H.~Li, and H.-Y. Shum.
\newblock Query dependent ranking using k-nearest neighbor.
\newblock In {\em SIGIR}, pages 115--122. ACM, 2008.

\bibitem{green2006models}
T.~J. Green and V.~Tannen.
\newblock Models for incomplete and probabilistic information.
\newblock In {\em EDBT}. 2006.

\bibitem{ho1997range}
C.-T. Ho, R.~Agrawal, N.~Megiddo, and R.~Srikant.
\newblock {\em Range queries in OLAP data cubes}, volume~26.
\newblock 1997.

\bibitem{hristidis2001prefer}
V.~Hristidis, N.~Koudas, and Y.~Papakonstantinou.
\newblock Prefer: A system for the efficient execution of multi-parametric
  ranked queries.
\newblock In {\em ACM SIGMOD Record}, pages 259--270, 2001.

\bibitem{ilyas2008survey}
I.~F. Ilyas, G.~Beskales, and M.~A. Soliman.
\newblock A survey of top-k query processing techniques in relational database
  systems.
\newblock {\em ACM Computing Surveys (CSUR)}, 40(4):11, 2008.

\bibitem{kiessling2002foundations}
W.~Kie{\ss}ling.
\newblock Foundations of preferences in database systems.
\newblock In {\em VLDB}, pages 311--322, 2002.

\bibitem{li2009unified}
J.~Li, B.~Saha, and A.~Deshpande.
\newblock A unified approach to ranking in probabilistic databases.
\newblock {\em VLDB}, 2009.

\bibitem{mcfee2010metric}
B.~McFee and G.~R. Lanckriet.
\newblock Metric learning to rank.
\newblock In {\em ICML}, pages 775--782, 2010.

\bibitem{motro1988vague}
A.~Motro.
\newblock Vague: A user interface to relational databases that permits vague
  queries.
\newblock {\em ACM TOIS}, 6(3):187--214, 1988.

\bibitem{nergiz2007hiding}
M.~E. Nergiz, M.~Atzori, and C.~Clifton.
\newblock Hiding the presence of individuals from shared databases.
\newblock In {\em SIGMOD}, 2007.

\bibitem{podelski2004complete}
A.~Podelski and A.~Rybalchenko.
\newblock A complete method for the synthesis of linear ranking functions.
\newblock In {\em Verification, Model Checking, and Abstract Interpretation}.
  Springer, 2004.

\bibitem{rui1997content}
Y.~Rui, T.~S. Huang, and S.~Mehrotra.
\newblock Content-based image retrieval with relevance feedback in mars.
\newblock In {\em ICIP}, 1997.

\bibitem{thirumuruganathan2013rank}
S.~Thirumuruganathan, N.~Zhang, and G.~Das.
\newblock Rank discovery from web databases.
\newblock {\em VLDB}, 2013.

\bibitem{von2003captcha}
L.~Von~Ahn, M.~Blum, N.~J. Hopper, and J.~Langford.
\newblock Captcha: Using hard ai problems for security.
\newblock In {\em EUROCRYPT}. Springer, 2003.

\bibitem{zheng2008general}
Z.~Zheng, H.~Zha, T.~Zhang, O.~Chapelle, K.~Chen, and G.~Sun.
\newblock A general boosting method and its application to learning ranking
  functions for web search.
\newblock In {\em NIPS}, 2008.

\end{thebibliography}

\end{document}